\newtheorem{theorem}{Theorem}[section]
\newtheorem{proposition}[theorem]{Proposition}
\newtheorem{lemma}[theorem]{Lemma}
\theoremstyle{definition}
\newtheorem{definition}[theorem]{Definition}
\newtheorem{assumption}[theorem]{Assumption}
\theoremstyle{remark}
\newtheorem{remark}{Remark}
\DeclareMathOperator*{\argmin}{argmin}
\def\0{\boldsymbol{0}}
\def\g{\boldsymbol{g}}
\def\u{\boldsymbol{u}}
\def\v{\boldsymbol{v}}
\def\x{\boldsymbol{x}}
\def\y{\boldsymbol{y}}
\def\z{\boldsymbol{z}}
\def\A{\boldsymbol{A}}
\def\B{\mathcal{B}}
\def\N{\mathbb{N}}
\def\R{\mathbb{R}}
\def\W{\boldsymbol{W}}
\def\Z{\mathbb{Z}}
\title{Scalable Distributed Optimization of Multi-Dimensional Functions Despite Byzantine Adversaries}
\date{} 
\author{ \href{https://orcid.org/0000-0002-0769-1399}{\includegraphics[scale=0.06]{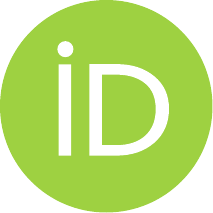}\hspace{1mm}Kananart~Kuwaranancharoen} \\
	Intel Labs\\
	Intel Corporation\\
	Hillsboro, OR 97124 \\
        \href{mailto:kananart.kuwaranancharoen@intel.com}{\texttt{kananart.kuwaranancharoen@intel.com}} \\
	\And
        \href{https://orcid.org/0000-0001-6349-5774}{\includegraphics[scale=0.06]{orcid.pdf}\hspace{1mm}Lei~Xin} \\
        Department of Computer Science and Engineering \\
	The Chinese University of Hong Kong\\
        Ma Liu Shui, Hong Kong \\
	\href{mailto:lxinshenqing@gmail.com}{\texttt{lxinshenqing@gmail.com}} \\
        \And
	\href{https://orcid.org/0000-0002-5390-2505}{\includegraphics[scale=0.06]{orcid.pdf}\hspace{1mm}Shreyas~Sundaram} \\
	School of Electrical and Computer Engineering\\
	Purdue University\\
	West Lafayette, IN 47907 \\
	\href{mailto:sundara2@purdue.edu}{\texttt{sundara2@purdue.edu}} \\
}
\begin{document}
\maketitle

\begin{abstract}
The problem of distributed optimization requires a group of networked agents to compute a parameter that minimizes the average of their local cost functions. While there are a variety of distributed optimization algorithms that can solve this problem, they are typically vulnerable to ``Byzantine'' agents that do not follow the algorithm.  Recent attempts to address this issue focus on single dimensional functions, or assume certain statistical properties of the functions at the agents. In this paper, we provide two resilient, scalable, distributed optimization algorithms for multi-dimensional functions. Our schemes involve two filters, (1) a distance-based filter and (2) a min-max filter, which each remove neighborhood states that are extreme (defined precisely in our algorithms) at each iteration. We show that these algorithms can mitigate the impact of up to $F$ (unknown) Byzantine agents in the neighborhood of each regular agent.  In particular, we show that if the network topology satisfies certain conditions, all of the regular agents' states are guaranteed to converge to a bounded region that contains the minimizer of the average of the regular agents' functions.
\end{abstract}

\keywords{Byzantine Attacks \and Convex Optimization \and Distributed Algorithms \and Fault Tolerant Systems \and Graph Theory \and Machine Learning \and Multi-Agent Systems \and Network Security}

\section{Introduction} 
\label{sec: introduction}

The design of distributed algorithms has received significant attention in the past few decades \cite{tsitsiklis1986distributed, xiao2006optimal}. In particular, for the problem of distributed optimization, a set of agents in a network are required to reach agreement on a parameter that minimizes the average of their local objective functions, using information received from their neighbors \cite{wang2011control,boyd2011distributed,eisen2017decentralized,xin2019frost}. A variety of approaches have been proposed to tackle different challenges of this problem, e.g., distributed optimization under constraints \cite{zhu2011distributed}, distributed optimization under time-varying graphs \cite{nedic2014distributed}, and distributed optimization for non-convex non-smooth functions \cite{zeng2018nonconvex}.  However, these existing works typically make the assumption that all agents are trustworthy and cooperative (i.e., they follow the prescribed protocol); indeed, such protocols fail if even a single agent behaves in a malicious or incorrect manner \cite{sundaram2018distributed}.

As security becomes a more important consideration in large scale systems, it is crucial to develop algorithms that are resilient to agents that do not follow the prescribed algorithm. A handful of recent papers have considered fault tolerant algorithms for the case where agent misbehavior follows specific patterns \cite{ravi2019case,wu2018data}. A more general (and serious) form of misbehavior is captured by the {\it Byzantine} adversary model from computer science, where misbehaving agents can send arbitrary (and conflicting) values to their neighbors at each iteration of the algorithm. Under such Byzantine behavior, it has been shown that it is impossible to guarantee computation of the true optimal point \cite{sundaram2018distributed,su2015byzantine}. Thus, researchers have begun formulating distributed optimization algorithms that allow the non-adversarial nodes to converge to a certain region surrounding the true minimizer, regardless of the adversaries' actions \cite{su2015byzantine, su2016fault, sundaram2018distributed, zhao2019resilient}.

It is worth noting that one major limitation of the above works \cite{su2015byzantine, su2016fault, sundaram2018distributed, zhao2019resilient} is that they all make the assumption of scalar-valued objective functions, and the extension of the above ideas to general multi-dimensional convex functions remains largely open. In fact, one major challenge for minimizing multi-dimensional functions is that the region containing the minimizer of the sum of functions is itself difficult to characterize. Specifically, in contrast to the case of scalar functions, where the global minimizer\footnote{We will use the terms ``global minimizer" and ``minimizer of the sum" interchangeably since we only consider convex functions.} always lies within the smallest interval containing all local minimizers, the region containing the minimizer of the sum of multi-dimensional functions may not necessarily be in the convex hull of the minimizers \cite{kuwaran2018location}. 

There exists a branch of literature focusing on secure distributed machine learning in a client-server architecture \cite{gupta2019byzantine, blanchard2017machine, pillutla2022robust}, where the server appropriately filters the information received from the clients. However, their extensions to a distributed (peer-to-peer) setting remains unclear. The papers \cite{yang2019byrdie, fang2022bridge, elkordy2022basil} consider a vector version of the resilient machine learning problem in a distributed (peer-to-peer) setting. These papers show that the states of regular nodes will converge to the statistical minimizer with high probability (as the amount of data of each node goes to infinity), but the analysis is restricted to i.i.d training data across the network. However, when each agent has a finite amount of data, these algorithms are still vulnerable to sophisticated attacks as shown in \cite{guo2020towards}. The work \cite{gupta2021byzantine} considers a Byzantine distributed optimization problem for multi-dimensional functions, but relies on redundancy among the local functions, and also requires the underlying communication network to be complete. The work presented in \cite{ravi2019detection} proposes a resilient algorithm under statistical characteristic assumptions but lacks guarantees. The recent work \cite{wu2023byzantine} studies resilient stochastic optimization problem under non-convex and smooth assumptions on local functions, which differs from our focus. The algorithm proposed in that work achieves convergence to a stationary point up to a constant error but does not ensure asymptotic consensus. Additionally, the recent work \cite{kuwaran2023geometric} offers convergence guarantees to a neighborhood of the optimal solution under deterministic settings, but it pertains to a distinct class of functions -- strongly convex and smooth functions.

To the best of our knowledge, our conference paper \cite{kuwaran2020byzantine} is the first one that provides a scalable algorithm with convergence guarantees in general networks under very general conditions on the multi-dimensional convex functions held by the agents in the presence of Byzantine faults. Different from existing works, the algorithm in \cite{kuwaran2020byzantine} does not rely on any statistical assumptions or redundancy of local functions. Technically, the analysis addresses the challenge of finding a region that contains the global minimizer for multiple-dimensional functions, and shows that regular states are guaranteed to converge to that region under the proposed algorithm. The Distance-MinMax Filtering Dynamics in \cite{kuwaran2020byzantine} requires each regular node to compute an auxiliary point using resilient asymptotic consensus techniques on their individual functions' minimizers in advance. After that, there are two filtering steps in the main algorithm that help regular nodes to discard extreme states. The first step is to remove extreme states (based on the distance to the auxiliary point), and the second step is to remove states that have extreme values in any of their components. On the other hand, the algorithm in \cite{kuwaran2020byzantine} suffers from the need to compute the auxiliary point prior to running the main algorithm, since the fixed auxiliary point is only achieved by the resilient consensus algorithm asymptotically.

In this paper, we eliminate this drawback. The algorithms and analysis we propose here expand upon the work in \cite{kuwaran2020byzantine} in the following significant ways. First, the algorithms in this paper bring the computation of the auxiliary point into the main algorithm, so that the local update of auxiliary point and local filtering strategies are performed simultaneously. This makes the analysis much more involved since we need to take into account the coupled dynamics of the estimated auxiliary point and the optimization variables. Second, the algorithms make better use of local information by including each regular node's own state as a metric. In practice, we observe that this performs better than the approach in \cite{kuwaran2020byzantine}, since each agent may discard fewer states and hence, there are more non-extreme states that can help the regular agents get close to the true global minimizer. Again, we characterize the convergence region that all regular states are guaranteed to converge to using the proposed algorithm. Third, we present an alternate algorithm in this paper which only makes use of the distance filter (as opposed to both the distance and min-max filter); we show that this algorithm significantly reduces the requirements on the network topology for our convergence guarantees, at the cost of losing guarantees on consensus of the regular nodes' states. Importantly, our work represents the first attempt to provide convergence guarantees in a geometric sense, characterizing a region where all states are ensured to converge to, without relying on any statistical assumptions or redundancy of local functions.

Our paper is organized as follows. Section~\ref{sec: notation and problem} introduces various mathematical preliminaries, and states the problem of resilient distributed optimization. We provide our proposed algorithms in Section~\ref{sec: algorithm}. We then state the assumptions and some important results related to properties of the proposed algorithms in Section~\ref{sec: assumption result}. In Section~\ref{sec: discussion}, we provide discussion on the results. Finally, we simulate our algorithms to numerically evaluate their performance in Section~\ref{sec: simulations}, and conclude in Section~\ref{sec: conclusion}.
\section{Mathematical Notation and Problem Formulation} 
\label{sec: notation and problem}

Let $\N$, $\Z$ and $\R$ denote the set of natural numbers (including zero), integers, and real numbers, respectively. We also denote the set of positive integers by $\Z_+$. The cardinality of a set is denoted by $|\cdot|$. The set of subgradients of a convex function $f$ at point $\x$ is called the subdifferential of $f$ at $\x$, and is denoted $\partial f( \x)$.

\subsection{Linear Algebra}
Vectors are taken to be column vectors, unless otherwise noted.  We use $x^{(\ell)}$ to represent the $\ell$-th component of a vector $\x$.
The Euclidean norm on $\mathbb R^d$ is denoted by $\| \cdot \|$. We denote by $\langle \u, \v \rangle$ the Euclidean inner product of $\u$ and $\v$, i.e., $\langle \u, \v \rangle = \u^T \v$ and by $\angle (\u, \v)$ the angle between vectors $\u$ and $\v$, i.e., $\angle (\u, \v) = \arccos \big( \frac{ \langle \u, \v \rangle }{\Vert \u \Vert \Vert \v \Vert}  \big)$. We use $\mathcal{S}_d^+$ to denote the set of positive definite matrices in $\R^{d \times d}$. The Euclidean ball in $d$-dimensional space with center at $\x_0$ and radius $r \in \R_{> 0}$ is denoted by $\B(\x_0, r) := \{ \x \in \R^d : \| \x - \x_0 \| \leq r \} $.

\subsection{Graph Theory}
We denote a network by a directed graph $\mathcal G = ( \mathcal{V}, \mathcal{E})$, which consists of the set of nodes  $\mathcal{V} = \{ v_{1}, v_{2}, \ldots, v_{N} \}$ and the set of edges $\mathcal{E} \subseteq \mathcal{V} \times \mathcal{V}$. If $(v_i, v_j) \in \mathcal{E}$, then node $v_j$ can receive information from node $v_i$. 
The in-neighbor and out-neighbor sets are denoted by $\mathcal{N}^{\text{in}}_{i} = \{ v_j \in \mathcal V: \; (v_{j}, v_{i}) \in \mathcal{E} \} $ and $\mathcal{N}^{\text{out}}_{i} = \{ v_j \in \mathcal V: \; (v_{i}, v_{j}) \in \mathcal{E} \} $, respectively. A path from node $v_{i}\in \mathcal V$ to node $v_{j}\in \mathcal V$  is a sequence of nodes $v_{k_1},v_{k_2}, \dots, v_{k_l}$ such that $v_{k_1}=v_i$, $v_{k_l}=v_j$ and $(v_{k_r},v_{k_{r+1}}) \in \mathcal{E}$ for $1\leq r \leq l-1$. Throughout the paper, the terms nodes and agents will be used interchangeably. Given a set of vectors $\{\x_1, \x_2, \ldots, \x_N\}$, where each $\x_i \in \R^d$ , we define for all $\mathcal{S} \subseteq \mathcal{V}$,
\begin{equation*}
    \{ \x_i \}_{\mathcal{S}} := \{ \x_i \in \R^d : v_i \in \mathcal{S} \}.
\end{equation*}

\begin{definition}
A graph $\mathcal G = ( \mathcal{V}, \mathcal{E})$ is
said to be rooted at node $v_i \in \mathcal{V}$ if for all nodes $v_j \in \mathcal{V} \setminus \{v_i\}$, there is a path from $v_i$ to $v_j$. A graph is said to be rooted if it is rooted at some node $v_{i}\in \mathcal V$.
\end{definition}

We will rely on the following definitions from \cite{leblanc2013resilient}. 

\begin{definition}[$r$-reachable set]
For a given graph $\mathcal{G}$ and a positive integer $r \in \Z_+$, a subset of nodes $\mathcal S  \subseteq \mathcal V$ is said to be $r$-reachable if there exists a node $v_i \in \mathcal S$  such that $|\mathcal N^{\textup{in}}_{i} \setminus \mathcal S| \geq r$.
\end{definition}

\begin{definition}[$r$-robust graph] 
For $r \in \mathbb Z_+$, a graph $\mathcal G$ is said to be $r$-robust if for all pairs of disjoint nonempty subsets $\mathcal{S}_1, \mathcal{S}_2 \subset \mathcal V$, at least one of $\mathcal{S}_1$ or $\mathcal{S}_2$ is $r$-reachable.
\end{definition}

The above definitions capture the idea that sets of nodes should contain individual nodes that have a sufficient number of neighbors outside that set. This will be important for the {\it local} decisions made by each node in the network under our algorithm, and will allow information from the rest of the network to penetrate into different sets of nodes.

\subsection{Adversarial Behavior}
\begin{definition}
A node $v_i \in \mathcal V$ is said to be Byzantine if during each iteration of the prescribed algorithm, it is capable of sending arbitrary (and perhaps conflicting) values to different neighbors. It is also allowed to update its local information arbitrarily at each iteration of any prescribed algorithm.
\end{definition}

The set of Byzantine nodes is denoted by $\mathcal A \subset \mathcal V$. The set of regular nodes is denoted by  $\mathcal{R} = \mathcal{V}\setminus\mathcal{A}$.

The identities of the Byzantine agents are unknown to regular agents in advance. Furthermore, we allow the Byzantine agents to know the entire topology of the network, functions equipped by the regular nodes, and the deployed algorithm. In addition, Byzantine agents are allowed to coordinate with other Byzantine agents and access the current and previous information contained by the nodes in the network (e.g. current and previous states of all nodes). Such extreme behavior is typical in the study of the adversarial models \cite{sundaram2018distributed, su2015byzantine,yang2019byrdie}. In exchange for allowing such extreme behavior, we will consider a limitation on the number of such adversaries in the neighborhood of each regular node, as follows. 

\begin{definition}[$F$-local model]
For $F \in \mathbb Z_+$, we say that the set of adversaries $\mathcal{A}$ is an $F$-local set if $|\mathcal{N}^{\textup{in}}_{i} \cap \mathcal{A} | \leq F$, for all $v_{i} \in \mathcal R$.
\end{definition}

Thus, the $F$-local model captures the idea that each regular node has at most $F$ Byzantine in-neighbors.

\subsection{Problem Formulation}

Consider a group of $N$ agents $\mathcal{V}$ interconnected over a graph $\mathcal{G} = (\mathcal{V}, \mathcal{E})$. Each agent $v_i \in \mathcal{V}$ has a local convex cost function $f_i: \R^d \rightarrow \R$.  The objective is to collaboratively solve the minimization problem
\begin{align} 
    \mathop{\min}_{\x \in \R^d} \frac{1}{N}\sum _{v_i \in \mathcal{V}} f_{i}(\x), \label{prob: min_all_f}
\end{align}
where $\x \in \R^d$ is the common decision variable. 
A common approach to solve such problems is for each agent to maintain a local estimate of the solution to the above problem, which it iteratively updates based on communications with its immediate neighbors.
However, since Byzantine nodes are allowed to send arbitrary values to their neighbors at each iteration of any algorithm, it is not possible to solve  Problem \eqref{prob: min_all_f} under such misbehavior (since one is not guaranteed to infer any information about the true functions of the Byzantine agents) \cite{su2015byzantine, sundaram2018distributed}.  Thus, the optimization problem is recast into the following form:
\begin{align} 
    \mathop{\min}_{\x \in \R^d} \frac{1}{|\mathcal R|}\sum _{v_i \in \mathcal R } f_{i}(\x),  \label{prob: regular node} 
\end{align}
i.e., we restrict our attention only to the functions held by regular nodes. 

\begin{remark}
In the resilient distributed optimization problem, the agents are required to compute a value that (approximately) minimizes the sum of functions held by each (regular) agent.  Compared to the resilient consensus problem, this necessitates more information than simply the initial vectors held by each agent (even if those vectors are initialized to be the local minimizers of the agents' functions).  Indeed, the need to combine estimates of the multi-dimensional minimizer from neighbors, while incorporating gradient dynamics, all in a resilient fashion is what makes the resilient distributed optimization problem more difficult than the standard consensus problem.
\end{remark}

\begin{remark}  \label{rem: fund limit}
The additional challenge in solving the above problem lies in the fact that no regular agent is aware of the identities or actions of the Byzantine agents. Furthermore, in the worst-case scenario, it is not feasible to achieve an exact solution to Problem~\ref{prob: regular node}, as the Byzantine agents can modify the functions while still adhering to the algorithm, making it impossible to differentiate them \cite{su2015byzantine, sundaram2018distributed}.
\end{remark}

In the next section, we propose two scalable algorithms that allow the regular nodes to approximately solve the above problem, regardless of the identities or actions of the Byzantine agents (as proven later in the paper). 
\section{Resilient Distributed Optimization Algorithms} 
\label{sec: algorithm}

\subsection{Proposed Algorithms}
The algorithms that we propose are stated as Algorithm~\ref{alg: DMM_filter} and Algorithm~\ref{alg: D_filter}. We start with Algorithm~1. At each time-step $k$, each regular node\footnote{Byzantine nodes do not necessarily need to follow the above algorithm, and can update their states however they wish.} $v_i \in \mathcal{R}$ maintains and updates a vector $\x_i[k] \in \R^d$, which is its estimate of the solution to Problem~\eqref{prob: regular node}, and a vector $\y_i[k] \in \R^d$, which is its estimate of an auxiliary point that provides a general sense of direction for each agent to follow. 

\begin{remark}
The purpose of the estimates $\x_i [k]$ is to be an approximation to the minimizer of the sum of the functions.  To update this estimate, the agents have to decide which of the estimates provided by their neighbors to retain at each iteration of the algorithm (since up to $F$ of those neighboring estimates may be adversarially chosen by Byzantine agents).  To help each regular agent decide which estimates to keep, the auxiliary points $\y_i [k]$ are used to perform the distance-based filtering step (\textbf{Line~7}). In fact, each auxiliary point provides a general sense of direction for the agents' estimates, and thus helps them filter out adversarial estimates that attempt to draw them away from the true minimizer.
\end{remark}

We now explain each step used in  Algorithm \ref{alg: DMM_filter} in detail.\footnote{In the algorithm, $\mathcal{X}_i[k]$, $\mathcal{X}^{\text{dist}}_i[k]$, $\mathcal{X}^{\text{mm}}_i[k]$, $\mathcal{Y}_i[k]$ and $\mathcal{Y}^{\text{mm}}_i[k]$ are multisets.}

\begin{algorithm}
\caption{Simultaneous Distance-MinMax Filtering Dynamics } \label{alg: DMM_filter}
\textbf{Input} Network $\mathcal{G}$, 
functions $\{f_i\}_{i = 1}^{N}$, parameter $F$
\begin{algorithmic}[1]
\State Each $v_i \in \mathcal{R}$ sets $\hat{\x}_i^* \gets$ \texttt{optimize}($f_i$)   
\State Each $v_i \in \mathcal{R}$ sets $\x_i[0] \gets \hat{\x}_i^*$ and $\y_i[0] \gets \hat{\x}_i^*$
\For {$k = 0, 1, 2, 3, \ldots$}  
\For {$v_i \in \mathcal R$}  \Comment{Implement in parallel}

\quad \textbf{Step I:} Broadcast and Receive
\State \texttt{broadcast}($\mathcal{N}_i^\text{out}$, $\x_i[k]$, $\y_i[k]$)
\State $\mathcal{X}_i[k], \; \mathcal{Y}_i[k] \gets$ \texttt{receive}($\mathcal{N}_i^\text{in}$)

\quad \textbf{Step II:} Resilient Consensus Step
\State $\mathcal{X}_i^{\text{dist}}[k] \gets$ \texttt{dist\_filt}($F, \; \y_i[k], \; \mathcal{X}_i[k]$)
\State $\mathcal{X}_i^{\text{mm}}[k] \gets$ \texttt{x\_minmax\_filt}($F ,\; \mathcal{X}_i^{\text{dist}}[k]$)
\State $\z_i[k] \gets$ \texttt{x\_weighted\_average}($ \mathcal{X}_i^{\text{mm}}[k]$)

\quad \textbf{Step III:} Gradient Update
\State $\x_i[k+1] \gets$ \texttt{gradient}($f_i, \z_i[k]$)

\quad \textbf{Step IV:} Update the Estimated Auxiliary Point
\State $\mathcal{Y}_i^{\text{mm}}[k] \gets$ \texttt{y\_minmax\_filt}($F, \; \mathcal{Y}_i[k]$)
\State $\y_i[k+1] \gets$ \texttt{y\_weighted\_average}($ \mathcal{Y}_i^{\text{mm}}[k]$)
\EndFor
\EndFor
\end{algorithmic}
\end{algorithm}

\begin{itemize}
    \item \textbf{Line 1:} $\hat{\x}_i^* \gets$ \texttt{optimize} ($f_i$)  \\
    Each node $v_{i} \in \mathcal R$ uses any appropriate optimization algorithm to get an approximate minimizer $\hat{\x}_i^* \in \R^d$ of its local function $f_i$. We assume that there exists $\epsilon^* \in \R_{\geq 0}$ such that the algorithm achieves $\| \hat{\x}_i^* - \x_i^* \| \leq \epsilon^*$
    for all $v_i \in \mathcal{R}$ where $\x_i^* \in \R^d$ is a true minimizer of the function $f_i$; we assume formally that such a true (but not necessary unique) minimizer exists for each $v_i \in \mathcal{R}$ in the next section.
    
    \item \textbf{Line 2:} $\x_i[0] \gets \hat{\x}_i^*$ and $\y_i[0] \gets \hat{\x}_i^*$ \\
    Each node $v_{i} \in \mathcal R$ initializes its own estimated solution to Problem \eqref{prob: regular node}  ($\x_i[0] \in \R^d$) and estimated auxiliary point ($\y_i[0] \in \R^d$) to be $\hat{\x}_i^*$.
    
    \item \textbf{Line 5:} \texttt{broadcast} ($\mathcal{N}_i^{\text{out}}$, $\x_i[k]$, $\y_i[k]$) \\
    Node $v_{i} \in \mathcal{R}$ broadcasts its current state $\x_{i}[k]$ and estimated auxiliary point $\y_{i}[k]$ to its out-neighbors $\mathcal{N}_i^{\text{out}}$.
    
    \item \textbf{Line 6:} $\mathcal{X}_i[k], \; \mathcal{Y}_i[k] \gets$ \texttt{receive}($\mathcal{N}_i^{\text{in}}$) \\
    Node $v_{i} \in \mathcal{R}$ receives the current states $\x_j[k]$ and $\y_j[k]$ from its in-neighbors $\mathcal{N}_i^{\text{in}}$. 
    So, at time step $k$, node $v_i$ possesses the sets of states\footnote{In case a regular node $v_i$ has a Byzantine neighbor $v_j$, we abuse notation and take the value $\x_j [k]$ to be the value received from node $v_j$ (i.e., it does not have to represent the true state of node $v_j$).}
    \begin{equation*}
         \mathcal{X}_i[k] := \big\{ \x_j[k] \in \R^d: v_j \in \mathcal{N}_i^{\text{in}} \cup \{ v_i \} \big\} 
         \quad \text{and} \quad
         \mathcal{Y}_i[k] := \big\{ \y_j[k] \in \R^d: v_j \in \mathcal{N}_i^{\text{in}} \cup \{ v_i \} \big\}.
    \end{equation*}
    The sets $\mathcal{X}_i[k]$ and $\mathcal{Y}_i[k]$ have an indirect relationship through the distance-based filter (\textbf{Line~7}) as only $\y_i [k] \in \mathcal{Y}_i[k]$ is used as the reference to remove states in $\mathcal{X}_i [k]$.
     
    \item \textbf{Line 7:} $\mathcal{X}_i^{\text{dist}}[k] \gets$ \texttt{dist\_filt}($F, \; \y_i[k], \; \mathcal{X}_i[k]$) \\
    Intuitively, regular node $v_i$ ignores the states that are far away from its own auxiliary state $\y_i[k]$ in $L^2$ sense. Formally, node $v_i \in \mathcal R$ computes the distance between each vector in $\mathcal{X}_i[k]$ and its own estimated auxiliary point $\y_i[k]$:
    \begin{align}
        \mathcal{D}_{ij}[k] := \| \x_j[k] - \y_i[k] \|  \;\; \text{for} \;\; \x_j[k] \in \mathcal{X}_i[k].    \label{def: D-metric}
    \end{align}
    Then, node $v_i \in \mathcal{R}$ sorts the values in the set $\{ \mathcal{D}_{ij} [k] : v_j \in \mathcal{N}_i^{\text{in}} \cup \{ v_i \} \}$ and removes the $F$ largest values that are larger than its own value $\mathcal{D}_{ii} [k]$. If there are fewer than $F$ values higher than its own value, $v_i$ removes all of those values. Ties in values are broken arbitrarily. The corresponding states of the remaining values are stored in $\mathcal{X}_i^{\text{dist}}[k]$. 
    In other words, regular node $v_i$ removes up to $F$ of its neighbors' vectors that are furthest away from the auxiliary point $\y_i[k]$.
    
    \item \textbf{Line 8:} $\mathcal{X}_i^{\text{mm}}[k] \gets$ \texttt{x\_minmax\_filt}($F ,\; \mathcal{X}_i^{\text{dist}}[k]$) \\
    Intuitively, regular node $v_i$ ignores the states that contains extreme values in any of their components in the ordering sense. Formally, for each time-step $k \in \N$ and dimension $\ell \in \{ 1, 2, \ldots, d \}$, 
    define the set $\mathcal{V}_i^{\text{remove}} (\ell) [k] \subseteq \mathcal{N}_i^{\text{in}}$, where a node $v_j$ is in $\mathcal{V}_i^{\text{remove}} (\ell) [k]$ if and only if
    \begin{itemize}
        \item $x_j^{(\ell)} [k]$ is within the $F$-largest values of $\big\{ x_r^{(\ell)} [k] \in \R: \x_r [k] \in \mathcal{X}^{\text{dist}}_i [k] \big\}$ and $x_j^{(\ell)} [k] > x_i^{(\ell)} [k]$, or
        \item $x_j^{(\ell)} [k]$ is within the $F$-smallest values of $\big\{ x_r^{(\ell)} [k] \in \R: \x_r [k] \in \mathcal{X}^{\text{dist}}_i [k] \big\}$ and $x_j^{(\ell)} [k] < x_i^{(\ell)} [k]$.
    \end{itemize}
    Ties in values are broken arbitrarily.
    Node $v_i$ then removes the state of all nodes in $\bigcup_{ \ell \in \{ 1, 2, \ldots, d \} }  \mathcal{V}_i^{\text{remove}} (\ell) [k]$ and the remaining states are stored in $\mathcal{X}_i^{\text{mm}}[k]$:
    \begin{equation}
        \mathcal{X}_i^{\text{mm}}[k] = \Big\{ \x_j [k] \in \R^d : 
        v_j \in \mathcal{V}_i^{\text{dist}} [k] \setminus \bigcup_{ \ell \in \{ 1, \ldots, d \} }  
        \mathcal{V}_i^{\text{remove}} (\ell) [k] \Big\}, 
        \label{eqn: X^mm}
    \end{equation}
    where $\mathcal{V}_i^{\text{dist}} [k] = \big\{ v_j \in \mathcal{R}: \x_j [k] \in \mathcal{X}_i^{\text{dist}} [k] \big \}$.  
    
    \item \textbf{Line 9:} $\z_i[k] \gets$ \texttt{x\_weighted\_average}($ \mathcal{X}_i^{\text{mm}} [k]$) \\
    Each node $v_i \in \mathcal{R}$ computes
    \begin{equation}
        \z_i [k] = \sum_{\x_j [k] \in \mathcal{X}_i^{\text{mm}} [k] } w_{x,ij} [k] \; \x_j [k],  \label{eqn: weight average}
    \end{equation}
    where $w_{x,ij} [k] > 0$ for all $\x_j [k] \in \mathcal{X}_i^{\text{mm}} [k]$ and $\sum_{\x_j [k] \in \mathcal{X}_i^{\text{mm}} [k]} w_{x,ij} [k] = 1$. 
    
    \item \textbf{Line 10:} $\x_i[k+1] \gets$ \texttt{gradient} ($f_i, \z_i[k]$) \\
    Node $v_i \in \mathcal{R}$ computes the gradient update as follows:
    \begin{align}
        \x_i[k+1] = \z_i[k] - \eta[k] \; \g_i[k],  \label{eqn: grad dynamic}
    \end{align}
    where $\g_i[k] \in \partial f_i( \z_i[k] )$ and 
    $\eta[k]$ is the step-size at time $k$. The conditions corresponding to the step-size are given in the next section.
    
    \item \textbf{Line 11:} $\mathcal{Y}_i^{\text{mm}}[k] \gets$ \texttt{y\_minmax\_filt}($F, \; \mathcal{Y}_i[k]$) \\
    For each dimension $\ell \in \{ 1,2, \ldots, d \}$, node $v_i \in \mathcal{R}$ removes the $F$ highest and $F$ lowest values of its neighbors' auxiliary points along that dimension.
    More specifically, for each dimension $\ell \in \{ 1, 2, \ldots, d \}$, node $v_i$ sorts the values in the set of scalars 
    $\{ y_j^{(\ell)} [k] : \y_j [k] \in \mathcal{Y}_i [k] \}$ and then removes the $F$ largest and $F$ smallest values that are larger and smaller than its own value, respectively. If there are fewer than $F$ values higher (resp. lower) than its own value, $v_i$ removes all of those values. Ties in values are broken arbitrarily. The remaining values are stored in $\mathcal{Y}_i^{\text{mm}} [k] (\ell)$ and the set $\mathcal{Y}_i^{\text{mm}} [k]$ is the collection of $\mathcal{Y}_i^{\text{mm}} [k] (\ell)$, i.e., $\mathcal{Y}_i^{\text{mm}} [k] = \big\{ \mathcal{Y}_i^{\text{mm}} [k] (\ell): \ell \in \{1, 2, \ldots, d \} \big\}$.
    
    \item \textbf{Line 12:} $\y_i[k+1] \gets$ \texttt{y\_weighted\_average}($ \mathcal{Y}_i^{\text{mm}}[k]$) \\
    For each dimension $\ell \in \{ 1, 2, \ldots, d \}$, each node $v_i \in \mathcal{R}$ computes
    \begin{equation}
        y^{(\ell)}_i [k+1] = \sum_{y^{(\ell)}_j [k] \in \mathcal{Y}_i^{\text{mm}} [k] (\ell) } w^{(\ell)}_{y,ij} [k] \; y^{(\ell)}_j[k],  \label{eqn: weight average y}
    \end{equation}
     where $w^{(\ell)}_{y,ij} [k] > 0$ for all $y^{(\ell)}_j [k] \in \mathcal{Y}_i^{\text{mm}} [k] (\ell)$ and $\sum_{y^{(\ell)}_j [k] \in \mathcal{Y}_i^{\text{mm}} [k] (\ell) } w^{(\ell)}_{y,ij} [k] = 1$. 
\end{itemize}

Note that the filtering process \texttt{x\_minmax\_filt} (\textbf{Line 8}) and the filtering process \texttt{y\_minmax\_filt} (\textbf{Line 11}) are different. In \texttt{x\_minmax\_filt}, each node removes the \textbf{whole} state vector for a neighbor if it contains an extreme value in any component, while in \texttt{y\_minmax\_filt}, each node only removes the extreme components in each vector. In addition, \texttt{x\_weighted\_average} (\textbf{Line 9}) and \texttt{y\_weighted\_average\_2} (\textbf{Line 12}) are also different in that 
\texttt{x\_weighted\_average} designates agent $v_i$ at time-step $k$ to utilize the same set of weights $\{ w_{x, ij} \in \R: \x_j[k] \in \mathcal{X}_i^{\text{mm}} [k] \}$ for all components while
\texttt{y\_weighted\_average} allows agent $v_i$ at time-step $k$ to use a different set of weights $\{ w_{y, ij}^{(\ell)} \in \R: y^{(\ell)}_j [k] \in \mathcal{Y}_i^{\text{mm}} [k] (\ell) \}$ for each coordinate $\ell$ (since the number of remaining values in each component $| \mathcal{Y}_i^{\text{mm}} [k] (\ell) |$ is not necessarily the same). These differences will become clear when considering the example provided in the next subsection.

We consider a variant of Algorithm~\ref{alg: DMM_filter} defined as follows.

\begin{algorithm}
\caption{Simultaneous Distance Filtering Dynamics} \label{alg: D_filter}
Algorithm~\ref{alg: D_filter} is the same as Algorithm~\ref{alg: DMM_filter}
except that 
\begin{itemize}
    \item \textbf{Line~8} is removed, and
    \item $\mathcal{X}_i^{\textnormal{mm}} [k]$ in \textbf{Line~9} is replaced by $\mathcal{X}_i^{\textnormal{dist}} [k]$.
\end{itemize}
\end{algorithm}

Although Algorithms~\ref{alg: DMM_filter} and \ref{alg: D_filter} are very similar (differing only in the use of an additional filter in Algorithm~\ref{alg: DMM_filter}), our subsequent analysis will reveal the relative costs and benefits of each algorithm.  We emphasize that both algorithms involve only simple operations in each iteration, and that the regular agents do not need to know the network topology, or functions possessed by another agents.
Furthermore, the regular agents do not need to know the identities of adversaries; they only need to know the upper bound for the number of local adversaries. However, we assume that all regular agents use the same step-size $\eta [k]$ (\textbf{Line 10}, equation \eqref{eqn: grad dynamic}).

\begin{remark}
While the BRIDGE framework, introduced in \cite{fang2022bridge}, encompasses several Byzantine-resilient distributed optimization algorithms, including those presented in \cite{su2016fault, sundaram2018distributed, su2020byzantine}, our proposed algorithms, namely Algorithm~\ref{alg: DMM_filter} and Algorithm~\ref{alg: D_filter}, introduce a novel concept of auxiliary states. Specifically, each regular agent $v_i$ in our algorithms maintains an auxiliary state $\y_i [k]$, updated using a consensus algorithm, placing them within the broader framework of R{\scriptsize{ED}}G{\scriptsize{RAF}} \cite{kuwaran2023geometric}.

While Algorithm~\ref{alg: DMM_filter} from our work shares a similarity with BRIDGE-T \cite{fang2022bridge} by utilizing the coordinate-wise trimmed mean, there are distinctive differences as follows. 
\begin{itemize}
    \item Firstly, our algorithm employs a distance-based filter in addition to the trimmed mean filter, allowing for an asymptotic convergence guarantee under milder assumptions (as provided in Section~\ref{subsec: assumptions}). In contrast, the convergence analysis of BRIDGE-T relies on the more restrictive assumption of i.i.d. training data.
    \item Secondly, the trimmed mean filter in BRIDGE-T eliminates both the smallest and largest $F$ values, whereas our filter discards a subset of these values, similar to the implementation in \cite{sundaram2018distributed}. This variant in our approach results in faster convergence in practice due to the resulting denser network connectivity after the filtering steps which facilitates quicker information flow \cite{sinclair1992improved}.
    \item Lastly, while BRIDGE-T uses a simple average to combine the remaining states, our algorithm employs a weighted average. These weights are chosen to satisfy Assumption~\ref{asm: weight matrices}, ensuring that the weights are lower bounded by a positive constant if the corresponding agents remain after the trimmed mean filter. This provides a more versatile and general scheme.
\end{itemize}
\end{remark}

\subsection{Example of Algorithm~\ref{alg: DMM_filter}} 

Before we prove the convergence properties of the algorithms, we first demonstrate Algorithm~\ref{alg: DMM_filter}, which is more complicated due to the min-max filtering step (\textbf{Line~8}), step by step using an example.

Suppose there are $8$ agents forming the complete graph (for the purpose of illustration). Let node $v_i$ have the local objective function $f_i: \R^2 \to \R$ defined as $f_i (\x) = ( x^{(1)} + i )^2 + ( x^{(2)} - i )^2$ for all $i \in \{1, 2, \ldots, 8 \}$. Let the set of adversarial nodes be $\mathcal{A} = \{ v_4, v_8 \}$ and thus, we have $\mathcal{R} = \{ v_1, v_2, v_3, v_5, v_6, v_7 \}$. 
Note that only the regular nodes execute the algorithm (and they do not know which agents are adversarial). Let $F = 2$ and at some time-step $\hat{k} \in \N$, each regular node has the following state and the estimated auxiliary point:\footnote{The number of agents in this demonstration is not enough to satisfy the robustness condition (Assumption~\ref{asm: robust}) presented in the next section. However, for our purpose here, it is enough to consider a small number of agents to gain an understanding for each step of the algorithm.}
\begin{align*}
    \x_1 [\hat{k}] &= \begin{bmatrix} 4 & 2 \end{bmatrix}^T, \quad
    \y_1 [\hat{k}] = \begin{bmatrix} 0 & 0 \end{bmatrix}^T, \\
    \x_2 [\hat{k}] &= \begin{bmatrix} 4 & 1 \end{bmatrix}^T, \quad
    \y_2 [\hat{k}] = \begin{bmatrix} -1 & -2 \end{bmatrix}^T, \\
    \x_3 [\hat{k}] &= \begin{bmatrix} 3 & 3 \end{bmatrix}^T, \quad
    \y_3 [\hat{k}] = \begin{bmatrix} -2 & 1 \end{bmatrix}^T, \\
    \x_5 [\hat{k}] &= \begin{bmatrix} 2 & 1 \end{bmatrix}^T, \quad
    \y_5 [\hat{k}] = \begin{bmatrix} 0 & 2 \end{bmatrix}^T, \\
    \x_6 [\hat{k}] &= \begin{bmatrix} 1 & 4 \end{bmatrix}^T, \quad
    \y_6 [\hat{k}] = \begin{bmatrix} 1 & 3 \end{bmatrix}^T, \\
    \x_7 [\hat{k}] &= \begin{bmatrix} 0 & 0 \end{bmatrix}^T, \quad
    \y_7 [\hat{k}] = \begin{bmatrix} 1 & 3 \end{bmatrix}^T.
\end{align*}
Let $\x_{a \to b} [k]$ (resp. $\y_{a \to b} [k]$) be the state (resp. estimated auxiliary point) that is sent from the adversarial node $v_a \in \mathcal{A}$ to the regular node $v_b \in \mathcal{R}$ at time-step $k$.
Suppose that in time-step $\hat{k}$, each adversarial agent sends the same states and the same estimated auxiliary points to its neighbors (although this is not necessary) as follows: 
\begin{align*}
    \x_{4 \to i} [\hat{k}] &= \begin{bmatrix} 3 & 2 \end{bmatrix}^T,
    \quad
    \y_{4 \to i} [\hat{k}] = \begin{bmatrix} -1 & 1 \end{bmatrix}^T, \\
    \x_{8 \to i} [\hat{k}] &= \begin{bmatrix} 0 & 5 \end{bmatrix}^T,
    \quad
    \y_{8 \to i} [\hat{k}] = \begin{bmatrix} 2 & 2 \end{bmatrix}^T
\end{align*}
for all $i \in \{ 1, 2, 3, 5, 6, 7 \}$. We will demonstrate the calculation of $\x_1 [\hat{k} + 1]$ and $\y_1 [\hat{k} + 1]$, computed by regular node $v_1$.

Since the network is the complete graph, the set of in-neighbors and out-neighbors of node $v_1$ is $\mathcal{N}_1^{\text{in}} = \mathcal{N}_1^{\text{out}} = \mathcal{V} \setminus \{v_1 \}$ and $\mathcal{X}_i [\hat{k}]$ (resp. $\mathcal{Y}_i [\hat{k}]$) includes all the states (resp. estimated auxiliary points). 
Then, node $v_1$ performs the distance filtering step (\textbf{Line 7}) as follows. First, it calculates the squared distances $\mathcal{D}^2_{1j} [\hat{k}]$ (since squaring does not alter the order) for all $\x_j [\hat{k}] \in \mathcal{X}_i [\hat{k}]$ as in \eqref{def: D-metric}. Node $v_1$ has 
\begin{align*}
    &\mathcal{D}^2_{11} [\hat{k}] = 20, \; 
    \mathcal{D}^2_{12} [\hat{k}] = 17, \;
    \mathcal{D}^2_{13} [\hat{k}] = 18, \;
    \mathcal{D}^2_{14} [\hat{k}] = 13, \\
    &\mathcal{D}^2_{15} [\hat{k}] = 5, \; 
    \mathcal{D}^2_{16} [\hat{k}] = 17, \;
    \mathcal{D}^2_{17} [\hat{k}] = 0, \;
    \mathcal{D}^2_{18} [\hat{k}] = 25.
\end{align*}
Since $\mathcal{D}^2_{11} [\hat{k}]$ is the second largest, node $v_1$ discards only node $v_8$'s state (which is the furthest away from $v_1$'s auxiliary point) and $\mathcal{X}_1^{\text{dist}}$ contains all states except $\x_8 [\hat{k}] = \x_{8 \to 1} [\hat{k}]$.

Then node $v_1$ performs the min-max filtering process (\textbf{Line 8}) as follows. First, consider the first component of the states in $\mathcal{X}_1^{\text{dist}}$. The states of nodes $v_1$ and $v_2$ contain the highest value in the first component (which is $4$). Since the tie can be broken arbitrarily, we choose $x^{(1)}_1 [\hat{k}]$ to come first followed by $x^{(1)}_2 [\hat{k}]$ in the ordering, so none of these values are discarded. On the other hand, the state of node $v_7$ contains the lowest value in its first component, while node $v_6$'s state contains the second lowest value in that component (since node $v_8$ has already been discarded by the distance filtering process). Node $v_1$ thus sets $\mathcal{V}^{\text{remove}}_1 (1) [\hat{k}] = \{ v_6, v_7 \}$.
Next, consider the second component in which the states of $v_6$ and $v_3$ contain the highest and second highest values, respectively, and the states of $v_7$ and $v_5$ contain the lowest and second lowest values, respectively. Thus, node $v_1$ sets $\mathcal{V}^{\text{remove}}_1 (2) [\hat{k}] = \{ v_3, v_5, v_6, v_7 \}$. Since node $v_1$ removes the entire state from all the nodes in both $\mathcal{V}^{\text{remove}}_1 (1) [\hat{k}]$ and $\mathcal{V}^{\text{remove}}_1 (2) [\hat{k}]$, according to equation \eqref{eqn: X^mm}, we have $\mathcal{X}^{\text{mm}}_1 [\hat{k}] = \big\{ \x_1 [\hat{k}], \x_2 [\hat{k}], \x_4 [\hat{k}] \big\} = \big\{ [4 \;\; 2]^T, [4 \;\; 1]^T, [3 \;\; 2]^T \big\}$.

Next, node $v_1$ performs the weighted average step (\textbf{Line 9}) as follows, Suppose node $v_1$ assigns the weights $w_{x, 11} [\hat{k}] = 0.5$, $w_{x, 12} [\hat{k}] = 0.25$ and $w_{x, 14} [\hat{k}] = 0.25$. Node $v_1$ calculates the weighted average according to \eqref{eqn: weight average} yielding $z_1^{(1)} [\hat{k}] = 3.75$ and $z_1^{(2)} [\hat{k}] = 1.75$. In the gradient step (\textbf{Line 10}), suppose $\eta [\hat{k}] = 0.1$. Node $v_1$ calculates the gradient of its local function $f_1$ at $\z_1 [\hat{k}]$ which yields $\g_1 [\hat{k}] = [9.5 \;\; 1.5]^T$ and then calculates the state $\x_1 [\hat{k} + 1]$ as described in \eqref{eqn: grad dynamic} which yields $\x_1 [\hat{k} + 1] = [2.8 \;\; 1.6]^T$.

Next, we consider the estimated auxiliary point update of node $v_1$. In fact, we can perform the update (\textbf{Line 11} and \textbf{Line 12}) for each component separately. First, consider the first component in which $v_8$ and $v_7$ contain the largest and second largest values, respectively, and $v_3$ and $v_2$ contain the smallest and second smallest values, respectively. Node $v_1$ removes these values and thus, $\mathcal{Y}^{\text{mm}}_1 [\hat{k}] (1) = \{ y^{(1)}_1 [\hat{k}], y^{(1)}_4 [\hat{k}], y^{(1)}_5 [\hat{k}], y^{(1)}_6 [\hat{k}] \} = \{ 0, -1, 0, 1 \}$. Suppose node $v_1$ assigns the weights $w^{(1)}_{y,11} [\hat{k}] = w^{(1)}_{y,14} [\hat{k}] = w^{(1)}_{y,15} [\hat{k}] = w^{(1)}_{y,16} [\hat{k}] = 0.25$. Then, the weighted average of the first component according to \eqref{eqn: weight average y} becomes $y^{(1)}_1 [\hat{k} + 1] = 0$. 
Finally, for the second component, $v_6$ and $v_7$ contain the largest values, and $v_2$ and $v_1$ contain the smallest and second smallest values, respectively. Node $v_1$ removes the value obtained from $v_2$, $v_6$ and $v_7$ and thus, the set $\mathcal{Y}^{\text{mm}}_1 [\hat{k}] (2) = \{ y^{(2)}_1 [\hat{k}], y^{(2)}_3 [\hat{k}], y^{(2)}_4 [\hat{k}], y^{(2)}_5 [\hat{k}], y^{(2)}_8 [\hat{k}] \} = \{ 0, 1, 1, 2, 2 \}$. Suppose node $v_1$ assigns the weights to each value in $\mathcal{Y}^{\text{mm}}_1 [\hat{k}] (2)$ equally. The weighted average of the second component becomes $y^{(2)}_1 [\hat{k} + 1] = 1.2$. Thus, we have $\y_1 [\hat{k} + 1] = [0 \;\; 1.2]^T$.
\section{Assumptions and Main Results} 
\label{sec: assumption result}

Having defined the steps in Algorithms~\ref{alg: DMM_filter} and \ref{alg: D_filter}, we now turn to proving their resilience and convergence properties.

\subsection{Assumptions}
\label{subsec: assumptions}

\begin{assumption} \label{asm: convex}
For all $v_i \in \mathcal{V}$, the functions $f_i(\x)$ are convex, and the sets $\argmin f_i(\x)$ are non-empty and bounded.  
\end{assumption}

Since the set $\argmin f_i(\x)$ is non-empty, let $\x_i^*$ be an arbitrary minimizer of the function $f_i$. 

\begin{assumption} \label{asm: gradient_bound}
There exists $L \in \R_{>0}$ such that 
$\| \tilde{\g}_i( \x ) \|_2 \leq L$
for all $\x \in \R^d$, $v_i \in \mathcal V$, and $\tilde{\g}_i ( \x ) \in \partial f_i(\boldsymbol{ \x })$.
\end{assumption}

The bounded subgradient assumption above is common in the distributed convex optimization literature \cite{nedic2009distributed, duchi2011dual, jakovetic2014fast}. 

\begin{assumption} \label{asm: step-size}
The step-size sequence $\{ \eta[k] \}_{k=0}^{\infty} \subset \R_{> 0}$ used in \textbf{Line 11} of Algorithm~\ref{alg: DMM_filter} is of the form 
\begin{equation}
    \eta [k] = \frac{c_1}{k + c_2} \quad \text{for some} \;\; c_1, c_2 \in \R_{>0}. \label{eqn: step-size}
\end{equation}
\end{assumption}

Note that the step-size in \eqref{eqn: step-size} satisfies $\eta[k+1] < \eta[k]$ for all $k \in \N$, and
\begin{equation}
    \lim_{k \to \infty} \eta[k] = 0 \quad \text{and} \quad
    \sum_{k=0}^{\infty} \eta [k] = \infty  \label{eqn: step-size limit}
\end{equation}
for any choices of $c_1$, $c_2 \in \R_{>0}$.

\begin{assumption} \label{asm: robust}
Given a positive integer $F \in \Z_+$, the Byzantine agents form a $F$-local set.
\end{assumption}

\begin{assumption} \label{asm: weight matrices}
For all $k \in \N$ and $\ell \in \{ 1, 2, \ldots, d \}$, the weights $w_{x,ij} [k]$ and $w^{(\ell)}_{y,ij} [k]$ (used in \textbf{Line 9} and \textbf{Line 12} of Algorithm \ref{alg: DMM_filter}) are positive if and only if $\x_j [k] \in \mathcal{X}_i^{\textup{mm}} [k]$ for Algorithm~\ref{alg: DMM_filter} (and $\x_j [k] \in \mathcal{X}_i^{\textup{dist}} [k]$ for Algorithm~\ref{alg: D_filter}) and $y^{(\ell)}_j [k] \in \mathcal{Y}_i^{\textup{mm}} [k] (\ell)$, respectively. Furthermore, there exists $\omega \in \R_{>0}$ such that for all $k \in \N$ and $\ell \in \{ 1, 2, \ldots, d \}$, the non-zero weights are lower bounded by $\omega$.
\end{assumption}

\begin{remark}  \label{rem: F knowledge}
Regarding the prior knowledge of $F$ in Assumption~\ref{asm: robust}, we note that, as with any reliable or secure system, one has to design the system to provide a desired degree of reliability. If one requires the system to provide resilience to a certain number of faulty nodes, one has to design the algorithm (and network) to facilitate that. This is the standard philosophy and methodology in the literature \cite{lynch1996distributed, blanchard2017machine, yang2019byrdie}. Note that $F$ does not have to be the exact number of adversarial nodes -- it is only an upper bound on the number of adversarial nodes locally.
\end{remark}

\subsection{Analysis of Auxiliary Point Update}

Since the dynamics of the estimated auxiliary points $\{ \y_i[k] \}_{\mathcal{R}}$ are independent of the dynamics of the estimated solutions $\{ \x_i[k] \}_{\mathcal{R}}$, we begin by analyzing the convergence properties of the estimated auxiliary points $\{ \y_i[k] \}_{\mathcal{R}}$.

In order to establish this result, we need to define the following scalar quantities. For $k \in \N$ and $\ell \in \{1,2, \ldots, d\}$, let $M^{(\ell)}[k] := \max_{v_i \in \mathcal{R}} y_i^{(\ell)}[k]$, $m^{(\ell)}[k] := \min_{v_i \in \mathcal{R}} y_i^{(\ell)}[k]$, and $D^{(\ell)}[k] := M^{(\ell)}[k] - m^{(\ell)}[k]$.
Define the vector $\boldsymbol{D}[k] := \big[ D^{(1)}[k], D^{(2)}[k], \cdots, D^{(d)}[k] \big]^T$.

The proposition below shows that the estimated auxiliary points $\{ \y_i[k] \}_{\mathcal{R}}$ converge \textbf{exponentially fast} to a single point called $\y[\infty]$. 
 
\begin{proposition} \label{prop: aux convergence}
Suppose Assumption~4 hold, the graph $\mathcal{G}$ is $(2F + 1)$-robust, and the weights $w^{(\ell)}_{y,ij} [k]$ satisfy Assumption~\ref{asm: weight matrices}.
Suppose the estimated auxiliary points of the regular agents $\{ \y_i[k] \}_{\mathcal{R}}$ follow the update rule described as \textbf{Line~11} and \textbf{Line~12} in Algorithm~\ref{alg: DMM_filter}. Then, in both Algorithm~\ref{alg: DMM_filter} and Algorithm~\ref{alg: D_filter}, 
there exists $\y[\infty] \in \R^d$ with $y^{(\ell)}[\infty] \in \big[ m^{(\ell)}[k], M^{(\ell)}[k] ]$ for all $k \in \N$ and $\ell \in \{1, 2, \ldots, d \}$ such that for all $v_i \in \mathcal{R}$, we have
\begin{align*}
    \| \y_i[k] - \y[\infty] \| < \beta e^{-\alpha k},
\end{align*}
where $\alpha := \frac{1}{| \mathcal{R} | - 1} \log \frac{1}{\gamma} > 0$, $\beta := \frac{1}{\gamma} \| \boldsymbol{D} [0] \|$, and $\gamma :=  1 - \frac{\omega^{ | \mathcal{R} | - 1}}{2}$.
\end{proposition}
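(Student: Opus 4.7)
The plan is to reduce the auxiliary-point dynamics to $d$ independent scalar resilient consensus problems, then invoke a standard W-MSR-style contraction argument, and finally assemble the coordinate-wise exponential decay into the stated vector bound. The key observation is that Lines~11--12 treat each coordinate $\ell$ separately: the min-max filter $\mathcal{Y}_i^{\textup{mm}}[k](\ell)$ and the weighted average \eqref{eqn: weight average y} depend only on the $\ell$-th components, so I can analyze each scalar sequence $\{y_i^{(\ell)}[k]\}_{v_i \in \mathcal{R}}$ independently.

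First I would show that $M^{(\ell)}[k]$ is non-increasing and $m^{(\ell)}[k]$ is non-decreasing. Because the $F$-local model leaves at most $F$ Byzantine in-neighbors of each $v_i$, and the filter trims the $F$ largest (resp. smallest) values strictly above (resp. below) $y_i^{(\ell)}[k]$, any surviving value originating from an adversary is sandwiched between two surviving regular values (the extreme-value-pushing argument of LeBlanc et al.). Hence every element of $\mathcal{Y}_i^{\textup{mm}}[k](\ell)$ lies in $[m^{(\ell)}[k], M^{(\ell)}[k]]$, so the convex combination in \eqref{eqn: weight average y} keeps $y_i^{(\ell)}[k+1]$ in this interval and monotonicity of $M^{(\ell)}$ and $m^{(\ell)}$ follows immediately.

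The main obstacle is the contraction step: showing that over $|\mathcal{R}|-1$ consecutive iterations, $D^{(\ell)}$ shrinks by a factor of $\gamma = 1 - \omega^{|\mathcal{R}|-1}/2$. For this I would follow the usual $(2F+1)$-robustness argument. Fix $\ell$ and $k_0$, and for any $\epsilon > 0$ define the ``upper'' set $\mathcal{S}_U[k] = \{v_i \in \mathcal{R} : y_i^{(\ell)}[k] > M^{(\ell)}[k_0] - \epsilon\}$ and analogously $\mathcal{S}_L[k]$. By $(2F+1)$-robustness and $F$-locality, at every step at least one of $\mathcal{S}_U[k]$ or $\mathcal{S}_L[k]$ contains a regular node with $2F+1$ regular in-neighbors outside; after the min-max filter, that node retains at least one regular value from outside the extreme set, and by Assumption~\ref{asm: weight matrices} this value receives weight at least $\omega$. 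Iterating this geometric pull inward $|\mathcal{R}|-1$ times, every regular node in the initial extreme set is strictly pushed inward by a factor of at least $\omega^{|\mathcal{R}|-1}$ times the current range, yielding
\begin{equation*}
    D^{(\ell)}[k_0 + (|\mathcal{R}|-1)] \leq \gamma \, D^{(\ell)}[k_0].
\end{equation*}

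From here the endgame is routine. Iterating gives $D^{(\ell)}[k] \leq \gamma^{\lfloor k/(|\mathcal{R}|-1) \rfloor} D^{(\ell)}[0] \leq \gamma^{-1} e^{-\alpha k} D^{(\ell)}[0]$. Since $M^{(\ell)}[k]$ and $m^{(\ell)}[k]$ are monotone, bounded, and their difference vanishes, they share a common limit $y^{(\ell)}[\infty]$ that lies in $[m^{(\ell)}[k], M^{(\ell)}[k]]$ for every $k$. Coordinate-wise, $|y_i^{(\ell)}[k] - y^{(\ell)}[\infty]| \leq D^{(\ell)}[k]$, so taking Euclidean norms yields $\|\y_i[k] - \y[\infty]\| \leq \|\boldsymbol{D}[k]\| \leq \gamma^{-1} e^{-\alpha k} \|\boldsymbol{D}[0]\| = \beta e^{-\alpha k}$, which is the claimed exponential bound. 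Note that the argument is identical for both Algorithm~\ref{alg: DMM_filter} and Algorithm~\ref{alg: D_filter}, since the $\y$-update is the same in both.
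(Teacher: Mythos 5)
Your proposal is correct and follows essentially the same route as the paper's proof: a coordinate-wise reduction to scalar resilient consensus, nested monotone intervals $[m^{(\ell)}[k], M^{(\ell)}[k]]$, a robustness-driven contraction $D^{(\ell)}[k+|\mathcal{R}|-1] \leq \gamma D^{(\ell)}[k]$ (the paper takes $\zeta_0 = \tfrac{1}{2}D^{(\ell)}[k]$ and shrinks the threshold by $\omega$ each step), and assembly of the per-coordinate decay into the Euclidean bound. The only sketch-level looseness is the phrase that ``every regular node in the initial extreme set is strictly pushed inward'': the actual mechanism, as in the paper, is that at each step at least one of the two disjoint extreme sets loses a member, so one of them must be empty after $|\mathcal{R}|-1$ steps, which is exactly what yields the stated $\gamma = 1 - \omega^{|\mathcal{R}|-1}/2$.
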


The proof of the above proposition follows by noting that the updates for $\{ \y_i[k] \}_{\mathcal{R}}$ essentially boil down to a set of $d$ scalar consensus updates (one for each dimension of the vector), Thus, one can directly leverage the proof for scalar consensus (with filtering of extreme values) from \cite[Proposition~6.3]{sundaram2018distributed}. We provide the proof of Proposition~\ref{prop: aux convergence} in Appendix~\ref{sec: proof of aux prop}. 

Recall that $\{ \hat{\x}_i^* \}_\mathcal{R}$ is the set containing the approximate minimizers of the regular nodes' local functions.
Let $\x$ be a matrix in $\R^{d \times |\mathcal{R}|}$, where each column of $\x$ is a different vector from $\{ \hat{\x}_i^* \}_\mathcal{R}$.
In addition, let $\overline{\x}$ and $\underline{\x}$ be the vectors in $\R^d$ defined by $\overline{x}_i = \max_{1 \leq j \leq |\mathcal{R}|} [\x]_{ij}$ and $\underline{x}_i = \min_{1 \leq j \leq |\mathcal{R}|} [\x]_{ij}$, respectively.
 Since we set $\y_i[0] = \hat{\x}_i^*$ for all $v_i \in \mathcal{R}$ according to \textbf{Line 2} in Algorithm~\ref{alg: DMM_filter}, we can write 
\begin{equation*}
    \beta = \frac{1}{\gamma} \| \boldsymbol{D} [0] \|
    = \frac{1}{\gamma} \| \overline{\x} - \underline{\x} \|.
\end{equation*}

\subsection{Convergence to Consensus of States}

Having established convergence of the auxiliary points to a common value (for the regular nodes), we now consider the state update and show that the states of all regular nodes $\{ \x_i [k] \}_{\mathcal{R}}$ asymptotically reach consensus under Algorithm \ref{alg: DMM_filter}. Before stating the main theorem, we provide a result from \cite[Lemma~2.3]{sundaram2018distributed} which is important for proving the main theorem.

\begin{lemma}
Suppose the graph $\mathcal{G}$ satisfies Assumption~\ref{asm: robust} and is $((2d + 1)F + 1)$-robust. Let $\mathcal{G}'$ be a graph obtained by removing $(2d + 1)F$ or fewer incoming edges from each node in $\mathcal{G}$. Then $\mathcal{G}'$ is rooted.
\label{lem: rooted subgraph}
\end{lemma}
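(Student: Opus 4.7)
The plan is to prove the contrapositive by contradiction: assume $\mathcal{G}'$ is not rooted and derive a violation of $((2d+1)F+1)$-robustness of $\mathcal{G}$.

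First, I would exploit the following standard characterization of rooted digraphs: $\mathcal{G}'$ is rooted if and only if its condensation (DAG of strongly connected components) has a unique source. Hence, if $\mathcal{G}'$ is not rooted, there exist (at least) two distinct source SCCs in the condensation. Let $\mathcal{S}_1, \mathcal{S}_2 \subseteq \mathcal{V}$ be the vertex sets of any two such source SCCs. By construction, $\mathcal{S}_1$ and $\mathcal{S}_2$ are nonempty and disjoint, and, crucially, no node in $\mathcal{S}_1$ has an in-neighbor in $\mathcal{G}'$ lying outside $\mathcal{S}_1$; the analogous statement holds for $\mathcal{S}_2$. (Equivalently, one may directly argue: if $\mathcal{G}'$ is not rooted, then there is some nonempty $\mathcal{S} \subsetneq \mathcal{V}$ with no in-edges from $\mathcal{V} \setminus \mathcal{S}$ in $\mathcal{G}'$, and by iterating this observation one obtains two disjoint such sets.)

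Next, I would apply the $((2d+1)F+1)$-robustness of $\mathcal{G}$ to the pair $(\mathcal{S}_1, \mathcal{S}_2)$. By definition, at least one of the two sets—say, without loss of generality, $\mathcal{S}_1$—is $((2d+1)F+1)$-reachable in $\mathcal{G}$, so there exists $v_i \in \mathcal{S}_1$ with $|\mathcal{N}_i^{\text{in}} \setminus \mathcal{S}_1| \geq (2d+1)F + 1$, where the in-neighborhood is taken in $\mathcal{G}$. However, in $\mathcal{G}'$ the node $v_i$ has zero in-neighbors outside $\mathcal{S}_1$ by the choice of $\mathcal{S}_1$. Therefore, passing from $\mathcal{G}$ to $\mathcal{G}'$ must have deleted all $(2d+1)F + 1$ of these incoming edges at $v_i$. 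This contradicts the hypothesis that at most $(2d+1)F$ incoming edges are removed from any single node, completing the proof.

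The argument is short and routine; the only subtlety—and the part I would take care to state cleanly—is justifying the existence of two disjoint nonempty sets $\mathcal{S}_1, \mathcal{S}_2$ each of which has no external in-neighbors in $\mathcal{G}'$, since Definition of $r$-robust quantifies over \emph{pairs} of disjoint subsets. Invoking the source-SCC characterization of rootedness makes this extraction immediate, after which the robustness hypothesis and the per-node edge-deletion budget fit together by a direct counting argument. No additional machinery (e.g., weight bounds, step-size conditions, or adversary structure) is needed for this purely graph-theoretic statement.
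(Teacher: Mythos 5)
Your proof is correct, and it is essentially the standard argument behind the cited result: the paper itself does not reprove this lemma but simply invokes \cite[Lemma~2.3]{sundaram2018distributed}, whose proof proceeds exactly as you do — extract two disjoint nonempty sets with no external in-neighbors in $\mathcal{G}'$ (e.g., two source strongly connected components when $\mathcal{G}'$ is not rooted), apply the $((2d+1)F+1)$-robustness of $\mathcal{G}$ to that pair, and contradict the per-node budget of $(2d+1)F$ removed incoming edges. You are also right that the hypothesis on the adversary set (Assumption~\ref{asm: robust}) plays no role in this purely graph-theoretic statement; the clean justification via source components of the condensation is exactly the subtlety worth stating, and your counting step is airtight.
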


This means that if we have enough redundancy in the network (in this case, captured by the $( (2d + 1)F + 1 )$-robustness condition), information from at least one node can still flow to the other nodes in the network even after each regular node discards up to $F$ neighboring states in the distance filtering step  (\textbf{Line~7}) and up to $2dF$ neighboring states in the min-max filtering step  (\textbf{Line~8}). This transmissibility of information is a crucial condition for reaching consensus among regular nodes.

\begin{theorem}[Consensus] \label{thm: consensus}
Suppose Assumptions~\ref{asm: gradient_bound}-\ref{asm: weight matrices} hold, and the graph $\mathcal{G}$ is $((2d + 1)F + 1)$-robust. If the regular agents follow Algorithm~\ref{alg: DMM_filter} then for all $v_i, v_j \in \mathcal{R}$, it holds that
\begin{equation*}
    \lim_{k \to \infty} \| \x_{i} [k] - \x_{j} [k] \| = 0.
\end{equation*}
\end{theorem}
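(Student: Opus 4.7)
\medskip

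\noindent\textbf{Proof proposal.} The plan is to follow the standard ``min/max'' diameter-contraction argument for robust consensus, coordinate by coordinate, treating the gradient step as a vanishing perturbation. Concretely, for each $\ell \in \{1, \ldots, d\}$ define
\begin{equation*}
    M^{(\ell)}[k] := \max_{v_i \in \mathcal R} x_i^{(\ell)}[k], \qquad
    m^{(\ell)}[k] := \min_{v_i \in \mathcal R} x_i^{(\ell)}[k], \qquad
    D^{(\ell)}[k] := M^{(\ell)}[k]-m^{(\ell)}[k],
\end{equation*}
and let $\Phi[k] := \max_{\ell} D^{(\ell)}[k]$. The goal is to show $\Phi[k]\to 0$, which immediately implies the claimed consensus.

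The first step is a ``safe convex combination'' lemma: for each regular $v_i$ and each coordinate $\ell$,
\begin{equation*}
    z_i^{(\ell)}[k] \;=\; \sum_{v_j \in \mathcal R} \tilde w^{(\ell)}_{ij}[k]\, x_j^{(\ell)}[k],
\end{equation*}
with $\tilde w^{(\ell)}_{ij}[k] \ge 0$, $\sum_{v_j \in \mathcal R}\tilde w^{(\ell)}_{ij}[k]=1$, and $\tilde w^{(\ell)}_{ii}[k] \ge \omega$. The reason: after the distance filter removes at most $F$ states and the min-max filter (Line~8) removes, per coordinate, up to $F$ vectors strictly above and $F$ strictly below $v_i$'s own value, the $F$-locality assumption leaves any surviving Byzantine value sandwiched, in coordinate $\ell$, between two regular agents' values; it can therefore be replaced in the weighted average by a convex combination of those two regular values. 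Node $v_i$'s own state is never removed, so it contributes weight at least $\omega$ by Assumption~\ref{asm: weight matrices}. Combining with the gradient step $x_i^{(\ell)}[k{+}1] = z_i^{(\ell)}[k] - \eta[k]\,g_i^{(\ell)}[k]$ and $\|\g_i[k]\|\le L$, I get the ``perturbed convex combination''
\begin{equation*}
    x_i^{(\ell)}[k{+}1]\in\Bigl[m^{(\ell)}[k]-\eta[k]L,\ M^{(\ell)}[k]+\eta[k]L\Bigr].
\end{equation*}

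The next step is the contraction. Each regular node discards at most $(2d{+}1)F$ incoming edges per iteration (up to $F$ by the distance filter plus at most $2dF$ by the coordinate-wise min-max filter). By Lemma~\ref{lem: rooted subgraph}, together with the $((2d{+}1)F+1)$-robustness hypothesis, the resulting graph of ``effective'' regular-to-regular connections is rooted at every iteration. Splitting $\mathcal R$ in coordinate $\ell$ into the ``near-max set'' $\{v_i: x_i^{(\ell)}[k] > M^{(\ell)}[k]-\epsilon\}$ and its complement, the rooted property guarantees that some regular node in one of the two sets has an effective in-neighbor from the other set whose value differs by at least $\Theta(D^{(\ell)}[k])$. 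Using $\tilde w^{(\ell)}_{ii}[k],\tilde w^{(\ell)}_{ij}[k]\ge\omega$ on the surviving entries, the standard recursive argument (propagating the contraction through at most $|\mathcal R|$ layers) yields, for some $\rho\in(0,1)$ depending only on $\omega$ and $|\mathcal R|$ and some constant $C$ depending on $L$, $d$, and $|\mathcal R|$,
\begin{equation*}
    \Phi\bigl[k+|\mathcal R|\bigr] \;\le\; \rho\,\Phi[k] \;+\; C\sum_{j=0}^{|\mathcal R|-1}\eta[k+j].
\end{equation*}
Since $\eta[k]\to 0$ by Assumption~\ref{asm: step-size}, iterating this inequality and applying a standard vanishing-perturbation lemma on linear recursions forces $\Phi[k]\to 0$, which is the desired consensus.

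The main obstacle is the contraction step. Two subtleties complicate the standard template: first, the distance filter is coupled to the auxiliary point $\y_i[k]$, which is time-varying; this must be handled by observing that, regardless of which edges the distance filter removes, at most $F$ additional edges are cut, so Lemma~\ref{lem: rooted subgraph} still applies. Second, because the min-max filter is coordinate-wise but the weighted-average weights $w_{x,ij}[k]$ are shared across coordinates (unlike the $\y$-update), the safe-convex-combination representation has to be built per coordinate by redistributing Byzantine contributions onto the two flanking regular agents in that coordinate — a construction that is clean but requires careful bookkeeping. Everything else is a direct adaptation of the scalar robust-consensus contraction argument, with the gradient term absorbed as an $O(\eta[k])$ perturbation that is summable against the geometric contraction to give convergence to zero.
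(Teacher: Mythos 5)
Your proposal is correct and follows essentially the same route as the paper: the paper likewise rewrites each coordinate of the update as a convex combination of regular neighbors' values (citing the sandwiching result of \cite{sundaram2018distributed}), invokes Lemma~\ref{lem: rooted subgraph} after noting that at most $(2d+1)F$ incoming edges are cut per node, and treats $\eta[k]\,g_i^{(\ell)}[k]$ as a vanishing perturbation before appealing to the scalar robust-consensus contraction argument of \cite{sundaram2018distributed}. The only difference is that you write out the diameter-contraction step explicitly (with slightly loose constants on the reassigned weights, e.g.\ $\omega$ versus the paper's $\omega/2$ bound), whereas the paper delegates it to the cited results.
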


\begin{proof}
It is sufficient to show that all regular nodes $v_i \in \mathcal{R}$ reach consensus on each component of their vectors $\x_i[k]$ as $k \to \infty$.  For all $\ell \in \{ 1, 2, \ldots, d \}$ and for all $v_i \in \mathcal{R}$, from \eqref{eqn: weight average} and \eqref{eqn: grad dynamic}, the $\ell$-th component of the vector $\x_i [k]$ evolves as
\begin{equation*}
    x^{(\ell)}_i [k+1] = \sum_{\x_j [k] \in \mathcal{X}_i^{\text{mm}} [k]} w_{x,ij} [k] \; x^{(\ell)}_j [k]
    - \eta[k] \; g^{(\ell)}_i [k].
\end{equation*}
From \cite[Proposition~5.1]{sundaram2018distributed}, the above equation can be rewritten as
\begin{equation}
    x^{(\ell)}_i [k+1] = \sum_{v_j \in (\mathcal{N}^{\text{in}}_i \cap \mathcal{R}) \cup \{v_i\}} \bar{w}^{(\ell)}_{x,ij} [k] \; x^{(\ell)}_j [k]
    - \eta[k] \; g^{(\ell)}_i [k],
    \label{eqn: regular dynamics}
\end{equation}
where $\bar{w}^{(\ell)}_{x,ii} [k] + \sum_{v_j \in \mathcal{N}^{\text{in}}_i \cap \mathcal{R}} \bar{w}^{(\ell)}_{x,ij} [k] = 1$, and 
$\bar{w}^{(\ell)}_{x,ii} [k] > \omega$ and at least $| \mathcal{N}^{\text{in}}_i | - 2F$ of the other weights are lower bounded by $\frac{\omega}{2}$.

Consider the set $\mathcal{X}_i^{\text{mm}} [k]$ which is obtained by removing at most $F + 2dF$ states received from $v_i$'s neighbors (up to $F$ states removed by the distance filtering process in \textbf{line 7}, and up to $2F$ additional states removed by the min-max filtering process on each of the $d$ components in \textbf{line 8}).  
Since the graph is $((2d+1)F + 1)$-robust and the Byzantine agents form an $F$-local set by Assumption~\ref{asm: robust}, from Lemma~\ref{lem: rooted subgraph}, the subgraph consisting of regular nodes will be rooted.
Using the fact that the term $\eta[k] \; g^{(\ell)}_i [k]$ asymptotically goes to zero (by Assumptions~\ref{asm: gradient_bound} and \eqref{eqn: step-size limit}) and equation \eqref{eqn: regular dynamics}, we can proceed as in the proof of \cite[Theorem~6.1]{sundaram2018distributed} to show that
\begin{equation*}
    \lim_{k \to \infty} | x^{(\ell)}_{i} [k] - x^{(\ell)}_{j} [k] | = 0,
\end{equation*}
for all $v_i, v_j \in \mathcal{R}$, which completes the proof.
\end{proof}

Theorem~\ref{thm: consensus} established consensus of the states of the regular agents, leveraging (and extending) similar analysis for scalar functions from \cite{sundaram2018distributed}, only for Algorithm~\ref{alg: DMM_filter}. 
However, this does not hold for Algorithm~\ref{alg: D_filter} since there might exist a regular agent $v_i \in \mathcal{R}$, time-step $k \in \N$ and dimension $\ell \in \{ 1, 2, \ldots, d \}$ such that
an adversarial state $x_s^{(\ell)}[k] \in \{ x_j^{(\ell)}[k] \in \R: \x_j[k] \in \mathcal{X}_i^{\text{dist}}[k], v_j \in \mathcal{A} \}$ 
cannot be written as a convex combination of $\big \{ x_j^{(\ell)}[k] \in \R: v_j \in (\mathcal{N}^{\text{in}}_i \cap \mathcal{R}) \cup \{v_i\} \big\}$, and thus we cannot obtain equation \eqref{eqn: regular dynamics}.
On the other hand, Proposition~\ref{prop: aux convergence} established consensus of the auxiliary points, which will be now used to characterize the convergence {\it region} of both Algorithm~\ref{alg: DMM_filter} and Algorithm~\ref{alg: D_filter}.

\subsection{The Region To Which The States Converge}

We now analyze the trajectories of the states of the agents under Algorithm~\ref{alg: DMM_filter} and Algorithm~\ref{alg: D_filter}. 
We start with the following result regarding the intermediate state $\z_i[k]$ calculated in \textbf{Lines 7-9} of Algorithm~\ref{alg: DMM_filter}. 

\begin{lemma} \label{lem: dist to curr aux}
Suppose Assumptions~\ref{asm: robust} and \ref{asm: weight matrices} hold. Furthermore:
\begin{itemize}
    \item if the regular agents follow Algorithm~\ref{alg: DMM_filter}, suppose the graph $\mathcal{G}$ is $((2d + 1)F + 1)$-robust;
    \item otherwise, if the regular agents follow Algorithm~\ref{alg: D_filter}, suppose the graph $\mathcal{G}$ is $(2F + 1)$-robust.
\end{itemize}
\noindent For all $k \in \N$ and $v_i \in \mathcal{R}$, if there exists $R_i[k] \in \R_{\ge 0}$ such that $ \| \x_j[k] - \y_i[k] \| \leq R_i[k]$ for all $v_j \in (\mathcal{N}_i^{\textup{in}} \cap \mathcal{R}) \cup \{ v_i \}$ then $ \| \z_i[k] - \y_i[k] \| \leq R_i[k]$.
\end{lemma}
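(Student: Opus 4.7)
The plan is to show that every state surviving the filtering steps lies in the closed Euclidean ball $\B(\y_i[k], R_i[k])$, so that $\z_i[k]$, being a convex combination of such states, must also lie in that ball by convexity. Since $\mathcal{X}_i^{\text{mm}}[k] \subseteq \mathcal{X}_i^{\text{dist}}[k]$, it suffices to prove the claim for Algorithm~\ref{alg: D_filter}; the conclusion for Algorithm~\ref{alg: DMM_filter} follows immediately, because its weighted average in \textbf{Line~9} uses a sub-multiset of $\mathcal{X}_i^{\text{dist}}[k]$.

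First I would handle regular surviving states: any $v_j \in (\mathcal{N}_i^{\text{in}} \cap \mathcal{R}) \cup \{v_i\}$ satisfies $\|\x_j[k] - \y_i[k]\| = \mathcal{D}_{ij}[k] \le R_i[k]$ by hypothesis, regardless of whether $\x_j[k]$ is actually retained by the distance filter. The real work is showing that any adversarial state $\x_s[k]$ that slips through \texttt{dist\_filt} also satisfies $\|\x_s[k] - \y_i[k]\| \le R_i[k]$. I would argue by contradiction: suppose some adversarial $\x_s[k] \in \mathcal{X}_i^{\text{dist}}[k]$ has $\mathcal{D}_{is}[k] > R_i[k] \ge \mathcal{D}_{ii}[k]$. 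Then $\mathcal{D}_{is}[k]$ lies strictly above $v_i$'s own value, so by the rule in \textbf{Line~7}, it is a candidate for removal; the only way for it to survive is that $F$ other strictly larger values were removed instead. But any regular in-neighbor $v_r$ has $\mathcal{D}_{ir}[k] \le R_i[k] < \mathcal{D}_{is}[k]$, so those $F$ removed values must all come from adversarial in-neighbors. Combined with $v_s$ itself, this gives at least $F+1$ adversarial nodes in $\mathcal{N}_i^{\text{in}}$, contradicting Assumption~\ref{asm: robust} ($F$-locality).

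With every $\x_j[k] \in \mathcal{X}_i^{\text{dist}}[k]$ shown to satisfy $\|\x_j[k] - \y_i[k]\| \le R_i[k]$, I would finish by invoking convexity of the ball: the vector
\begin{equation*}
    \z_i[k] = \sum_{\x_j[k] \in \mathcal{X}_i^{\text{dist}}[k]} w_{x,ij}[k]\,\x_j[k]
\end{equation*}
is a convex combination (weights are nonnegative and sum to one by Assumption~\ref{asm: weight matrices}) of points in $\B(\y_i[k], R_i[k])$, hence $\z_i[k] \in \B(\y_i[k], R_i[k])$, i.e.\ $\|\z_i[k] - \y_i[k]\| \le R_i[k]$.

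The only subtle step is the contradiction argument for adversarial survivors of the distance filter; the rest is a standard convexity bound. The robustness hypotheses ($(2F+1)$-robust or $((2d+1)F+1)$-robust) are not needed for this particular lemma beyond ensuring the algorithms are well-defined (they matter for later results on consensus and convergence), and the bound holds irrespective of how the min-max filter in Algorithm~\ref{alg: DMM_filter} trims the set further.
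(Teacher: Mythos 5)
Your proposal is correct and follows essentially the same route as the paper: both arguments rest on the observation that any adversarial state surviving the distance filter must be dominated in distance by some regular node's state (the paper exhibits a removed regular witness $v_r$; you run the equivalent contradiction via $F$-locality), followed by convexity of the ball $\B(\y_i[k], R_i[k])$ under the convex combination in \textbf{Line~9}. Your handling of the case split on whether $\mathcal{D}_{is}[k] > \mathcal{D}_{ii}[k]$ is, if anything, slightly more careful than the paper's phrasing, and your remark that the robustness hypotheses are not actually used here matches the paper's proof.
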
  

\begin{proof}
Consider the distance filtering step in \textbf{Line 7} of Algorithm \ref{alg: DMM_filter}. Recall the definition of $\mathcal{D}_{ij} [k]$ from \eqref{def: D-metric}.
We will first prove the following claim. For each $k \in \N$ and $v_i \in \mathcal{R}$, there exists $v_r \in ( \mathcal{N}_i^{\text{in}} \cap \mathcal{R} ) \cup \{ v_i \}$ such that for all $\x_j [k] \in \mathcal{X}^{\text{dist}}_i [k]$,
\begin{equation*}
    \| \x_j [k] - \y_i [k] \| \leq \| \x_r [k] - \y_i [k] \|,
\end{equation*}
or equivalently, $\mathcal{D}_{ij} [k] \leq \mathcal{D}_{ir} [k]$.

There are two possible cases.  First, if the set $\mathcal{X}_i^{\text{dist}} [k]$ contains only regular nodes, we can simply choose $v_r \in ( \mathcal{N}_i^{\text{in}} \cap \mathcal{R} ) \cup \{ v_i \}$ to be the node whose state $\x_r [k]$ is furthest away from $\y_i [k]$.  
Next, consider the case where $\mathcal{X}_i^{\text{dist}} [k]$ contains the state of one or more Byzantine nodes. Since node $v_i \in \mathcal{R}$ removes the $F$ states from $\mathcal{N}_{i}^{\text{in}}$ that are furthest away from $\y_i [k]$ (\textbf{Line 7}), 
and there are at most $F$ Byzantine nodes in $\mathcal{N}_i^{\text{in}}$, there is at least one regular state removed by node $v_i$. Let $v_r$ be one of the regular nodes whose state is removed. We then  have $\mathcal{D}_{ir} [k] \geq \mathcal{D}_{ij} [k]$, for all $v_j \in \{ v_s \in \mathcal{V}: \x_s [k] \in \mathcal{X}_i^{\text{dist}} [k] \}$ which proves the claim.  

If Algorithm~\ref{alg: DMM_filter} is implemented, let $\hat{\mathcal{X}}_i [k] = \mathcal{X}_i^{\text{mm}}[k]$ and we have that $\mathcal{X}_i^{\text{mm}}[k] \subseteq \mathcal{X}_i^{\text{dist}}[k]$ due to the min-max filtering step in \textbf{Line~8}. If Algorithm~\ref{alg: D_filter} is implemented, let $\hat{\mathcal{X}}_i [k] = \mathcal{X}_i^{\text{dist}}[k]$ since \textbf{Line~8} is removed.
Then, consider the weighted average step in \textbf{Line 9}. From \eqref{eqn: weight average}, we have
\begin{equation*}
    \z_i [k] - \y_i [k]
    = \sum_{ \x_j [k] \in \hat{\mathcal{X}}_i [k] }  w_{x, ij} [k]  \; \big( \x_j [k] - \y_i [k] \big) .
\end{equation*}
Since $\| \x_j [k] - \y_i [k] \| \leq \| \x_r [k] - \y_i [k] \|$ for all $\x_j [k] \in \hat{\mathcal{X}}_i [k]$ (where $v_r$ is the node identified in the claim at the start of the proof), we obtain
\begin{equation*}
    \| \z_i [k] - \y_i [k] \|
    \leq \sum_{ \x_j [k] \in \hat{\mathcal{X}}_i [k] }
    w_{x, ij} [k] \;  \| \x_j [k] - \y_i [k] \| 
    \leq \sum_{ \x_j [k] \in \hat{\mathcal{X}}_i [k] }
    w_{x, ij} [k] \; \| \x_r [k] - \y_i [k] \|. 
\end{equation*}
Since $v_r \in ( \mathcal{N}_i^{\text{in}} \cap \mathcal{R} ) \cup \{ v_i \}$, by our assumption, we have $\| \x_r [k] - \y_i [k] \| \leq R_i[k]$. Thus, using the above inequality and Assumption \ref{asm: weight matrices}, we obtain that $ \| \z_i [k] - \y_i [k] \| \leq R_i[k]$.
\end{proof}

Lemma~\ref{lem: dist to curr aux} essentially states that if the set of states $\{ \x_j[k]: v_j \in (\mathcal{N}_i^{\text{in}} \cap \mathcal{R} ) \cup \{v_i\}  \}$ is a subset of the local ball $\B ( \y_i [k] , R_i [k] )$ then the intermediate state $\z_i [k]$ is still in the ball. This is a consequence of using the distance filter (and adding the min-max filter in Algorithm~\ref{alg: DMM_filter} does not destroy this property), and this will play an important role in proving the convergence theorem.

Next, we will establish certain quantities that will be useful for our analysis of the convergence region. For $v_i \in \mathcal{R}$ and $\epsilon > 0$, define
\begin{align}
    \mathcal{C}_i (\epsilon) := \{ \x \in \R^d : f_i(\x) \leq f_i(\x_i^*) + \epsilon \}.   \label{def: sublevel_set}
\end{align}
For all $v_i \in \mathcal{R}$, since the set $\argmin f_i(\x)$ is bounded  (by Assumption~\ref{asm: convex}), there exists $\delta_i (\epsilon) \in (0, \infty)$ such that
\begin{equation}
    \mathcal{C}_i( \epsilon ) \subseteq \B(\x_i^*, \delta_i (\epsilon)).
    \label{def: delta_i}
\end{equation}

The following proposition, whose proof is provided in Appendix~\ref{sec: proof_grad_angle}, introduces an angle $\theta_i$ which is an upper bound on the angle between the negative of the gradient of $f_i$ at a given point $\x$ and the vector $\x_i^* - \x$.
\begin{proposition} \label{prop: grad angle}
If Assumptions~\ref{asm: convex} and \ref{asm: gradient_bound} hold
then for all $v_i \in \mathcal{R}$ and $\epsilon > 0$, there exists $\theta_i (\epsilon) \in \big[ 0, \frac{\pi}{2} \big)$ such that for all $\x \notin \mathcal{C}_i( \epsilon )$ and $\tilde{\g}_i ( \x ) \in \partial f_i(\boldsymbol{ \x })$,
\begin{equation}
    \angle ( - \tilde{\g}_i (\x), \; \x_i^* - \x ) \leq \theta_i (\epsilon).
    \label{def: theta_i}
\end{equation}
\end{proposition}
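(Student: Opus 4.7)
\textbf{The plan} is to bound $\cos \angle(-\tilde{\g}_i(\x), \x_i^* - \x)$ below by a strictly positive constant that depends only on $\epsilon$, $L$, and $\delta_i(\epsilon)$; taking $\theta_i(\epsilon)$ to be the arccosine of that constant gives the claim. Since $\x \notin \mathcal{C}_i(\epsilon)$ the point $\x$ is not a minimizer of $f_i$, hence $0 \notin \partial f_i(\x)$, so any $\tilde{\g}_i(\x)$ is nonzero and the angle is well-defined.

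\textbf{Warm-up and obstacle.} The subgradient inequality at $\x$ gives $\langle -\tilde{\g}_i(\x), \x_i^* - \x \rangle \geq f_i(\x) - f_i(\x_i^*) > \epsilon$, which combined with $\|\tilde{\g}_i(\x)\| \leq L$ from Assumption~\ref{asm: gradient_bound} yields the candidate bound $\cos \geq \epsilon/(L \|\x - \x_i^*\|)$. The main obstacle is that this bound is \emph{not} uniform in $\x$: when $\|\x - \x_i^*\|$ is large it collapses to zero and the angle could approach $\pi/2$. Replacing $\|\x - \x_i^*\|$ in the denominator by $\delta_i(\epsilon)$ is the crux of the proof.

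\textbf{Key idea.} I would introduce an auxiliary point $\bar{\x}$ on the segment from $\x_i^*$ to $\x$ chosen so that $f_i(\bar{\x}) = f_i(\x_i^*) + \epsilon$. Such a $\bar{\x}$ exists by continuity of $f_i$ on $\R^d$ (a consequence of convexity) and the intermediate value theorem applied to $t \mapsto f_i(\x_i^* + t(\x - \x_i^*))$ on $[0,1]$. Writing $\bar{\x} = \x_i^* + t(\x - \x_i^*)$ with $t \in (0, 1)$, the inclusion $\bar{\x} \in \mathcal{C}_i(\epsilon) \subseteq \B(\x_i^*, \delta_i(\epsilon))$ produces $t \leq \delta_i(\epsilon)/\|\x - \x_i^*\|$.

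\textbf{Closing the argument.} The subgradient inequality at $\bar{\x}$ paired with the minimizer $\x_i^*$ gives, for any $\tilde{\g}_i(\bar{\x}) \in \partial f_i(\bar{\x})$, $\langle \tilde{\g}_i(\bar{\x}), \bar{\x} - \x_i^* \rangle \geq f_i(\bar{\x}) - f_i(\x_i^*) = \epsilon$; scaling by $1/t$ using $\bar{\x} - \x_i^* = t(\x - \x_i^*)$ yields $\langle \tilde{\g}_i(\bar{\x}), \x - \x_i^* \rangle \geq \epsilon \|\x - \x_i^*\|/\delta_i(\epsilon)$. Monotonicity of the subdifferential of a convex function then gives $\langle \tilde{\g}_i(\x) - \tilde{\g}_i(\bar{\x}), \x - \bar{\x} \rangle \geq 0$, and since $\x - \bar{\x} = (1 - t)(\x - \x_i^*)$ with $1 - t > 0$, this upgrades to $\langle \tilde{\g}_i(\x), \x - \x_i^* \rangle \geq \epsilon \|\x - \x_i^*\|/\delta_i(\epsilon)$. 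Dividing by $\|\tilde{\g}_i(\x)\| \cdot \|\x - \x_i^*\|$ and invoking $\|\tilde{\g}_i(\x)\| \leq L$ produces $\cos \angle(-\tilde{\g}_i(\x), \x_i^* - \x) \geq \epsilon/(L \delta_i(\epsilon))$ uniformly in $\x$. Setting $\theta_i(\epsilon) := \arccos\!\big( \min\{ 1,\, \epsilon/(L \delta_i(\epsilon)) \} \big) \in [0, \pi/2)$ completes the proof.
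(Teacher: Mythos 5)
Your proof is correct and follows essentially the same route as the paper: both introduce the point on the segment from $\x_i^*$ to $\x$ at which $f_i$ equals $f_i(\x_i^*)+\epsilon$, use $\mathcal{C}_i(\epsilon)\subseteq\B(\x_i^*,\delta_i(\epsilon))$ to control its distance from $\x_i^*$, and arrive at $\cos\angle(-\tilde{\g}_i(\x),\,\x_i^*-\x)\geq \epsilon/(L\,\delta_i(\epsilon))$, hence $\theta_i(\epsilon)=\arccos\big(\epsilon/(L\,\delta_i(\epsilon))\big)<\pi/2$. The only minor differences are that you transfer the lower bound from the auxiliary point to $\x$ via monotonicity of the subdifferential, whereas the paper uses the non-decreasing difference-quotient property of convex functions applied to function values, and you sidestep verifying $\epsilon\leq L\,\delta_i(\epsilon)$ (which the paper proves using a subgradient at the auxiliary point) by wrapping the argument of $\arccos$ in $\min\{1,\cdot\}$.
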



Before stating the main theorem, we define 
\begin{equation}
    \tilde{R}_i := \| \x_i^* - \y [\infty] \|
    \label{def: tilde R_i}.
\end{equation}
Furthermore, for all $\xi \in \R_{\geq 0}$ and $\epsilon \in \R_{> 0}$, we define the \textit{convergence radius}
\begin{equation}
    s^*( \xi, \epsilon ) := \max_{v_i \in \mathcal{R}} \big\{ \max \{ \Tilde{R}_i \sec \theta_i(\epsilon), \Tilde{R}_i + \delta_i(\epsilon) \} \big\} + \xi.   
    \label{def: conv radius}
\end{equation}
where $\tilde{R}_i$, $\theta_i (\epsilon)$ and $\delta_i (\epsilon)$ are defined in \eqref{def: tilde R_i}, \eqref{def: theta_i} and \eqref{def: delta_i}, respectively. Based on the definition above, we refer to $\B (\y [\infty], s^*( \xi, \epsilon) )$ as the \textit{convergence ball}.

We now come to the main result of this paper, showing that the states of all the regular nodes will converge to a ball of radius $\inf_{\epsilon > 0} s^*(0, \epsilon)$ around the auxiliary point $\y[\infty]$ under Algorithm~\ref{alg: DMM_filter} and Algorithm~\ref{alg: D_filter}.

\begin{theorem}[Convergence] \label{thm: convergence}
Suppose Assumptions~\ref{asm: convex}-\ref{asm: weight matrices} hold. Furthermore:
\begin{itemize}
    \item if the regular agents follow Algorithm~\ref{alg: DMM_filter}, suppose the graph $\mathcal{G}$ is $((2d + 1)F + 1)$-robust;
    \item otherwise, if the regular agents follow Algorithm~\ref{alg: D_filter}, suppose the graph $\mathcal{G}$ is $(2F + 1)$-robust.
\end{itemize}
\noindent Then regardless of the actions of any $F$-local set of Byzantine adversaries, for all $v_i \in \mathcal{R}$, we have
\begin{align*}
    \limsup_k \| \x_i[k] - \y[\infty] \| \leq \inf_{\epsilon > 0} s^*(0, \epsilon).
\end{align*}
\end{theorem}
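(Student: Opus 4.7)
The plan is to track the quantity $T[k] := \max_{v_i \in \mathcal{R}} \|\x_i[k] - \y[\infty]\|$ and to show, for each fixed $\epsilon > 0$, that $\limsup_k T[k] \le s^*(0,\epsilon)$; the theorem then follows by taking the infimum over $\epsilon$. First I would invoke Proposition~\ref{prop: aux convergence} to control the exponentially decaying gap between the time-varying auxiliary points $\y_i[k]$ and the common limit $\y[\infty]$, writing $\tau[k] := \beta e^{-\alpha k}$. Applying Lemma~\ref{lem: dist to curr aux} with $R_i[k] := T[k] + \tau[k]$ (justified by $\|\x_j[k] - \y_i[k]\| \le \|\x_j[k] - \y[\infty]\| + \|\y[\infty] - \y_i[k]\| \le T[k] + \tau[k]$ for every $v_j \in (\mathcal{N}_i^{\text{in}} \cap \mathcal{R}) \cup \{v_i\}$) gives $\|\z_i[k] - \y_i[k]\| \le T[k] + \tau[k]$, and therefore $\|\z_i[k] - \y[\infty]\| \le T[k] + 2\tau[k]$. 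This reduces the analysis to the gradient step $\x_i[k+1] = \z_i[k] - \eta[k]\g_i[k]$ starting from a point whose distance to $\y[\infty]$ is essentially controlled by $T[k]$.

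Next I would set $r_i^* := \max\{\tilde{R}_i \sec\theta_i(\epsilon),\ \tilde{R}_i + \delta_i(\epsilon)\}$ and $r^* := \max_{v_i \in \mathcal{R}} r_i^* = s^*(0,\epsilon)$, and split each agent's gradient step into two cases. If $\|\z_i[k] - \y[\infty]\| \le r^*$, Assumption~\ref{asm: gradient_bound} immediately gives $\|\x_i[k+1] - \y[\infty]\| \le r^* + \eta[k] L$. Otherwise $\|\z_i[k] - \y[\infty]\| > r^* \ge r_i^*$, which simultaneously forces $\z_i[k] \notin \B(\x_i^*, \delta_i(\epsilon)) \supseteq \mathcal{C}_i(\epsilon)$ (by the triangle inequality, since $\|\z_i[k]-\y[\infty]\| > \tilde{R}_i + \delta_i(\epsilon)$) and makes Proposition~\ref{prop: grad angle} applicable, yielding $\angle(-\g_i[k], \x_i^* - \z_i[k]) \le \theta_i(\epsilon)$.

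The geometric heart of the argument is to turn this angle bound into a decrease in distance to $\y[\infty]$. Using the law of sines in the triangle with vertices $\z_i[k], \x_i^*, \y[\infty]$, the angle $\phi_i := \angle(\x_i^* - \z_i[k], \y[\infty] - \z_i[k])$ satisfies $\sin\phi_i \le \tilde{R}_i / \|\z_i[k] - \y[\infty]\|$; since $\|\z_i[k] - \y[\infty]\| > \tilde{R}_i \sec\theta_i(\epsilon) > \tilde{R}_i$, the angle $\phi_i$ is acute and $\sin\phi_i < \cos\theta_i(\epsilon)$, so $\phi_i < \pi/2 - \theta_i(\epsilon)$. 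Combining with the subgradient-angle bound yields $\angle(-\g_i[k], \y[\infty] - \z_i[k]) \le \phi_i + \theta_i(\epsilon) < \pi/2$, equivalently $\langle \g_i[k], \z_i[k] - \y[\infty]\rangle > 0$. Expanding $\|\x_i[k+1] - \y[\infty]\|^2 = \|\z_i[k] - \y[\infty]\|^2 - 2\eta[k]\langle \g_i[k], \z_i[k] - \y[\infty]\rangle + \eta[k]^2 \|\g_i[k]\|^2$, and using $\|\g_i[k]\| \ge \epsilon/\|\z_i[k]-\x_i^*\|$ (a direct consequence of convexity and $\z_i[k]\notin\mathcal{C}_i(\epsilon)$) together with the quantitative version of the angle estimate, produces a strictly positive cross term that, for sufficiently large $k$, dominates the $\eta[k]^2 L^2$ remainder and gives $\|\x_i[k+1] - \y[\infty]\| \le \|\z_i[k] - \y[\infty]\|$.

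Combining the two cases yields a recursion of the form $T[k+1] \le \max\{r^* + \eta[k] L,\ T[k] + 2\tau[k]\}$ in which the second branch additionally admits a strict decrease whenever $T[k]$ substantially exceeds $r^*$. Standard arguments using $\eta[k] \to 0$, $\sum_k \eta[k] = \infty$, and the summability of $\tau[k]$ then force $\limsup_k T[k] \le r^* = s^*(0, \epsilon)$, and the infimum over $\epsilon > 0$ yields the theorem. The main obstacle I anticipate is establishing a uniform a priori bound on $T[k]$: without one, $\|\z_i[k] - \x_i^*\|$ can grow and weaken the lower bound on $\|\g_i[k]\|$ needed to dominate the $\eta[k]^2 L^2$ remainder in the second case. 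I plan to address this via a separate bootstrapping step that shows the combined distance and min-max filters keep $T[k]$ inside a large fixed ball throughout the run, after which the contraction analysis above takes over.
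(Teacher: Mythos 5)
Your proposal follows essentially the same route as the paper: your bound $\| \z_i[k] - \y[\infty] \| \le \max_{v_j \in \mathcal{R}} \| \x_j[k] - \y[\infty] \| + 2\beta e^{-\alpha k}$ is exactly Proposition~\ref{prop: aux inexact} (via Proposition~\ref{prop: aux convergence} and Lemma~\ref{lem: dist to curr aux}); your tangency-angle estimate and law-of-cosines expansion in the outside case is the content of Lemmas~\ref{lem: grad update} and \ref{lem: Delta lower bound}; and your ``divergent $\sum_k \eta[k]$ beats the summable consensus error'' step is what the paper formalizes through Lemmas~\ref{lem: two bounds}, \ref{lem: update imp} and Proposition~\ref{prop: finite time}. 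The only bookkeeping difference is that the paper carries the slack $\xi$ inside $s^*(\xi,\epsilon)$ so that the inside case (Lemmas~\ref{lem: dist grad update 1} and \ref{lem: dist grad update 2}) lands exactly in the convergence ball once $\eta[k]L \le \xi$, whereas you absorb the $\eta[k]L$ creep into the limsup; both work.

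The genuine gap is the step you defer at the end. Two points. First, your gradient lower bound $\| \g_i[k] \| \ge \epsilon / \| \z_i[k] - \x_i^* \|$ is weaker than needed and creates the circularity you worry about, but it is avoidable: for convex $f_i$ the difference quotient $t \mapsto \big( f_i(\x_i^* + t(\tilde{\x}_i - \x_i^*)) - f_i(\x_i^*) \big)/t$ is nondecreasing, which yields the \emph{uniform} bound $\| \g_i(\x) \| \ge \epsilon/\delta_i(\epsilon) = \underline{L}_i$ for every $\x \notin \mathcal{C}_i(\epsilon)$, independent of $\| \x - \x_i^* \|$; this is exactly how the paper obtains $\underline{L}_i$ in Lemma~\ref{lem: Delta lower bound} (and inequality \eqref{eqn: lb slope} in the proof of Proposition~\ref{prop: grad angle}). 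With that, the order-$\eta[k]$ decrement needs no a priori bound on $T[k]$. Second, your planned bootstrap as stated --- that ``the combined distance and min-max filters keep $T[k]$ inside a large fixed ball'' --- would not go through if attempted literally: the filters only deliver the relative bound of Proposition~\ref{prop: aux inexact}, and since $\sum_k \eta[k] = \infty$ the per-step gradient drift $\eta[k]L$ cannot be summed away on its own. A uniform bound does hold (the paper's $\bar{\phi}$ in Lemma~\ref{lem: two bounds}), but it comes from the recursion itself: a crude bound over the finite transient, containment up to $\eta[k]L$ when $\z_i[k]$ lies inside the convergence ball, your own non-increase estimate when it lies outside (which requires no gradient lower bound, only $\eta[k]L$ small against the geometric factor), and summability of $2\beta e^{-\alpha k}$. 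Note that even after switching to $\underline{L}_i$ some uniform bound is still required: in the squared recursion the perturbation contributes a cross term of size $T[k]\,\beta e^{-\alpha k}$, and the paper's choice of $k_4^*$ in Proposition~\ref{prop: finite time} uses $\bar{\phi}$ precisely to dominate it with $\nu \eta[k]$. So keep the boundedness step, but derive it from the gradient-step geometry plus summability of the consensus error, not from the filters, and replace the ``standard arguments'' by an explicit impossibility/finite-hitting argument of the type in Lemma~\ref{lem: update imp}.
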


The proof of the theorem requires several technical lemmas and propositions, and thus, we provide a proof sketch in Section~\ref{subsec: proof sketch of conv} and a formal proof in the appendix. 

The following theorem, whose proof is provided in Appendix~\ref{sec: proof_true_sol}, provides possible locations of the true minimizer $\x^*$, which is in fact inside the convergence region, even in the presence of adversarial agents.
\begin{theorem} \label{thm: true_sol}
Let $\x^*$ be a solution of Problem~\eqref{prob: regular node}. If Assumptions~\ref{asm: convex} and \ref{asm: gradient_bound} hold, then $\x^* \in \B \big( \y [\infty], \; \inf_{\epsilon > 0} s^*( 0, \epsilon) \big)$.
\end{theorem}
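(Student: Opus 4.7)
The plan is to argue by contradiction using a first-order optimality condition combined with a geometric argument that controls the angle between the subgradients at $\x^*$ and the direction from $\x^*$ to $\y[\infty]$. Suppose $\| \x^* - \y[\infty] \| > \inf_{\epsilon > 0} s^*(0, \epsilon)$. Then there exists a particular $\epsilon > 0$ for which $\| \x^* - \y[\infty] \| > s^*(0, \epsilon)$, so for \emph{every} $v_i \in \mathcal{R}$ we simultaneously have $\| \x^* - \y[\infty] \| > \tilde{R}_i + \delta_i(\epsilon)$ and $\| \x^* - \y[\infty] \| > \tilde{R}_i \sec\theta_i(\epsilon)$. The first of these inequalities, combined with $\tilde{R}_i = \| \x_i^* - \y[\infty] \|$ and the triangle inequality, yields $\| \x^* - \x_i^* \| > \delta_i(\epsilon)$, so by \eqref{def: delta_i} we conclude $\x^* \notin \mathcal{C}_i(\epsilon)$. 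Proposition~\ref{prop: grad angle} then guarantees that for every subgradient $\tilde{\g}_i(\x^*) \in \partial f_i(\x^*)$,
\begin{equation*}
    \angle\bigl( -\tilde{\g}_i(\x^*), \; \x_i^* - \x^* \bigr) \leq \theta_i(\epsilon).
\end{equation*}

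Next I would convert the second inequality into an angle bound. Set $\v := \y[\infty] - \x^*$ and let $\phi_i := \angle(\v, \x_i^* - \x^*)$. Since $\x_i^*$ lies on the sphere $\partial \B(\y[\infty], \tilde{R}_i)$, the classical tangent-cone estimate gives $\sin \phi_i \leq \tilde{R}_i / \| \v \|$ (with $\phi_i < \pi/2$ because $\| \v \| > \tilde{R}_i$). The hypothesis $\| \v \| > \tilde{R}_i \sec \theta_i(\epsilon)$ then yields $\sin \phi_i < \cos \theta_i(\epsilon) = \sin(\pi/2 - \theta_i(\epsilon))$, and hence $\phi_i < \pi/2 - \theta_i(\epsilon)$. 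By the triangle inequality for angles between unit vectors,
\begin{equation*}
    \angle\bigl( \v, \; -\tilde{\g}_i(\x^*) \bigr) \leq \phi_i + \theta_i(\epsilon) < \frac{\pi}{2},
\end{equation*}
so $\langle -\tilde{\g}_i(\x^*), \v \rangle > 0$, equivalently $\langle \tilde{\g}_i(\x^*), \v \rangle < 0$, for every regular $v_i$ and every choice of subgradient at $\x^*$.

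To reach the contradiction I invoke the first-order optimality condition. Because $\x^*$ minimizes $\frac{1}{|\mathcal{R}|} \sum_{v_i \in \mathcal{R}} f_i$ with each $f_i$ convex, the Moreau--Rockafellar sum rule for subdifferentials gives $\0 \in \sum_{v_i \in \mathcal{R}} \partial f_i(\x^*)$, so there exist selections $\tilde{\g}_i(\x^*) \in \partial f_i(\x^*)$ with $\sum_{v_i \in \mathcal{R}} \tilde{\g}_i(\x^*) = \0$. Pairing with $\v$ yields $\sum_{v_i \in \mathcal{R}} \langle \tilde{\g}_i(\x^*), \v \rangle = 0$, contradicting the strict negativity of every summand shown above. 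Therefore $\|\x^* - \y[\infty]\| \leq \inf_{\epsilon > 0} s^*(0,\epsilon)$, as claimed.

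The main obstacle is the geometric step in the second paragraph: the tangent-cone estimate $\sin \phi_i \leq \tilde{R}_i / \| \v \|$ is intuitive but must be justified carefully (it is the largest angle at $\x^*$ subtended by the closed ball $\B(\y[\infty], \tilde{R}_i)$), and the triangle inequality on the unit sphere must be applied with $\phi_i$ and $\theta_i(\epsilon)$ both strictly less than $\pi/2$ so that their sum remains below $\pi/2$. Everything else is straightforward bookkeeping once these two angle bounds are in hand.
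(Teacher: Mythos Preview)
Your proposal is correct and follows essentially the same route as the paper's proof: the paper likewise picks $\epsilon>0$ with $\|\x_0-\y[\infty]\|>s^*(0,\epsilon)$, deduces $\x_0\notin\mathcal{C}_i(\epsilon)$, bounds $\angle(\x_0-\x_i^*,\x_0-\y[\infty])$ by $\arcsin(\tilde R_i/\|\x_0-\y[\infty]\|)$ (this is exactly their Lemma~\ref{lem: max_angle}, your ``tangent-cone estimate''), combines with Proposition~\ref{prop: grad angle} via the angle triangle inequality (they cite \cite[Corollary~12]{castano2016angles} for this), and concludes that every $\g_i(\x_0)$ makes a strictly positive inner product with $\x_0-\y[\infty]$, so $\0\notin\partial f(\x_0)$. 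The only cosmetic difference is that you phrase the last step as a contradiction via Moreau--Rockafellar at the optimizer, whereas the paper phrases it contrapositively (no point outside the ball can be a minimizer).
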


Theorem~\ref{thm: convergence} and Theorem~\ref{thm: true_sol} show that both Algorithms~\ref{alg: DMM_filter} and \ref{alg: D_filter} cause all regular nodes to converge to a region that also contains the true solution, regardless of the actions of any $F$-local set of Byzantine adversaries.  The size of this region scales with the quantity $\inf_{\epsilon > 0} s^*(0, \epsilon)$. Loosely speaking, this quantity becomes smaller as the minimizers of the local functions of the regular agents get closer together.  
More specifically, consider a fixed $\epsilon \in \R_{>0}$. If the functions $f_i (\x)$ are translated so that the minimizers $\x_i^*$ get closer together (i.e., $\Tilde{R}_i$ is smaller while $\theta_i (\epsilon)$ and $\delta_i (\epsilon)$ are fixed), then $s^*(0, \epsilon)$ also decreases. Consequently, the state $\x_i[k]$ is guaranteed to become closer to the true minimizer $\x^*$ as $k$ goes to infinity. Figure~\ref{fig: quantity illus} illustrates the key quantities outlined in the main theorems. A detailed discussion of the convergence region is further provided in Section \ref{subsec: convergence ball}.

\begin{figure}[ht]
    \centering
    \includegraphics[width=0.45\textwidth]{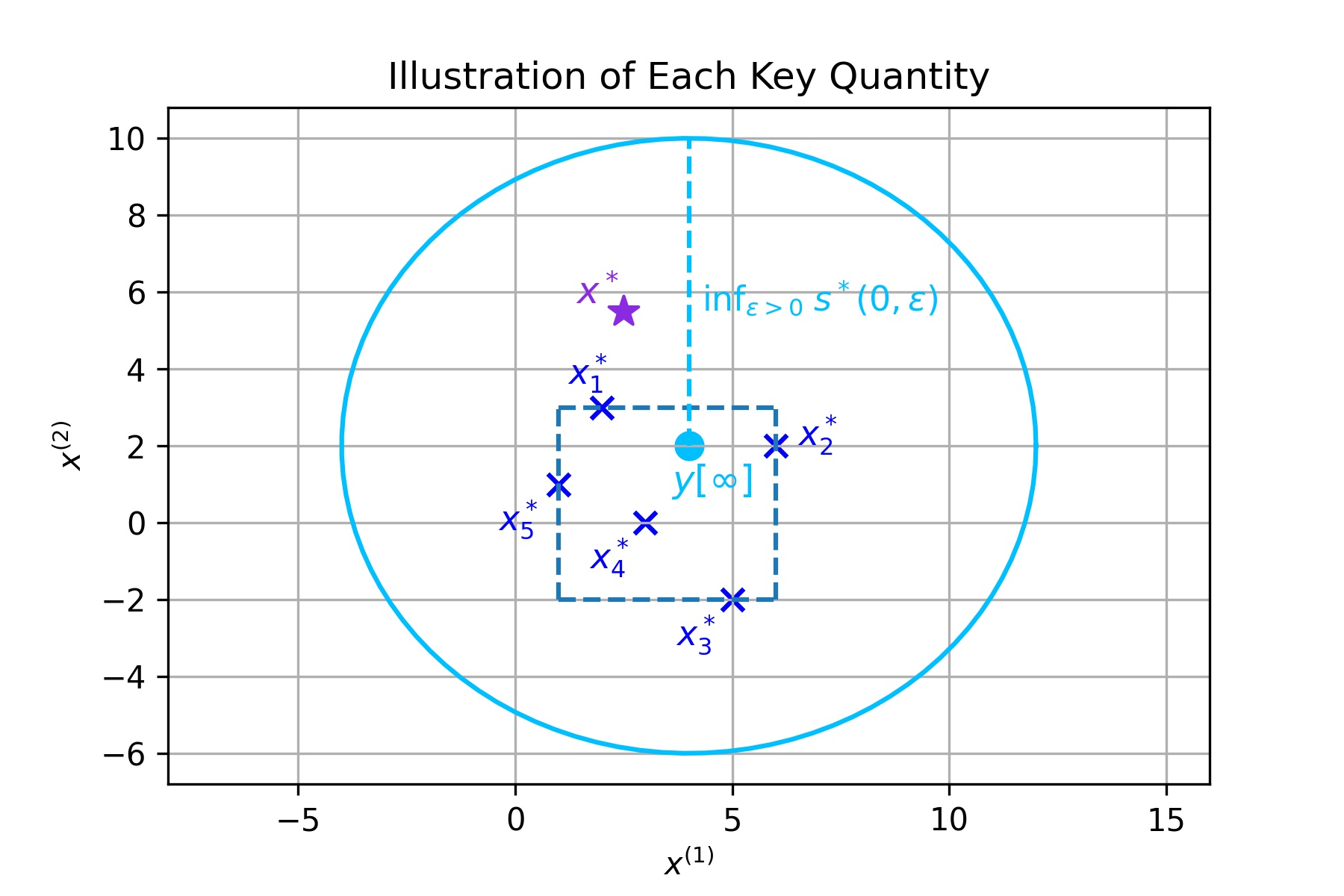}
    \caption{The local minimizers $\x_i^*$ and the global minimizer $\x^*$ are shown in the plot. The estimated auxiliary point $\y [\infty]$ is in the rectangle formed by the local minimizers (Proposition~\ref{prop: aux convergence}) whereas the global minimizer $\x^*$ is not necessarily in the rectangle \cite{kuwaran2018location}. However, the ball centered at $\y [\infty]$ with radius $\inf_{\epsilon > 0} s^*(0, \epsilon)$ contains both the supremum limit of the state vectors $\x_i[k]$ and the global minimizer $\x^*$ (Theorem~\ref{thm: convergence} and \ref{thm: true_sol}).}
    \label{fig: quantity illus}
\end{figure}

We would like to highlight the scalability of our algorithms in terms of both computational complexity and graph robustness requirements, specifically in relation to the number of dimensions $d$. Algorithms~\ref{alg: DMM_filter} and \ref{alg: D_filter} exhibit computational complexities of $\Tilde{\mathcal{O}}(d^2)$ and $\Tilde{\mathcal{O}}(d)$ operations per agent per iteration, respectively. A detailed calculation is provided in Section~\ref{subsec: complexity}. Furthermore, they impose robustness requirements of $\mathcal{O}(d)$ and $\mathcal{O}(1)$ to achieve the convergence result, as demonstrated in Theorem~\ref{thm: convergence}. While Algorithm~\ref{alg: D_filter} is more scalable, it lacks a consensus guarantee, unlike Algorithm~\ref{alg: DMM_filter} (refer to Theorem~\ref{thm: consensus}). Further insights and discussions on this topic are presented in Section~\ref{subsec: tradeoff} and Remark~\ref{rem: redundancy}.

\subsection{Proof Sketch of the Convergence Theorem} 
\label{subsec: proof sketch of conv}

We work towards the proof of Theorem \ref{thm: convergence} in several steps,  which we provide an overview below. The proofs of the intermediate results presented in this section are provided in Appendices~\ref{sec: proof_main_1}, \ref{sec: proof_main_2} and \ref{sec: proof_main_3}.  

For the subsequent analysis, we suppose that the graph $\mathcal{G}$
\begin{itemize}
    \item is $((2d + 1)F + 1)$-robust for Algorithm~\ref{alg: DMM_filter}, and
    \item is $(2F + 1)$-robust for Algorithm~\ref{alg: D_filter}.
\end{itemize}
Furthermore, unless stated otherwise, we will fix $\xi \in \R_{>0}$ and $\epsilon \in \R_{>0}$, and hide the dependence of $\xi$ and $\epsilon$ in $\delta_i (\epsilon)$ and $s^*( \xi, \epsilon )$ by denoting them as $\delta_i$ and $s^*$, respectively.

\subsubsection{Gradient Update Step Analysis}  
\label{subsec: grad_update analysis}

First, we consider the update from the intermediate states $\{ \z_i[k] \}_{\mathcal{R}}$ to the states $\{ \x_i[k+1] \}_{\mathcal{R}}$ via the gradient step \eqref{eqn: grad dynamic} (i.e., \textbf{Line~10}).
In particular, we provide a relationship between $\| \z_i [k] - \y [\infty] \|$ and $\| \x_i [k+1] - \y [\infty] \|$ for three different cases:
\begin{itemize}
    \item $\| \z_i [k] - \y [\infty] \| \in \big[ 0, \; \max_{v_j \in \mathcal{R}} \{ \tilde{R}_j + \delta_j \} \big]$,
    \item $\| \z_i [k] - \y [\infty] \| \in \big( \max_{v_j \in \mathcal{R}} \{ \tilde{R}_j + \delta_j \}, \; s^* \big]$,
    \item $\| \z_i [k] - \y [\infty] \| \in ( s^* , \infty )$.
\end{itemize}

The corresponding formal statements are presented as follows.
Lemma~\ref{lem: dist grad update 1} below essentially says that if $k$ is sufficiently large and $\z_i[k] \in \B( \y[\infty], \max_{v_i \in \mathcal{R}} \{ \Tilde{R}_i + \delta_i \} )$, then after applying the gradient update \eqref{eqn: grad dynamic}, the state $\x_i[k + 1]$ will still be in the convergence ball.
To establish the result, let $k^*_1 \in \N$ be a time-step such that $\eta[k^*_1] \leq \frac{\xi}{L}$.
\begin{lemma}  \label{lem: dist grad update 1} 
Suppose Assumptions~\ref{asm: gradient_bound}-\ref{asm: weight matrices} hold.
For all $v_i \in \mathcal{R}$ and $k \geq k^*_1$, if 
$\z_i[k] \in \B \big( \y[\infty], \; \max_{v_j \in \mathcal{R}} \{ \Tilde{R}_j + \delta_j \} \big)$ then $\x_i[k+1] \in \B (\y [\infty], s^*)$.  
\end{lemma}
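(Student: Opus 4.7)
The plan is to apply the triangle inequality to the gradient update equation \eqref{eqn: grad dynamic} and then exploit the bounded subgradient assumption together with the step-size choice at $k^*_1$. Specifically, I would start by writing
\begin{equation*}
    \| \x_i[k+1] - \y[\infty] \| = \| \z_i[k] - \eta[k] \, \g_i[k] - \y[\infty] \| \leq \| \z_i[k] - \y[\infty] \| + \eta[k] \, \| \g_i[k] \|.
\end{equation*}
The first term is controlled by hypothesis by $\max_{v_j \in \mathcal{R}} \{ \tilde{R}_j + \delta_j \}$, and the second term is where the assumptions on the gradient and step-size come in.

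Next I would bound $\| \g_i[k] \| \leq L$ using Assumption~\ref{asm: gradient_bound}, and then use the monotonicity of $\eta[k]$ given by Assumption~\ref{asm: step-size} (since $\eta[k]$ from \eqref{eqn: step-size} is decreasing in $k$) together with the defining property of $k^*_1$, namely $\eta[k^*_1] \leq \xi / L$. This gives $\eta[k] \, \| \g_i[k] \| \leq \eta[k^*_1] \cdot L \leq \xi$ for all $k \geq k^*_1$. Combining, I obtain
\begin{equation*}
    \| \x_i[k+1] - \y[\infty] \| \leq \max_{v_j \in \mathcal{R}} \{ \tilde{R}_j + \delta_j \} + \xi.
\end{equation*}

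Finally, I would unpack the definition of the convergence radius in \eqref{def: conv radius}: since $s^* = \max_{v_j \in \mathcal{R}} \bigl\{ \max \{ \tilde{R}_j \sec\theta_j(\epsilon), \, \tilde{R}_j + \delta_j \} \bigr\} + \xi$, the inner $\max$ trivially dominates $\tilde{R}_j + \delta_j$, so $s^* \geq \max_{v_j \in \mathcal{R}} \{ \tilde{R}_j + \delta_j \} + \xi$. Chaining this with the previous inequality yields $\| \x_i[k+1] - \y[\infty] \| \leq s^*$, i.e., $\x_i[k+1] \in \B(\y[\infty], s^*)$, as required.

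There is no real obstacle here — the only subtlety is recognizing that this regime of the trichotomy (the ball where the hypothesis places $\z_i[k]$) is precisely the one where the gradient step cannot push $\x_i[k+1]$ far because the tail bound $\eta[k] L \leq \xi$ absorbs any movement, and that the $\tilde{R}_j + \delta_j$ branch in the definition of $s^*$ is exactly what is needed to close the bound. The more delicate cases, where $\z_i[k]$ lies outside this inner ball and one must use the angle $\theta_i(\epsilon)$ from Proposition~\ref{prop: grad angle} to show the gradient step makes genuine progress toward $\x_i^*$, are presumably handled in subsequent lemmas; this lemma is just the easy regime.
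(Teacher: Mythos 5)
Your proof is correct and follows essentially the same route as the paper's: triangle inequality on the gradient update, the bound $\eta[k]\,\|\g_i[k]\| \leq \eta[k_1^*] L \leq \xi$ from the monotone step-size and the definition of $k_1^*$, and the observation that the $\tilde{R}_j + \delta_j$ branch in the definition of $s^*$ closes the bound. No gaps.
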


Lemma~\ref{lem: grad update}, based on Proposition~\ref{prop: grad angle}, analyzes the relationship between $\| \z_i[k] - \y[\infty] \|$ and $\| \x_i[k+1] - \y[\infty] \|$ when $\| \z_i[k] - \y[\infty] \| > \Tilde{R}_i + \delta_i$. The result will be used to prove Lemma \ref{lem: dist grad update 2}. 

For $v_i \in \mathcal{R}$, define $\Delta_i: [ \Tilde{R}_i, \infty) \times \R_{\geq 0} \to \R$ to be the function
\begin{equation}
    \Delta_i( p, l ) := 2 l  \big(\sqrt{ p^2 - \Tilde{R}_i^2 } \cos \theta_i - \Tilde{R}_i \sin \theta_i \big) - l^2.  
    \label{def: delta}
\end{equation}
\begin{lemma}  \label{lem: grad update}
Suppose Assumptions~\ref{asm: convex}, \ref{asm: gradient_bound}, \ref{asm: robust} and \ref{asm: weight matrices} hold.
For all $v_i \in \mathcal{R}$ and $k \in \N$, if $\| \z_i[k] - \y[\infty] \| > \Tilde{R}_i + \delta_i$ then 
\begin{equation}
    \| \x_i[k+1] - \y[\infty] \|^2 
    \leq \| \z_i[k] - \y[\infty] \|^2 
    - \Delta_i( \| \z_i[k] - \y[\infty] \|, \; \eta[k] \; \| \g_i[k] \| ),
    \label{eqn: bound second case}
\end{equation}  
where $\g_i[k] \in \R^d$ is defined in \eqref{eqn: grad dynamic}.
\end{lemma}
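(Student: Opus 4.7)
\medskip

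\noindent\textbf{Proof proposal.} The plan is to expand the squared Euclidean distance $\|\x_i[k+1]-\y[\infty]\|^2$ using the gradient step \eqref{eqn: grad dynamic}, and then obtain a lower bound on the cross term $\langle \g_i[k],\z_i[k]-\y[\infty]\rangle$ via a purely geometric argument that combines the angle bound from Proposition~\ref{prop: grad angle} with the triangle relating $\y[\infty]$, $\x_i^*$, and $\z_i[k]$.

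First I would verify that Proposition~\ref{prop: grad angle} is applicable at $\z_i[k]$. By the triangle inequality and the definition $\tilde{R}_i=\|\x_i^*-\y[\infty]\|$, the hypothesis $\|\z_i[k]-\y[\infty]\|>\tilde{R}_i+\delta_i$ gives $\|\z_i[k]-\x_i^*\|>\delta_i$, so $\z_i[k]\notin\B(\x_i^*,\delta_i)\supseteq\mathcal{C}_i(\epsilon)$. Proposition~\ref{prop: grad angle} then yields $\angle(-\g_i[k],\x_i^*-\z_i[k])\leq\theta_i$, equivalently $\angle(\g_i[k],\z_i[k]-\x_i^*)\leq\theta_i$. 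Expanding the squared norm of $\x_i[k+1]-\y[\infty]=\z_i[k]-\y[\infty]-\eta[k]\g_i[k]$ gives
\begin{equation*}
    \|\x_i[k+1]-\y[\infty]\|^2
    =\|\z_i[k]-\y[\infty]\|^2-2\eta[k]\langle \g_i[k],\z_i[k]-\y[\infty]\rangle+\eta^2[k]\|\g_i[k]\|^2,
\end{equation*}
so the task reduces to showing $\langle \g_i[k],\z_i[k]-\y[\infty]\rangle\geq \|\g_i[k]\|\bigl(\sqrt{p^2-\tilde{R}_i^2}\cos\theta_i-\tilde{R}_i\sin\theta_i\bigr)$, where $p=\|\z_i[k]-\y[\infty]\|$.

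The key geometric step is to bound the angle $\psi:=\angle(\z_i[k]-\x_i^*,\,\z_i[k]-\y[\infty])$. Since $\x_i^*$ lies on the sphere of radius $\tilde{R}_i$ around $\y[\infty]$ while $\z_i[k]$ lies strictly outside that sphere, the ray from $\z_i[k]$ to $\x_i^*$ must lie inside the cone from $\z_i[k]$ tangent to $\B(\y[\infty],\tilde{R}_i)$; by the law of sines (or equivalently tangency), this cone has half-angle $\arcsin(\tilde{R}_i/p)$, hence $\sin\psi\leq \tilde{R}_i/p$ and $\cos\psi\geq \sqrt{p^2-\tilde{R}_i^2}/p$ (note $\psi\in[0,\pi/2)$ since $p>\tilde{R}_i$). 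Combining this with the angle from Proposition~\ref{prop: grad angle} via the triangle inequality for angles gives $\angle(\g_i[k],\z_i[k]-\y[\infty])\leq \theta_i+\psi$, and since $\cos\theta_i,\sin\theta_i\geq 0$,
\begin{equation*}
    \langle \g_i[k],\z_i[k]-\y[\infty]\rangle
    \geq \|\g_i[k]\|\,p\,\cos(\theta_i+\psi)
    \geq \|\g_i[k]\|\bigl(\sqrt{p^2-\tilde{R}_i^2}\cos\theta_i-\tilde{R}_i\sin\theta_i\bigr).
\end{equation*}
Substituting this lower bound into the expansion and identifying $l=\eta[k]\|\g_i[k]\|$ yields exactly \eqref{eqn: bound second case} with $\Delta_i$ as defined in \eqref{def: delta}.

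The main obstacle is the geometric bound $\sin\psi\leq \tilde{R}_i/p$ and its clean use together with the angle inequality: one has to be careful that $\theta_i+\psi$ can exceed $\pi/2$ in principle, but the final scalar inequality $p\cos(\theta_i+\psi)\geq \sqrt{p^2-\tilde{R}_i^2}\cos\theta_i-\tilde{R}_i\sin\theta_i$ follows directly from the cosine-addition formula together with $p\cos\psi\geq \sqrt{p^2-\tilde{R}_i^2}$ and $p\sin\psi\leq \tilde{R}_i$, with no sign restriction needed. The rest of the argument is bookkeeping on the gradient-step expansion.
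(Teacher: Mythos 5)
Your proposal is correct and follows essentially the same route as the paper's proof: the paper likewise bounds $\angle(\x_i^*-\z_i[k],\,\y[\infty]-\z_i[k])$ by $\arcsin(\tilde R_i/\|\z_i[k]-\y[\infty]\|)$ via tangency to $\B(\y[\infty],\tilde R_i)$ (its Lemma~\ref{lem: max_angle}), adds the angle $\theta_i$ from Proposition~\ref{prop: grad angle} using the triangle inequality for angles, and concludes with the law of cosines on the triangle $\x_i[k+1]$, $\z_i[k]$, $\y[\infty]$, which is exactly your squared-norm expansion. Your explicit verification that $\z_i[k]\notin\mathcal{C}_i(\epsilon)$ is a detail the paper leaves implicit, but it does not change the argument.
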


Similar to Lemma~\ref{lem: dist grad update 1}, Lemma~\ref{lem: dist grad update 2} below states that if $k$ is sufficiently large and $\| \z_i[k] - \y[\infty] \| \in \big( \max_{v_i \in \mathcal{R}} \{ \Tilde{R}_i + \delta_i \}, \; s^* \big]$ then by applying the gradient step \eqref{eqn: grad dynamic}, we have that the state $\x_i[k + 1]$ is still in the convergence ball.

To simplify the notations, define
\begin{equation}
    a_i^\pm := -\Tilde{R}_i \sin \theta_i \pm \sqrt{ ( s^* )^2 - \Tilde{R}_i^2 \cos^2 \theta_i }  
    \quad \text{and} \quad
    b_i := 2 \big( \sqrt{ ( s^* )^2 - \Tilde{R}_i^2 } \cos \theta_i - \Tilde{R}_i \sin \theta_i \big).  
    \label{def: a_i & b_i}
\end{equation}
Let $k_2^* \in \N$ be a time-step such that $\eta[k_2^*] \leq \frac{1}{L} \min_{v_i \in \mathcal{R}} \big\{ \min \{ a_i^+, \; b_i \} \big\}$.
\begin{lemma}  \label{lem: dist grad update 2}
Suppose Assumptions~\ref{asm: convex}-\ref{asm: weight matrices} hold.
For all $v_i \in \mathcal{R}$ and $k \geq k_2^*$, 
if $\| \z_i[k] - \y[\infty] \| \in \big( \max_{v_j \in \mathcal{R}} \{ \Tilde{R}_j + \delta_j \}, \; s^* \big]$ 
then $\| \x_i[k+1] - \y [\infty] \| \in [0, \; s^*]$.   
\end{lemma}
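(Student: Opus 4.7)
The plan is to invoke Lemma~\ref{lem: grad update}, reduce the resulting bound to a one-dimensional strictly convex function via a change of variable, and then check only the two endpoint conditions, which turn out to be exactly the inequalities $l \leq a_i^+$ and $l \leq b_i$ baked into the choice of $k_2^*$.

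Set $p := \|\z_i[k] - \y[\infty]\|$ and $l := \eta[k]\|\g_i[k]\|$. The hypothesis $p > \max_j\{\tilde R_j + \delta_j\} \geq \tilde R_i + \delta_i$ activates Lemma~\ref{lem: grad update}, so $\|\x_i[k+1] - \y[\infty]\|^2 \leq p^2 - \Delta_i(p, l)$. Monotonicity of $\eta[\cdot]$ (Assumption~\ref{asm: step-size}), the subgradient bound $\|\g_i[k]\| \leq L$ (Assumption~\ref{asm: gradient_bound}), and the defining inequality of $k_2^*$ together yield $l \leq \eta[k_2^*]\,L \leq \min\{a_i^+, b_i\}$.

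Next I would substitute $u := \sqrt{p^2 - \tilde R_i^2}$, which is well defined since $p > \tilde R_i + \delta_i > \tilde R_i$, and lies in $[0, U]$ with $U := \sqrt{(s^*)^2 - \tilde R_i^2}$ because $p \leq s^*$. Expanding the definition in \eqref{def: delta} rewrites the bound as $p^2 - \Delta_i(p, l) = h(u)$, where
\[
h(u) \;:=\; u^2 - 2 l u \cos\theta_i + l^2 + 2 l \tilde R_i \sin\theta_i + \tilde R_i^2
\]
is strictly convex in $u$ (its second derivative is $2$). Hence $h(u) \leq \max\{h(0), h(U)\}$ on $[0, U]$, and it suffices to control the two endpoint values.

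The remaining step is a short algebraic rearrangement. The inequality $h(0) \leq (s^*)^2$ is equivalent to $(l + \tilde R_i \sin\theta_i)^2 \leq (s^*)^2 - \tilde R_i^2 \cos^2\theta_i$, which, by the definition of $a_i^+$ in \eqref{def: a_i & b_i} (equivalently, $(a_i^+ + \tilde R_i \sin\theta_i)^2 = (s^*)^2 - \tilde R_i^2 \cos^2\theta_i$), is precisely $l \leq a_i^+$; while $h(U) \leq (s^*)^2$ collapses to $l(l - b_i) \leq 0$, i.e., $l \leq b_i$. Both hold by the earlier bound on $l$, so $\|\x_i[k+1] - \y[\infty]\|^2 \leq (s^*)^2$, completing the argument. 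I do not anticipate a real obstacle; the only subtle point is spotting this convex one-variable reformulation and recognizing that $a_i^+$ and $b_i$ were defined exactly so that the two endpoint inequalities are tight at the target radius $s^*$.
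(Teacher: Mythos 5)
Your proof is correct and follows essentially the same route as the paper: apply Lemma~\ref{lem: grad update}, exploit convexity of the resulting bound in the distance variable to reduce to the two endpoint checks at $\tilde R_i$ and $s^*$, and observe these are exactly $l \leq a_i^+$ and $l \leq b_i$, guaranteed by the definition of $k_2^*$. The only cosmetic difference is your substitution $u = \sqrt{p^2 - \tilde R_i^2}$, which makes the bound a quadratic in $u$ (so convexity is immediate), whereas the paper verifies convexity of $\Gamma_i(p,l)$ in $p$ directly via its second derivative.
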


The following lemma is useful for bounding the term $\Delta_i$ appeared in \eqref{eqn: bound second case} for the case that $\| \z_i[k] - \y[\infty] \| > s^*$.

Define the set of agents
\begin{equation}
    \mathcal{I}_z[k] := \{ v_i \in \mathcal{R} : \| \z_i[k] - \y[\infty] \| > s^* \}, 
    \label{def: I outside s}
\end{equation}
and let $k_3^* \in \N$ be a time-step such that $\eta[k_3^*] \leq \frac{1}{2L} \min_{v_i \in \mathcal{R}} b_i$.
\begin{lemma}  \label{lem: Delta lower bound}
If Assumptions~\ref{asm: convex}-\ref{asm: weight matrices} hold then
for all $k \geq k^*_3$ and $v_i \in \mathcal{I}_z[k]$, 
\begin{align*}
    \Delta_i( \| \z_i[k] - \y[\infty] \|, \; \eta[k] \; \| \g_i[k] \| ) > \frac{1}{2} b_i \underline{L}_i \eta[k],
\end{align*}
where $\Delta_i$ and $\g_i[k]$ are defined in \eqref{def: delta} and \eqref{eqn: grad dynamic}, respectively, and $\underline{L}_i := \frac{\epsilon}{\delta_i(\epsilon)} > 0$.
\end{lemma}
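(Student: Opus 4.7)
The plan is to decompose $\Delta_i(p, l)$ with $p := \|\z_i[k] - \y[\infty]\|$ and $l := \eta[k]\|\g_i[k]\|$ into $l$ times a geometric factor that is strictly lower bounded by $b_i/2$, and then to combine this with a subgradient-based lower bound $\|\g_i[k]\| \geq \underline{L}_i$.

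For the geometric step, the condition $v_i \in \mathcal{I}_z[k]$ means $p > s^*$, and \eqref{def: conv radius} gives $s^* \geq \tilde{R}_i \sec\theta_i \geq \tilde{R}_i$; squaring and rearranging yields $\sqrt{p^2 - \tilde{R}_i^2}\cos\theta_i > \sqrt{(s^*)^2 - \tilde{R}_i^2}\cos\theta_i > \tilde{R}_i\sin\theta_i$, so the bracket inside \eqref{def: delta} evaluated at $p$ is strictly greater than $b_i/2$. This gives
\[
\Delta_i(p, l) \;>\; l\, b_i - l^2 \;=\; l(b_i - l).
\]
By Assumption~\ref{asm: gradient_bound}, $l \leq \eta[k]\, L$, and by the monotonicity in Assumption~\ref{asm: step-size} together with the choice $\eta[k_3^*] \leq b_i/(2L)$, I get $l \leq b_i/2$ for all $k \geq k_3^*$. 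Hence $b_i - l \geq b_i/2$ and $\Delta_i(p, l) > \tfrac{1}{2}\,b_i\,\eta[k]\,\|\g_i[k]\|$.

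It remains to show $\|\g_i[k]\| \geq \underline{L}_i = \epsilon/\delta_i(\epsilon)$. Since $s^* \geq \tilde{R}_i + \delta_i$ by \eqref{def: conv radius} and $\tilde{R}_i = \|\x_i^* - \y[\infty]\|$ by \eqref{def: tilde R_i}, the triangle inequality yields $\|\z_i[k] - \x_i^*\| \geq p - \tilde{R}_i > \delta_i$, so by \eqref{def: delta_i} the iterate $\z_i[k]$ lies strictly outside $\mathcal{C}_i(\epsilon)$ and $f_i(\z_i[k]) > f_i(\x_i^*) + \epsilon$. Because $\mathcal{C}_i(\epsilon)$ is closed and convex with $\x_i^* \in \mathcal{C}_i(\epsilon)$ and $\z_i[k] \notin \mathcal{C}_i(\epsilon)$, the segment joining them meets $\partial \mathcal{C}_i(\epsilon)$ at some $\z_0 = (1 - t)\x_i^* + t\z_i[k]$ with $t \in (0, 1)$ and $\|\z_0 - \x_i^*\| \leq \delta_i$, so $1/t \geq \|\z_i[k] - \x_i^*\|/\delta_i$. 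Applying the subgradient inequality at $\z_0$ to the point $\x_i^*$ and using $f_i(\z_0) = f_i(\x_i^*) + \epsilon$ gives $\tilde{\g}^{(0),T}(\z_0 - \x_i^*) \geq \epsilon$ for any $\tilde{\g}^{(0)} \in \partial f_i(\z_0)$; subgradient monotonicity between $\z_0$ and $\z_i[k]$ (applied along the collinear direction $\z_i[k] - \x_i^*$ and dividing by $1 - t > 0$) transfers this to $\g_i[k]^T(\z_i[k] - \x_i^*) \geq (1/t)\,\tilde{\g}^{(0),T}(\z_0 - \x_i^*) \geq \epsilon\,\|\z_i[k] - \x_i^*\|/\delta_i$. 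Cauchy-Schwarz and division by $\|\z_i[k] - \x_i^*\| > 0$ yield $\|\g_i[k]\| \geq \underline{L}_i$, completing the bound.

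I expect the main technical obstacle to be the last step: one must carefully justify the existence and properties of the boundary point $\z_0$ (closedness and convexity of the sublevel set, so that the segment crossing is well-defined) and then verify that the subgradient monotonicity argument correctly transports the lower bound from $\z_0$ onto the particular subgradient $\g_i[k] \in \partial f_i(\z_i[k])$ used in \eqref{eqn: grad dynamic}. The geometric-factor estimate is routine once one observes that $p \mapsto \sqrt{p^2 - \tilde{R}_i^2}\cos\theta_i - \tilde{R}_i\sin\theta_i$ is monotone increasing for $p > \tilde{R}_i\sec\theta_i$, which is exactly the regime forced by $p > s^*$.
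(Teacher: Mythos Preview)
Your proof is correct and follows essentially the same approach as the paper. The only organizational differences are that the paper first reduces to $\Delta_i(s^*,l)$ via monotonicity of $\Delta_i$ in $p$ and then invokes an equivalence to get $\Delta_i(s^*,l)\ge \tfrac{b_i}{2}l$, whereas you bound the linear coefficient of $\Delta_i$ directly; and for $\|\g_i[k]\|\ge \underline{L}_i$ the paper cites inequality~\eqref{eqn: lb slope} from the proof of Proposition~\ref{prop: grad angle} together with the subgradient inequality, while you re-derive the same bound via the boundary point $\z_0$ and subgradient monotonicity.
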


Note that the quantity $\underline{L}_i$ defined above can be interpreted as a lower bound on a subgradient of the function $f_i (\x)$ when $\x \notin \mathcal{C}_i (\epsilon)$. 

Lemmas~\ref{lem: dist grad update 1}-\ref{lem: Delta lower bound} collectively establish the complete relationship governing the update from $\{ \z_i[k] \}_{\mathcal{R}}$ to $\{ \x_i[k+1] \}_{\mathcal{R}}$, which will be used to prove Lemma~\ref{lem: two bounds}.

\subsubsection{Bounds on States of Regular Agents}  
 \label{subsec: bounds}

Next, we consider the update from the states $\{ \x_i[k] \}_{\mathcal{R}}$ to the intermediate states $\{ \z_i[k] \}_{\mathcal{R}}$ via two filtering steps (\textbf{Lines~7 and 8}) and the weighted average step (\textbf{Line~9}). In particular, utilizing Lemma~\ref{lem: dist to curr aux}, we derive the following relationship.
\begin{proposition} \label{prop: aux inexact}
If Assumptions~\ref{asm: robust} and \ref{asm: weight matrices} hold, then
for all $k \in \N$ and $v_i \in \mathcal{R}$, it holds that
\begin{align*}
    \| \z_i[k] - \y[\infty] \| 
    \leq \max_{v_j \in \mathcal{R}} \| \x_j[k] - \y[\infty] \| 
    + 2 \| \y_i[k] - \y[\infty] \|.
\end{align*}
\end{proposition}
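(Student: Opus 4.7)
The plan is to apply Lemma~\ref{lem: dist to curr aux} with a suitable choice of $R_i[k]$ and then combine it with two uses of the triangle inequality. The key observation is that Lemma~\ref{lem: dist to curr aux} gives control over $\|\z_i[k] - \y_i[k]\|$ in terms of the distances from the regular neighbors' states to $v_i$'s \emph{current} auxiliary point $\y_i[k]$, whereas the target bound is stated with respect to the \emph{limiting} auxiliary point $\y[\infty]$. Bridging these two reference points will incur a factor of $\|\y_i[k] - \y[\infty]\|$ twice, which is exactly why the coefficient $2$ appears in the claim.

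Concretely, I would fix $k \in \N$ and $v_i \in \mathcal{R}$, and set
\begin{equation*}
    R_i[k] := \max_{v_j \in (\mathcal{N}_i^{\text{in}} \cap \mathcal{R}) \cup \{v_i\}} \| \x_j[k] - \y_i[k] \|.
\end{equation*}
The first step is to note that for every such regular $v_j$, the triangle inequality yields
\begin{equation*}
    \| \x_j[k] - \y_i[k] \|
    \leq \| \x_j[k] - \y[\infty] \| + \| \y[\infty] - \y_i[k] \|
    \leq \max_{v_l \in \mathcal{R}} \| \x_l[k] - \y[\infty] \| + \| \y_i[k] - \y[\infty] \|,
\end{equation*}
so $R_i[k]$ is itself bounded by $\max_{v_l \in \mathcal{R}} \| \x_l[k] - \y[\infty] \| + \| \y_i[k] - \y[\infty] \|$. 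The second step invokes Lemma~\ref{lem: dist to curr aux} (whose hypotheses on graph robustness and weights coincide with those of this proposition under each algorithm) to conclude $\| \z_i[k] - \y_i[k] \| \leq R_i[k]$. The third step applies the triangle inequality a second time to transfer from $\y_i[k]$ to $\y[\infty]$:
\begin{equation*}
    \| \z_i[k] - \y[\infty] \|
    \leq \| \z_i[k] - \y_i[k] \| + \| \y_i[k] - \y[\infty] \|
    \leq R_i[k] + \| \y_i[k] - \y[\infty] \|.
\end{equation*}
Substituting the bound on $R_i[k]$ from the first step then gives the desired inequality.

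I do not anticipate a serious obstacle: the result is essentially a bookkeeping exercise on top of Lemma~\ref{lem: dist to curr aux}. The only point that requires a bit of care is making sure that the hypotheses of Lemma~\ref{lem: dist to curr aux} are satisfied for \emph{both} Algorithm~\ref{alg: DMM_filter} and Algorithm~\ref{alg: D_filter} under the assumptions stated here; this is immediate once one observes that Proposition~\ref{prop: aux inexact} inherits exactly those same robustness assumptions (as made explicit at the start of Section~\ref{subsec: proof sketch of conv}). The step where I would be most careful is verifying that the maximum in the definition of $R_i[k]$ ranges over $(\mathcal{N}_i^{\text{in}} \cap \mathcal{R}) \cup \{v_i\}$ (rather than over all in-neighbors including Byzantine ones), which is what the triangle inequality above requires, and which is precisely what Lemma~\ref{lem: dist to curr aux} assumes.
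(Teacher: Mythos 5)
Your proposal is correct and follows essentially the same route as the paper's proof: a triangle inequality to pass from $\y_i[k]$ to $\y[\infty]$ in the hypothesis of Lemma~\ref{lem: dist to curr aux}, an application of that lemma to bound $\|\z_i[k]-\y_i[k]\|$, and a second triangle inequality to return to $\y[\infty]$, yielding the factor of $2$. The only cosmetic difference is that you introduce $R_i[k]$ as the exact maximum over regular neighbors and then bound it, while the paper plugs the bound $\max_{v_j \in \mathcal{R}}\|\x_j[k]-\y[\infty]\| + \|\y_i[k]-\y[\infty]\|$ directly into the lemma as $R_i[k]$.
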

By combining the above inequality with the relationship between $\| \z_i [k] - \y [\infty] \|$ and $\| \x_i [k+1] - \y [\infty] \|$ from Lemmas~\ref{lem: dist grad update 1}-\ref{lem: Delta lower bound}, and bounding the second term on the RHS, $\| \y_i[k] - \y[\infty] \|$, using Proposition~\ref{prop: aux convergence}, we obtain a relationship between $\| \x_i [k+1] - \y [\infty] \|$ and $\max_{v_j \in \mathcal{R}} \| \x_j[k] - \y[\infty] \|$.
As a result, we can bound the distance $\max_{v_i \in \mathcal{R}} \| \x_i[k] - \y[\infty] \|$ by a particular bounded sequence defined below.

Define the time-step $k_0 \in \N$ as $k_0 := \max_{\ell \in \{ 1,2,3 \}} k^*_\ell$.
Recall the definition of $\alpha$ and $\beta$ from Proposition~\ref{prop: aux convergence}. Let 
\begin{equation}
    \phi[k_0] = \max_{v_i \in \mathcal{R}} \| \x_i[0] - \y[\infty] \| + 2 \beta \sum_{k=0}^{k_0-1} e^{-\alpha k} + L \sum_{k=0}^{k_0-1} \eta [k], 
    \label{eqn: phi[k0]}
\end{equation}
and define a sequence $\{ \phi[k] \}_{k = k_0}^\infty$ satisfying the update rule 
\begin{equation}
    \phi^2[k+1] = \max \Big\{ ( s^* )^2, 
    \big( \phi[k] + 2 \beta e ^{- \alpha k} \big)^2 - \frac{1}{2} \eta[k] \min_{v_i \in \mathcal{R}} b_i \underline{L}_i \Big\}.
    \label{eqn: phi k+1}
\end{equation}

\begin{lemma} \label{lem: two bounds}
Suppose Assumptions~\ref{asm: convex}-\ref{asm: weight matrices} hold.
For all $k \geq k_0$, it holds that
\begin{equation*}
    \max_{v_i \in \mathcal{R}} \| \x_i[k] - \y[\infty] \| \leq \phi[k].
\end{equation*}
Furthermore, there exists $\bar{\phi} \in \R_{\geq 0}$ such that for all $k \geq k_0$, the sequence $\phi[k]$ can be uniformly bounded as $\phi[k] < \bar{\phi}$.
\end{lemma}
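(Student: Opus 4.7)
The plan is to prove the two claims in sequence: the pointwise bound $\max_{v_i \in \mathcal{R}} \| \x_i[k] - \y[\infty] \| \leq \phi[k]$ by induction on $k \geq k_0$, and the uniform bound on $\phi[k]$ by directly exploiting the summability of the exponentially-decaying perturbation $2\beta e^{-\alpha k}$ appearing in the recursion \eqref{eqn: phi k+1}. For the base case $k = k_0$ of the induction, I would chain the gradient step \eqref{eqn: grad dynamic}, Assumption~\ref{asm: gradient_bound}, Proposition~\ref{prop: aux inexact}, and Proposition~\ref{prop: aux convergence} to obtain the one-step bound
\begin{equation*}
    \| \x_i[k+1] - \y[\infty] \| \leq \max_{v_j \in \mathcal{R}} \| \x_j[k] - \y[\infty] \| + 2\beta e^{-\alpha k} + L\eta[k].
\end{equation*}
Taking the maximum over $v_i$ on the left and telescoping from $k = 0$ to $k = k_0 - 1$ then reproduces exactly the definition of $\phi[k_0]$ in \eqref{eqn: phi[k0]}.

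For the inductive step, suppose $\max_{v_i \in \mathcal{R}} \| \x_i[k] - \y[\infty] \| \leq \phi[k]$ for some $k \geq k_0$. Proposition~\ref{prop: aux inexact}, the induction hypothesis, and Proposition~\ref{prop: aux convergence} together yield $\| \z_i[k] - \y[\infty] \| \leq \phi[k] + 2\beta e^{-\alpha k}$ for every $v_i \in \mathcal{R}$. I would then split into the three regimes laid out in Section~\ref{subsec: grad_update analysis}. In the first two regimes ($\| \z_i[k] - \y[\infty] \|$ either at most $\max_{v_j \in \mathcal{R}} \{ \Tilde{R}_j + \delta_j \}$, or strictly between $\max_{v_j \in \mathcal{R}} \{ \Tilde{R}_j + \delta_j \}$ and $s^*$), Lemma~\ref{lem: dist grad update 1} and Lemma~\ref{lem: dist grad update 2} each directly give $\| \x_i[k+1] - \y[\infty] \| \leq s^*$; since the $\max$ in \eqref{eqn: phi k+1} forces $\phi[k+1] \geq s^*$, this already suffices. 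In the remaining regime $\| \z_i[k] - \y[\infty] \| > s^*$, combining Lemma~\ref{lem: grad update} with Lemma~\ref{lem: Delta lower bound} produces
\begin{equation*}
    \| \x_i[k+1] - \y[\infty] \|^2 < \| \z_i[k] - \y[\infty] \|^2 - \tfrac{1}{2} b_i \underline{L}_i \eta[k] \leq \big( \phi[k] + 2\beta e^{-\alpha k} \big)^2 - \tfrac{1}{2} \eta[k] \min_{v_j \in \mathcal{R}} b_j \underline{L}_j \leq \phi^2[k+1].
\end{equation*}
Because $k \geq k_0 \geq \max\{k_1^*, k_2^*, k_3^*\}$, all three lemmas are applicable. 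Taking the maximum over $v_i$ closes the induction.

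For the uniform bound, I would drop the non-positive term inside the $\max$ in \eqref{eqn: phi k+1} to obtain the recursion $\phi[k+1] \leq \max\{s^*, \phi[k] + 2\beta e^{-\alpha k}\}$. A straightforward induction then yields $\phi[k] \leq \max\{\phi[k_0], s^*\} + 2\beta \sum_{j=k_0}^{k-1} e^{-\alpha j}$ for all $k \geq k_0$, so any constant $\bar\phi > \max\{\phi[k_0], s^*\} + 2\beta (1 - e^{-\alpha})^{-1}$ serves as the required uniform upper bound. The main technical obstacle is the bookkeeping in the inductive step: one must verify that the three-case split exhausts all possibilities, that the time-step thresholds $k_1^*, k_2^*, k_3^*$ are all absorbed into $k_0$, and that the bound $\phi[k] + 2\beta e^{-\alpha k}$ on $\| \z_i[k] - \y[\infty] \|$ coming from the induction hypothesis matches exactly the first argument of $\Delta_i$ in \eqref{eqn: bound second case}, so that the third regime produces precisely the branch appearing in \eqref{eqn: phi k+1}. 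Once this alignment is in place, every sub-step reduces to a direct invocation of a previously established lemma.
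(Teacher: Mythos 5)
Your proposal is correct and follows essentially the same route as the paper's proof: the one-step bound $\|\x_i[k+1]-\y[\infty]\| \le \max_j\|\x_j[k]-\y[\infty]\| + 2\beta e^{-\alpha k} + L\eta[k]$ telescoped up to $k_0$ for the base case, the induction step via Proposition~\ref{prop: aux inexact} together with the three-regime split handled by Lemmas~\ref{lem: dist grad update 1}, \ref{lem: grad update}, \ref{lem: dist grad update 2} and \ref{lem: Delta lower bound}, and the uniform bound obtained by discarding the subtracted term in \eqref{eqn: phi k+1} and summing the geometric series. The only differences are cosmetic (ordering of the two parts and the exact constant chosen for $\bar\phi$).
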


\subsubsection{Convergence Analysis}
\label{subsec: conv analysis}

Finally, we will utilize the following lemma to further analyze the sequence $\{ \phi[k] \}$ defined in \eqref{eqn: phi k+1}.
\begin{lemma}  \label{lem: update imp}
Consider a sequence $\{ \hat{\eta}[k] \}_{k=0}^{\infty} \subset \R_{\geq 0}$ that satisfies $\sum_{k=0}^{\infty} \hat{\eta}[k] = \infty$.
If $\gamma_1 \in \R_{\geq 0}$, $\gamma_2 \in \R_{> 0}$ and $\lambda \in (-1, 1)$, then there is no sequence $\{ u[k] \}_{k=0}^{\infty} \subset \R_{\geq 0}$ that satisfies the update rule
\begin{equation*}
    u^2[k+1] = (u[k] + \gamma_1 \lambda^k)^2 - \gamma_2 \hat{\eta}[k].
\end{equation*}
\end{lemma}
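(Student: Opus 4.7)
The plan is to argue by contradiction: assume a nonnegative sequence $\{u[k]\}$ satisfying the recurrence exists, and derive that $u^2[K] < 0$ for sufficiently large $K$. The intuition is that the additive perturbation $\gamma_1 \lambda^k$ decays geometrically (since $|\lambda| < 1$) and thus contributes only a bounded total over all iterations, while the subtracted term $-\gamma_2 \hat{\eta}[k]$ accumulates to $-\infty$ by hypothesis. These two effects cannot be reconciled with the nonnegativity constraint $u^2[k] \geq 0$.

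The first step is to establish a uniform upper bound on $\{u[k]\}$. Since $u[k] \geq 0$ and $u^2[k+1] \leq (u[k] + \gamma_1 \lambda^k)^2$, taking square roots and applying the triangle inequality gives $u[k+1] \leq |u[k] + \gamma_1 \lambda^k| \leq u[k] + \gamma_1 |\lambda|^k$. Iterating this bound and summing the resulting geometric series yields $u[k] \leq U := u[0] + \gamma_1 / (1 - |\lambda|)$ for every $k \in \N$.

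The second step is to expand the recurrence as $u^2[k+1] = u^2[k] + 2\gamma_1 \lambda^k u[k] + \gamma_1^2 \lambda^{2k} - \gamma_2 \hat{\eta}[k]$ and telescope from $k = 0$ to $K - 1$:
\begin{equation*}
u^2[K] = u^2[0] + 2\gamma_1 \sum_{k=0}^{K-1} \lambda^k u[k] + \gamma_1^2 \sum_{k=0}^{K-1} \lambda^{2k} - \gamma_2 \sum_{k=0}^{K-1} \hat{\eta}[k].
\end{equation*}
Using the uniform bound $u[k] \leq U$ together with $|\lambda| < 1$, the first three terms on the right-hand side are uniformly bounded in $K$ by a constant $C$ (via absolutely convergent geometric series). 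The final term, however, diverges to $-\infty$ by the hypothesis $\sum_{k=0}^{\infty} \hat{\eta}[k] = \infty$ and $\gamma_2 > 0$. Therefore $u^2[K] < 0$ for all sufficiently large $K$, contradicting $u[K] \in \R_{\geq 0}$.

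The only subtle step is establishing the uniform bound on $u[k]$ when $\lambda$ may be negative and the recurrence includes the subtractive term $\gamma_2 \hat{\eta}[k]$; however, dropping the subtraction only relaxes the inequality, and the triangle inequality handles the sign of $\lambda^k$ precisely because $u[k] \geq 0$ by assumption. With this bound in hand, the remaining telescoping and contradiction are entirely routine.
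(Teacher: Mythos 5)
Your proof is correct and follows essentially the same route as the paper: establish the uniform bound $u[k] \le u[0] + \gamma_1/(1-|\lambda|)$ via the triangle inequality, expand and telescope the recurrence, and derive a contradiction from the bounded cross terms versus the divergent $\gamma_2\sum\hat{\eta}[k]$. Your handling of the cross term via absolute values is in fact slightly cleaner than the paper's, which bounds $2\gamma_1\lambda^k u[k]$ by $2\gamma_1\lambda^k \bar{u}$ without noting the sign of $\lambda^k$, but the two arguments are otherwise identical.
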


By employing Lemmas~\ref{lem: two bounds} and \ref{lem: update imp}, Proposition~\ref{prop: finite time} demonstrates that any repulsion of the state $\z_i[k]$ from the convergence ball $\B( \y[\infty], s^*)$ due to inconsistency of the estimates of the auxiliary point  (Proposition~\ref{prop: aux convergence} and \ref{prop: aux inexact})
is compensated by the gradient term pulling the state $\x_i[k]$ to the convergence ball. Consequently, the quantity $\phi[k]$ decreases until it does not exceed $s^*$. In other words, the sequence analysis results in 
\begin{equation}
    \max_{v_i \in \mathcal{R}} \| \x_i[k] - \y[\infty] \| \leq \phi[k] \leq s^*
    \label{eqn: finite convergence}
\end{equation}
for a sufficiently large time-step $k$. The crucial finite time convergence result is formally stated as follows.
\begin{proposition} \label{prop: finite time}
Suppose Assumptions~\ref{asm: convex}-\ref{asm: weight matrices} hold.
Then, there exists $K \in \N$ such that for all $v_i \in \mathcal{R}$ and $k \geq K$, we have $\x_i[k] \in \B( \y[\infty], s^*)$.
\end{proposition}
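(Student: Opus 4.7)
The plan is to show that the deterministic bounding sequence $\phi[k]$ from Lemma~\ref{lem: two bounds} satisfies $\phi[k] = s^*$ for all sufficiently large $k$. Combined with the uniform bound $\max_{v_i \in \mathcal{R}} \| \x_i[k] - \y[\infty] \| \leq \phi[k]$ from the same lemma, this immediately yields the proposition with $K$ equal to the first time $\phi$ touches $s^*$. By the max in the definition \eqref{eqn: phi k+1}, $\phi[k+1] \geq s^*$ already holds for every $k \geq k_0$, so only the reverse inequality needs work.

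The argument decomposes into an \emph{absorbing} step and a \emph{reaching} step. For absorption, I would pick $K_1 \geq k_0$ large enough that, for every $k \geq K_1$,
\begin{equation*}
4 \beta s^* e^{-\alpha k} + 4 \beta^2 e^{-2 \alpha k} \leq \tfrac{1}{2}\, \eta[k] \min_{v_j \in \mathcal{R}} b_j \underline{L}_j .
\end{equation*}
Such $K_1$ exists because, by Assumption~\ref{asm: step-size}, $\eta[k]$ decays as $1/k$ while the left-hand side decays exponentially thanks to the rate in Proposition~\ref{prop: aux convergence}. Under this inequality, if $\phi[k] = s^*$ then $\bigl(\phi[k] + 2 \beta e^{-\alpha k}\bigr)^2 - \tfrac{1}{2} \eta[k] \min_j b_j \underline{L}_j \leq (s^*)^2$, so the max in \eqref{eqn: phi k+1} is attained by $(s^*)^2$ and hence $\phi[k+1] = s^*$. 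By induction, once $\phi$ equals $s^*$ at any time past $K_1$, it remains at $s^*$ thereafter.

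For the reaching step, I would argue by contradiction: suppose $\phi[k] > s^*$ for every $k \geq K_1$. Then the max in \eqref{eqn: phi k+1} is always attained by the second argument, so for all $k \geq K_1$,
\begin{equation*}
\phi^2[k+1] = \bigl( \phi[k] + 2 \beta e^{-\alpha k} \bigr)^2 - \tfrac{1}{2}\, \eta[k]\, \gamma_2, \qquad \gamma_2 := \min_{v_i \in \mathcal{R}} b_i \underline{L}_i > 0 .
\end{equation*}
Setting $u[k] := \phi[K_1 + k]$, $\gamma_1 := 2 \beta e^{-\alpha K_1} \geq 0$, $\lambda := e^{-\alpha} \in (0,1)$, and $\hat{\eta}[k] := \tfrac{1}{2}\eta[K_1 + k]$, the sequence $\{ u[k] \}$ is nonnegative (and in fact bounded by $\bar{\phi}$ via Lemma~\ref{lem: two bounds}) and satisfies $u^2[k+1] = \bigl( u[k] + \gamma_1 \lambda^k \bigr)^2 - \gamma_2 \hat{\eta}[k]$. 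Since Assumption~\ref{asm: step-size} gives $\sum_k \eta[k] = \infty$ and hence $\sum_k \hat{\eta}[k] = \infty$, Lemma~\ref{lem: update imp} forbids the existence of such a sequence, a contradiction. Thus some $k^\diamond \geq K_1$ satisfies $\phi[k^\diamond] = s^*$, and the absorbing step then yields $K = k^\diamond$.

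The main obstacle is verifying the absorbing inequality so that $\phi$ cannot rebound above $s^*$ after first touching it; this is precisely where the exponential consensus rate of the auxiliary points in Proposition~\ref{prop: aux convergence} must dominate the harmonic step-size in Assumption~\ref{asm: step-size}. With that in place, the remainder of the proof is a clean matching of the shifted recursion for $\phi$ against the hypotheses of Lemma~\ref{lem: update imp}.
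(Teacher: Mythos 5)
Your proposal is correct and takes essentially the same approach as the paper: both argue that the bounding sequence $\phi[k]$ must eventually equal $s^*$ by deriving a contradiction with Lemma~\ref{lem: update imp}, and that it is then absorbed at $s^*$ because the exponentially decaying consensus error $2\beta e^{-\alpha k}$ is eventually dominated by the step-size term $\tfrac{1}{2}\eta[k]\min_{v_i \in \mathcal{R}} b_i\underline{L}_i$. The only cosmetic differences are the order of the two steps and that your absorbing threshold is stated with $s^*$ where the paper uses the uniform bound $\bar{\phi}$.
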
 
Since all the prior analyses valid for all $\xi \in \R_{> 0}$ and $\epsilon \in \R_{> 0}$, the convergence result in Theorem~\ref{thm: convergence} follows from taking $\inf_{\xi > 0, \epsilon > 0}$ and $\limsup_k$ to \eqref{eqn: finite convergence}.
\section{Discussion}  \label{sec: discussion}

\subsection{Fundamental Limitation}
One would ideally expect an algorithm to provide convergence to the exact minimizer of the sum of the regular agents' functions when there are no Byzantine agents in the network. However, prior works \cite{su2016fault, sundaram2018distributed} have established a fundamental limitation, showing that achieving such a guarantee \emph{is not possible} unless the set of local functions possesses a redundancy property, known as $2F$-redundancy \cite{gupta2021byzantine}. This limitation arises from the strong model of Byzantine attacks considered, where a Byzantine agent can substitute the given local function with a forged function that remains legitimate. Consequently, detecting such suspicious behavior or determining the total number of Byzantine agents $| \mathcal{A} |$ in the network is not possible, as the Byzantine agent can follow the algorithm while influencing the outcome of distributed optimization (as discussed in Remark~\ref{rem: fund limit}). In other words, in settings where Byzantine agents are \emph{potentially} present (i.e., $ F > 0$) and there is no known redundancy among the functions, achieving zero steady state error is \emph{impossible} even when there are no Byzantine agents actually present (i.e., $| \mathcal{A} | = 0$) \cite{su2016fault, sundaram2018distributed}.

Our work, imposing only mild assumptions on the local functions, is constrained by this fundamental limit. Although our approach can recover the distributed subgradient method \cite{nedic2009distributed, nedic2014distributed} when selecting the parameter $F = 0$, in the worst case scenario, there is no way to determine the number of Byzantine agents. As discussed in Remark~\ref{rem: F knowledge}, in practice, we need to choose the parameter $F$ in the design phase, i.e., prior to the execution of the algorithm. Thus, in our work, the parameter $F$ serves as the maximal number of Byzantine agents in a set of neighbors that the designed system can tolerate, providing a convergence guarantee, as stated in Theorem~\ref{thm: convergence}.

It is crucial to acknowledge that the fundamental limit is well-established for distributed optimization problems. However, the question of the dependence of the smallest size of the convergence region on the parameters characterizing the function class remains an important open problem \cite{kuwaran2023geometric}.

\subsection{Redundancy and Guarantees Trade-off}  \label{subsec: tradeoff}
An appropriate notion of network redundancy is necessary for any Byzantine resilient optimization algorithm \cite{sundaram2018distributed}; for both Algorithm~\ref{alg: DMM_filter} and Algorithm~\ref{alg: D_filter}, this is captured by the corresponding robustness conditions in Theorem~\ref{thm: convergence}.
In particular, Algorithm~\ref{alg: DMM_filter} requires the graph to be $((2d + 1)F + 1)$-robust since it implements two filters (a distance-based filter (\textbf{Line 7}) and a min-max filter (\textbf{Line 8})) while Algorithm~\ref{alg: D_filter} requires the graph to only be $(2F + 1)$-robust as a result of only using the distance-based filter.
Since each of these filtering steps discards a set of state vectors, the robustness condition allows the graph to retain some flow of information. Thus, while Algorithm~\ref{alg: DMM_filter} requires significantly stronger conditions on the network topology (i.e., requiring the robustness parameter to scale linearly with the dimension of the functions), it provides the benefit of guaranteeing consensus. Algorithm~\ref{alg: D_filter} only requires the robustness parameter to scale with the number of adversaries in each neighborhood, and thus can be used for optimizing high-dimensional functions with relatively sparse networks, at the cost of losing the guarantee on consensus.

\begin{remark}  \label{rem: redundancy}
The linear dependence of the redundancy requirement on the number of dimensions $d$ is, in fact, typical for resilient vector consensus (e.g., see \cite{tverberg1966generalization, reay1968extension, vaidya2014iterative, xiang2016brief, park2017fault, abbas2020interplay, yan2020safe}); The survey paper \cite[Section~5.3]{pirani2022survey} provides a detailed discussion of papers that require this assumption. Despite such a condition/restriction being ``standard'' in the literature, the linear growth in the number of neighbors with the dimension of the state is undesirable. To address the drawback of requiring high redundancy, we provide Algorithm~\ref{alg: D_filter} which is an alternative solution that does not depend on the number of dimensions $d$; however, in this case, we lose the consensus guarantee unlike Algorithm~\ref{alg: DMM_filter}.
\end{remark}

\subsection{Time Complexity}  \label{subsec: complexity}
Suppose the network is $r$-robust and the number of in-neighbors $| \mathcal{N}_i^{\text{in}} |$ is linearly proportional to $r$ for all $v_i \in \mathcal{V}$. For the distance-based filter (\textbf{Line~7}), each regular agent $v_i \in \mathcal{R}$ computes the $L^2$-norm between its auxiliary state and in-neighbor states and then finds the $F$ agents that attain the maximum value; this procedure  takes $\mathcal{O} (dr)$ operations. On the other hand, for the min-max filter (\textbf{Line~8}), each regular agent $v_i \in \mathcal{R}$ is required to sort the in-neighbor states for each dimension which takes $\mathcal{O} (dr \log r)$ operations. For Algorithms~\ref{alg: DMM_filter} and \ref{alg: D_filter}, the total computational complexities for filtering process are $\Tilde{\mathcal{O}} (d^2)$ and $\Tilde{\mathcal{O}} (d)$, respectively.
Compared to the resilient vector consensus literature \cite{tverberg1966generalization, reay1968extension, vaidya2014iterative, xiang2016brief, park2017fault, abbas2020interplay, yan2020safe}, which requires exponential in the number of dimensions $d$ for computational complexity, our algorithms have significantly lower computation costs.

\subsection{Convergence Ball}  \label{subsec: convergence ball}
In terms of the size of the convergence ball, it is crucial to note that the convergence radius defined in \eqref{def: conv radius} remains independent of the Lipschitz constant $L$, the number of regular agents $|\mathcal{R}|$ (in contrast to the result in \cite{wu2023byzantine}), or the maximum number of neighboring Byzantine agents $F$. Instead, the radius hinges solely on specific characteristics of local functions: the locations of local minimizers (captured by $\Tilde{R}_i$), sensitivity (captured by $\theta_i$), and the size of the set of local minimizers (captured by $\delta_i$). However, the quantity $\Tilde{R}_i$ can be proportional to $\sqrt{d}$ in the worst case as analyzed in \cite{kuwaran2023geometric}. As we will discuss next, remarkably, the sensitivity $\theta_i$ defined in Proposition~\ref{prop: grad angle} is intimately linked to the condition number of a function.

For simplicity, we will omit the agent index subscript $i$ in the subsequent analysis and assume $\x_i^* = \0$. Consider a quadratic function $f(\x) = \frac{1}{2} \| \A \x \|^2$, where $A \in \R^{d \times d}$ is a positive definite matrix. Now, we will examine the quantity $\sup_{\x \neq \0} \angle (\g(\x), \x)$ from Proposition~\ref{prop: grad angle}, where $\g(\x) = \nabla f(\x) = \A^T \A \x$. We aim to demonstrate that $\sec \left( \sup_{\x \neq \0} \angle (\g (\x), \x) \right) \leq (\| \A \| \cdot \| \A^{-1} \|)^2 := \kappa$, where ${\| \; \cdot \; \|}$ denotes the induced matrix norm, and $\kappa$ is the condition number associated with the function $f$ \cite{gutman2021condition}. It is noteworthy that this inequality, with the replacement of $\sup_{\x \neq \0} \angle (\g (\x), \x)$ by $\sup_{\x \neq \x_*} \angle (\g (\x), \x - \x_*)$, holds for the more general case of $f(\x) = \frac{1}{2} \| \A ( \x - \x_* ) + \boldsymbol{b} \|^2$ with $\x_* \in \R^d$ and $\boldsymbol{b} \in \R^d$.

To show such result, we proceed as follows:
\begin{align*}
    \cos \Big( \sup_{\x \neq \0} \angle ( \g (\x), \x ) \Big)
    &= \inf_{\x \neq \0} \big( \cos \angle ( \g (\x), \x ) \big)
    = \inf_{\x \neq \0} \left ( \frac{ \langle \g (\x), \x \rangle }{\| \g (\x) \| \cdot \| \x \|} \right )  
    = \inf_{\x \neq \0} \left ( \frac{\| \A \x \|^2}{\| \x \|^2} \cdot \frac{\| \x \|}{\| \A^T \A \x \|} \right ) \\
    &\geq \bigg ( \inf_{\x \neq \0} \frac{\| \A \x \|}{\| \x \|} \bigg )^2 \bigg ( \sup_{\x \neq \0} \frac{\| \A^T \A \x \|}{\| \x \|} \bigg )^{-1} 
    \geq \big( \| \A^{-1} \| \cdot \| \A \| \big)^{-2}.
\end{align*}
In the last inequality, we utilize the properties that $\inf_{\x \neq \0} \frac{\| \A \x \|}{\| \x \|} = \frac{1}{\| \A^{-1} \|}$ due to the invertibility of $\A$ and $\sup_{\x \neq \0} \frac{\| \A^T \A \x \|}{\| \x \|} = \| \A^T \A \| \leq \| \A \|^2$ due to the sub-multiplicative property of induced matrix norm, and $\| \A^T \| = \| \A \|$.

To get a sense of the convergence region, we consider univariate functions (i.e., the $d = 1$ case). To facilitate the discussion, we denote $\min_{v_i \in \mathcal{R}} x_i^*$ and $\max_{v_i \in \mathcal{R}} x_i^*$ by $\underline{x}$ and $\overline{x}$, respectively. Suppose that the local minimizer $x_i^*$ is unique for all $v_i \in \mathcal{R}$ so that the quantity $\delta_i$ defined in \eqref{def: delta_i} can be chosen arbitrarily close to zero for all $v_i \in \mathcal{R}$. In this case, we have that for all $v_i \in \mathcal{R}$, $\theta_i$ defined in \eqref{def: theta_i} is zero. Therefore, the convergence radius $s^*$ in \eqref{def: conv radius} simplifies to $\max_{v_i \in \mathcal{R}} \tilde{R}_i$ (where $\tilde{R}_i$ defined in \eqref{def: tilde R_i}). 
In the best case, we can have $y[\infty] = \frac{1}{2} ( \underline{x} + \overline{x} )$ which results in the convergence region $[ \underline{x}, \overline{x} ]$ as derived in \cite{sundaram2018distributed}. In the worst case, (assuming numerical error $\epsilon^*$ in \textbf{Line~1} is zero) we can have $y[\infty] = \underline{x}$ or $\overline{x}$ which results in the convergence region $[ 2 \underline{x} - \overline{x}, \overline{x} ]$ or $[ \underline{x}, 2 \overline{x} - \underline{x} ]$, respectively. In such worst case, the region is two times bigger than the region derived in \cite{sundaram2018distributed}. These results are due to our ``radius analysis" which is uniform in all directions from $\y[\infty]$.

\begin{remark}
Regarding the convergence rate, given the general convex (possibly non-smooth) nature of the problem, achieving only sublinear convergence is typical in centralized settings \cite{nesterov2003introductory}. Specifically, the anticipated convergence rate may align with the $\mathcal{O} \big( \frac{\log k}{\sqrt{k}} \big)$ rate observed in \cite{nedic2014distributed} for non-faulty distributed cases. While our current work provides asymptotic analysis due to inherent challenges, our future endeavors aim to explore explicit convergence rates for a broader class of Byzantine-resilient distributed optimization algorithms.
\end{remark}

\subsection{Maximum Tolerance}
Based on the robustness condition for each algorithm and a formula from \cite{guerrero2017formations}, given the number of agents $N$ in the complete graph and number of dimensions for the optimization variables $d$, the upper bound on the number of local Byzantine agents $F$ such that the corresponding guarantees still hold, is as follows:
\begin{itemize}
    \item $F = \left\lfloor \frac{N - 1}{2 (2d + 1)} \right\rfloor$ for Algorithm~\ref{alg: DMM_filter}, and
    \item $F = \left\lfloor \frac{1}{4} (N-1) \right\rfloor$ for Algorithm~\ref{alg: D_filter}.
\end{itemize}
From a practical perspective, the robustness property  demonstrates a natural trade-off for the system designer. A network that has a stronger robustness property can tolerate more adversaries, but can also induce more costs.

\subsection{Importance of Main States Computation}
If we simply implement a resilient consensus protocol on local minimizers similar to the auxiliary states, $\y_i[k]$, computation (in \textbf{Lines~11-12}) and remove the main states, $\x_i[k]$, computation (in \textbf{Lines~7-9}), we would obtain that the states of the regular agents converge to the hyper-rectangle formed by the local minimizers (for resilient component-wise consensus algorithms \cite{saldana2017resilient}), or the convex hull of the local minimizers (for resilient vector consensus algorithms \cite{park2017fault, abbas2022resilient}). Even though using a resilient consensus protocol seems to be a good method for the single dimension case since the resilient distributed optimization algorithm also pushes the states of the regular agents to such sets \cite{su2015byzantine, sundaram2018distributed} (and they are identical in this case), it might not give a desired result for the multi-dimensional case. First, it is possible that the minimizer of the sum lies outside both the hyper-rectangle and convex hull \cite{kuwaran2018location, kuwaran2020set} as shown in Figure~\ref{fig: quantity illus}. Second, using only a resilient consensus protocol, one ignores the gradient information which steers the regular agents' states to the true minimizer. Third, we empirically show in Section~\ref{sec: simulations} that implementing a resilient distributed optimization algorithm (especially Algorithm~\ref{alg: DMM_filter}) usually gives better results (compared to the quality of the solution provided by directly using the auxiliary point, which was obtained by running a resilient consensus protocol on the local minimizers) in terms of both optimality gap and distance to the global minimizer. 

\subsection{Importance of Auxiliary States Computation}
Essentially, when the main states of regular agents are significantly far away from their local minimizers $\x_i^*$, these minimizers tend to form a \emph{cluster} from the perspective of a regular agent $v_i$. In addition, building on prior works \cite{kuwaran2018location, kuwaran2020set, kuwaran2023minimizer}, we know that the true optimal solution $\x^*$ (which is the minimizer of the function sum) cannot be located too far away from this cluster. Thus, the auxiliary states $\y_i$, guaranteed to be inside the cluster (in $L_1$-sense) as shown in Proposition~\ref{prop: aux convergence}, act as valuable references for providing a directional sense to regular agents $v_i$ in their pursuit of the true minimizer $\x^*$. By our design, the distance filter in Algorithm~\ref{alg: DMM_filter} and Algorithm~\ref{alg: D_filter} assumes the role of a guiding mechanism by eliminating extreme states that pull the overall state away from the cluster.

From a technical standpoint, in the multi-dimensional case, relying solely on resilient consensus for the main states $\x_i$ and the update using a subgradient $\g_i$ with respect to the local function $f_i$ may not suffice to ensure a convergence guarantee. In the worst case, resilient consensus could lead to a state further away from the cluster, especially considering that the strength of this divergence due to Byzantine agents can be proportional to $\sqrt{d}$, where $d$ is the problem dimension. Even though following the subgradient $\g_i$ usually mitigates the divergence, it might not be sufficient for guaranteed convergence in such worst cases. Thus, our introduced distance-based filter using a local auxiliary state plays a crucial role in further reducing the severity of the divergence, allowing us to achieve a convergence guarantee under mild assumptions.
\section{Numerical Experiment} 
\label{sec: simulations}

We now provide two numerical experiments to illustrate Algorithm~\ref{alg: DMM_filter} and Algorithm~\ref{alg: D_filter}. In the first experiment, we generate quadratic functions for the local objective functions. Using these functions, we demonstrate the performance (e.g., optimality gaps, distances to the global minimizer) of our algorithms. We also compare the optimality gaps of the function value obtained using the states $\boldsymbol{x}_i [k]$ and the value obtained using the auxiliary points $\boldsymbol{y}_i [k]$, and plot the trajectories of the states of a subset of regular nodes.
In the second experiment, we demonstrate the performance of our algorithm on a machine learning task (banknote authentication task). Specifically, we compare the accuracy of the models obtained from our algorithm (resilient distributed model) and that of a centralized model.

\subsection{Synthetic Quadratic Functions}

\vspace{1.5mm}
\noindent \textbf{Preliminary Settings} 
\begin{itemize}
    \item \textbf{Main Parameters:} We set the number of nodes to be $n = 25$ and the dimension of each function to be $d = 2$.
    \item \textbf{Adversary Parameters:} We consider the $F$-local model, and set $F = 2$ for Algorithm~\ref{alg: DMM_filter} and $F = 5$ for Algorithm~\ref{alg: D_filter}.
\end{itemize}
\textbf{Network Settings} 
\begin{itemize}
    \item \textbf{Topology Generation:} We construct an $11$-robust graph on $n = 25$ nodes following the approach from \cite{guerrero2017formations, leblanc2013resilient}. This graph can tolerate up to 2 local adversaries for Algorithm~\ref{alg: DMM_filter}, and up to 5 local adversaries for Algorithm~\ref{alg: D_filter} according to Theorem~\ref{thm: convergence}. Note that the same graph is used to perform numerical experiments for both Algorithms~\ref{alg: DMM_filter} and \ref{alg: D_filter}.
\end{itemize}
\noindent \textbf{Adversaries' Strategy}
\begin{itemize}
    \item \textbf{Adversarial Nodes:} We construct the set of adversarial nodes $\mathcal{A}$ by randomly choosing nodes in $\mathcal{V}$ so that the set of adversarial nodes form a $F$-local set. Note that in general, constructing $\mathcal{A}$ depends on the topology of the network. 
    In our experiment, we have $\mathcal{A} = \{ v_9, v_{16} \}$ for Algorithm~\ref{alg: DMM_filter} and $\mathcal{A} = \{ v_5, v_{11}, v_{12}, v_{17}, v_{22}, v_{24} \}$ for Algorithm~\ref{alg: D_filter}.
    
    \item \textbf{Adversarial Values Transmitted:} Here, we use a sophisticated approach rather than simply choosing the transmitted values at random. 
    Suppose $v_s$ is an adversary node and $v_i$ is a regular node which is an out-neighbor of $v_s$, i.e., $v_s \in \mathcal{N}_i^{\text{in}}$.
    First, consider the state of nodes in the network at time-step $k$. The adversarial node $v_s$ uses an oracle to determine the region in the state space for the regular node $v_i$ in which if the adversarial node selects the transmitted value to be outside the region then the value will be discarded by that regular agent $v_i$. 
    Then, $v_s$ chooses $\x_{s \to i} [k]$ (the forged state sent from $v_s$ to $v_i$ at time $k$) so that it is in the safe region and far from the global minimizer. In this way, the adversaries' values will not be discarded and also try to prevent the regular nodes from getting close to the minimizer.
    Similarly, for the auxiliary point update, the adversarial node $v_s$ uses an oracle to determine the safe region in the auxiliary point's space for the regular node $v_i$. Since the safe region is a hyper-rectangle in general, $v_s$ chooses $\y_{s \to i} [k]$ (the forged estimated auxiliary point sent from $v_s$ to $v_i$ at time $k$) to be near a corner (chosen randomly) of the hyper-rectangle.
\end{itemize}
\textbf{Objective Functions Settings}
\begin{itemize}
    \item \textbf{Local Functions:} For $v_i \in \mathcal{V}$, we set the local objective functions $f_i: \R^d \to \R$ to be
    \begin{equation*}
        f_i(\x) = \frac{1}{2} \x^T \boldsymbol{Q}_i \x + \boldsymbol{b}_i^T \x, 
    \end{equation*}
    where $\boldsymbol{Q}_i \in \mathcal{S}_{d}^{+}$ and $\boldsymbol{b}_i \in \R^d$ are chosen randomly.
    Note that the same local functions are used to perform numerical experiments for both Algorithms~\ref{alg: DMM_filter} and \ref{alg: D_filter}.
    
    \item \textbf{Global Objective Function:} According to our objective \eqref{prob: regular node}, we then have the global objective function $f: \R^d \to \R$ as follows:
    \begin{equation*}
        f(\x) = \frac{1}{| \mathcal{R} |} \Big( \frac{1}{2} \x^T \big( \sum_{v_i \in \mathcal{R}} \boldsymbol{Q}_i \big) \x + \big( \sum_{v_i \in \mathcal{R}} \boldsymbol{b}_i \big)^T \x \Big),
    \end{equation*}
    where the set of regular nodes $\mathcal{R} = \mathcal{V} \setminus \mathcal{A}$. 
\end{itemize}
\textbf{Algorithm Settings}
\begin{itemize}
    \item \textbf{Initialization:} For each regular node $v_i \in \mathcal{R}$, we compute the exact minimizer $\x_i^* = - \boldsymbol{Q}_i^T \boldsymbol{b}_i$ and use it as the initial state and auxiliary point of $v_i$ as suggested in \textbf{Line 1-2} of Algorithm~\ref{alg: DMM_filter}.
    
    \item \textbf{Weights Selection:} For each time-step $k \in \N$ and regular node $v_i \in \mathcal{R}$, we randomly choose the weights $w_{x, ij} [k], w^{(\ell)}_{y, ij} [k]$ so that they follow the description of \textbf{Line~9} and \textbf{Line~12}, and Assumption~\ref{asm: weight matrices}.
    
    \item \textbf{Step-size Selection:} We choose the step-size schedule (in \textbf{Line 11} of Algorithm \ref{alg: DMM_filter}) to be $\eta [k] = \frac{1}{k+1}$.
    
    \item \textbf{Gradient Norm Bound:} We choose the upper bound of the gradient norm to be $L = 10^5$. If the norm exceeds the bound, we scale the gradient down so that its norm is equal to $L$, i.e., 
    \begin{equation*}
        \g_i [k] = \begin{cases}
        \nabla f_i ( \z_i [k] ) \qquad \text{if} \;\; \| \nabla f_i ( \z_i [k] ) \| \leq L, \\
        \frac{L}{ \| \nabla f_i ( \z_i [k] ) \|} \cdot \nabla f_i ( \z_i [k] ) \quad \text{otherwise}.
        \end{cases}
    \end{equation*}
\end{itemize}
\textbf{Simulation Settings and Results}
\begin{itemize}
    \item \textbf{Time Horizon:} We set the time horizon of our simulations to be $K = 300$ (starting from $k = 0$).
    
    \item \textbf{Experiments Detail:} For both Algorithms~\ref{alg: DMM_filter} and \ref{alg: D_filter}, we fix the graph, local functions, and step-size schedule. However, since the set of adversaries are different, the global objective functions, and hence the global minimizers are different. For each algorithm, we run the experiment $10$ times setting the same states initialization across the runs. The results from the runs are different due to the randomness in the adversaries' strategy.
    
    \item \textbf{Performance Metrics:} We examine the performance of our algorithms by considering the optimality gaps (Figure~\ref{fig: optimality_gap}), distances to the global minimizer (Figure~\ref{fig: D2M}), and trajectories of randomly selected regular agents (Figure~\ref{fig: trajectory}).
    
    \item \textbf{Algorithm~\ref{alg: DMM_filter}'s Results:} The lines corresponding to the optimality gap and distance to the global minimizer evaluated using auxiliary points are almost horizontal since the convergence to consensus is very fast.
    However, one can see that the optimality gap and distance to the minimizer obtained from the regular states are significantly smaller than that from the auxiliary points due to the use of gradient information (\textbf{Line~10}) and extreme states filtering (\textbf{Line~8}) in the regular state update. 
    In particular, at $k=300$, the optimality gap and distance to the global minimizer at the regular states' average are only about $0.030$ and $0.206$, respectively.
    Moreover, the state trajectories converge together and stay close to the global minimizer even in the presence of sophisticated adversaries.
    Note that, from our observations, Algorithm~\ref{alg: DMM_filter} yields better results than Algorithm~\ref{alg: D_filter} given the same settings.
    
    \item \textbf{Algorithm~\ref{alg: D_filter}'s Results:} The optimality gaps and distances to the global minimizer evaluated using the states are slightly better than the values obtained using the auxiliary points, and the state trajectories remain reasonably close to the global minimizer showing that the algorithm can tolerate $F = 5$ local adversaries (which is more than Algorithm~\ref{alg: DMM_filter}).
    Interestingly, the state trajectories seem to converge together even though the consensus guarantee is lacking due to the absence of the distance-based filter.
\end{itemize}

\begin{figure}
\centering
\subfloat[The optimality gap evaluated at the average of the regular nodes' states $f ( \bar{\x} ) - f^*$ averaged over $10$ runs (blue line), and the optimality gap evaluated at the average of the regular nodes' auxiliary points $f ( \bar{\y} ) - f^*$ averaged over $10$ runs (red line).]
{\includegraphics[width=.40\textwidth]{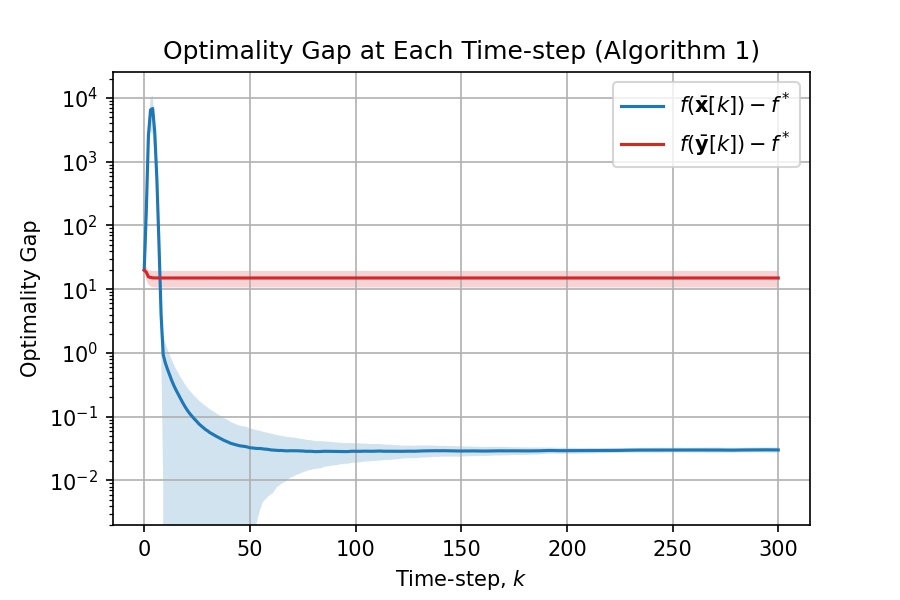}
\includegraphics[width=.40\textwidth]{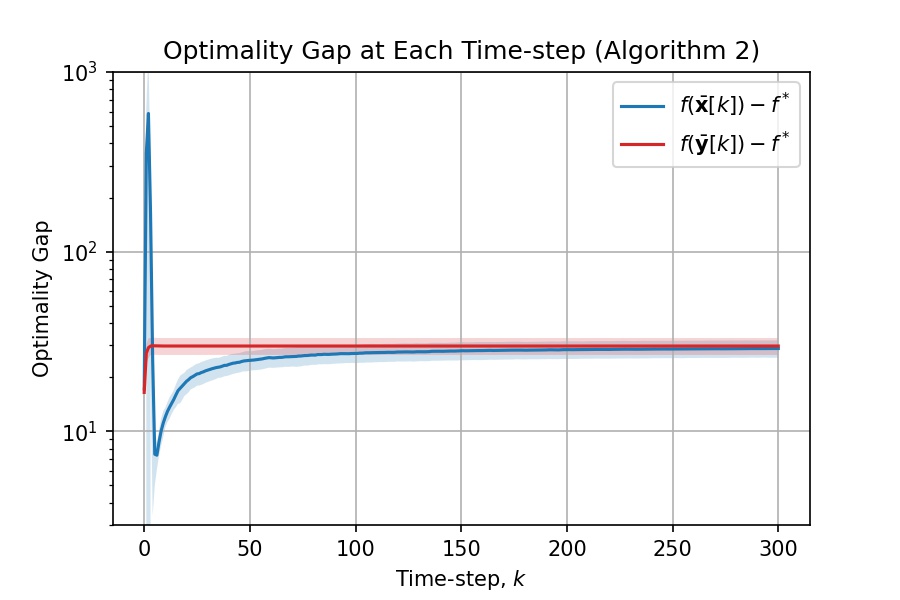} 
\label{fig: optimality_gap}}\;\;
\subfloat[The distance between the average of the regular nodes' states and the global minimizer  $\| \bar{\x} - \x^* \|$ averaged over $10$ runs (blue line), and the distance between the average of the regular nodes' auxiliary points and the global minimizer  $\| \bar{\y} - \x^* \|$ averaged over $10$ runs (red line).]
{\includegraphics[width=.40\textwidth]{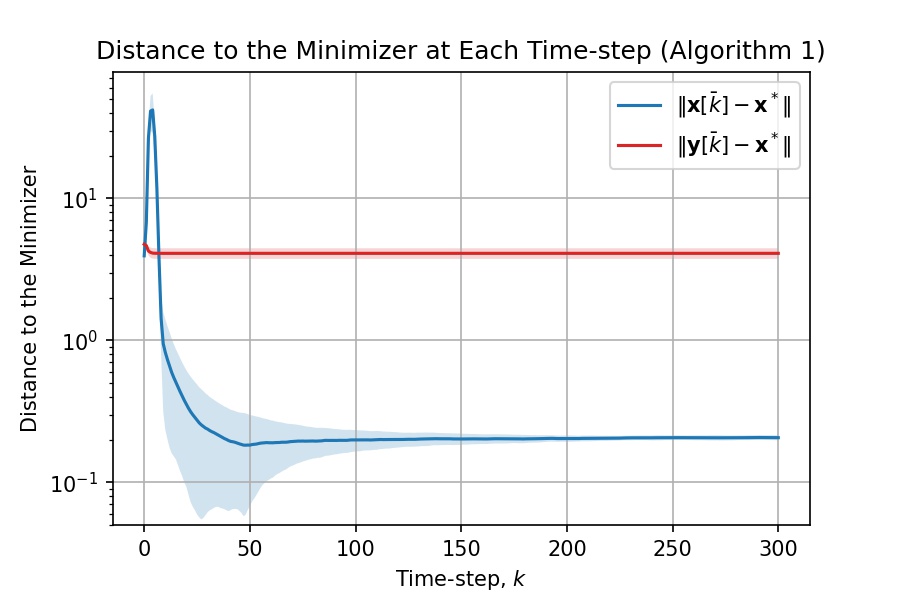}
\includegraphics[width=.40\textwidth]{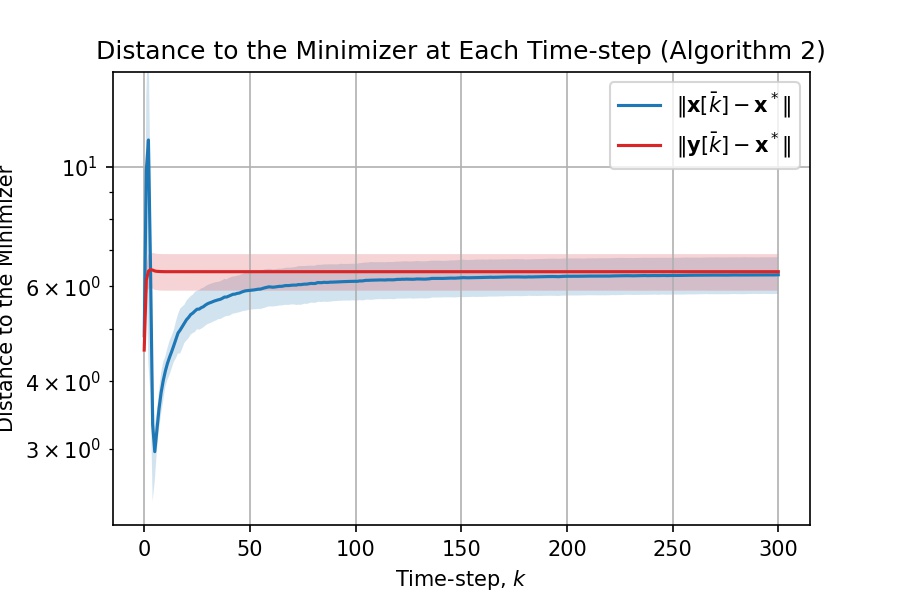}
\label{fig: D2M}}\;\;
\subfloat[The trajectory of the states of a subset of the regular nodes. Different colors of the trajectory represent different regular agents $v_i$ in the network.]
{\includegraphics[width=.40\textwidth]{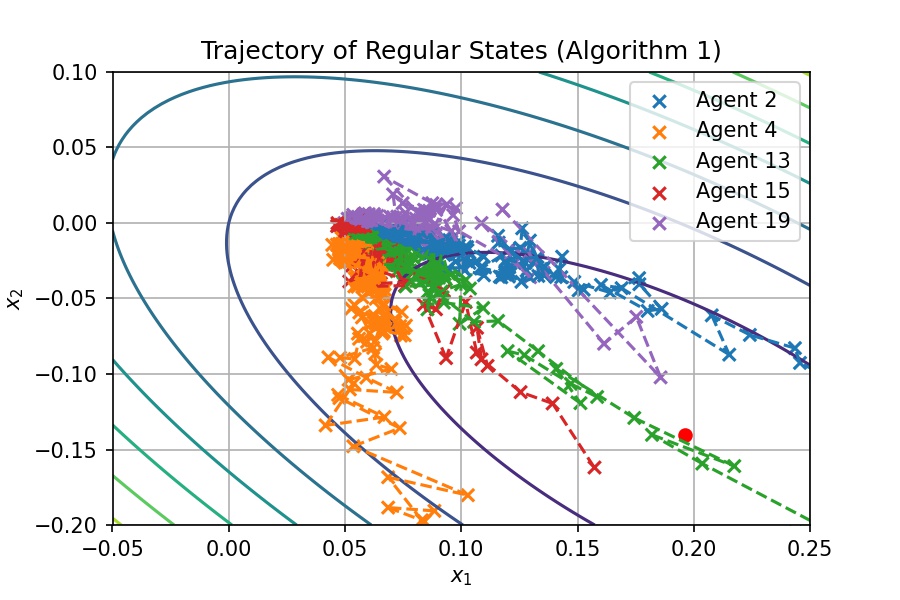}
\includegraphics[width=.40\textwidth]{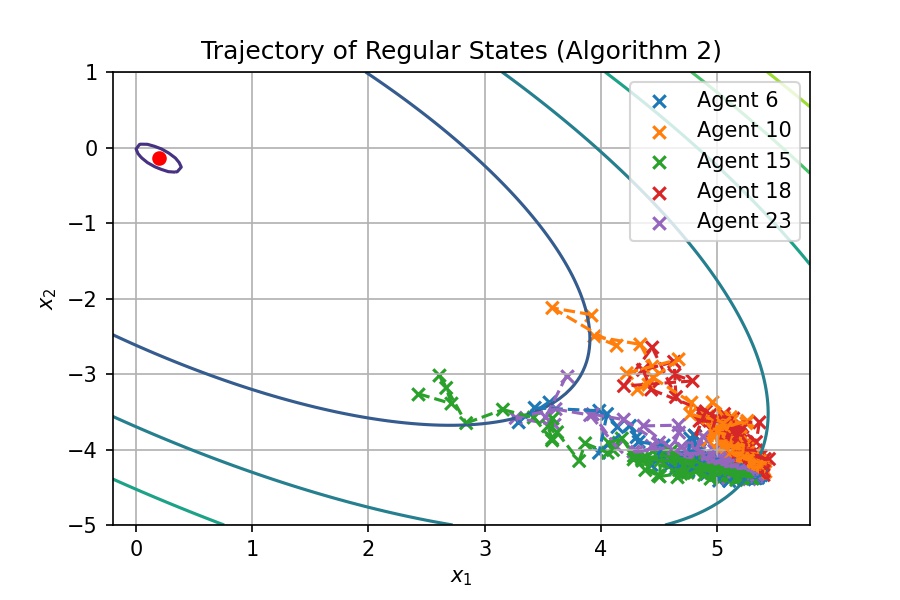}
\label{fig: trajectory}}
\caption{The plots show the results obtained from (left) Algorithm~\ref{alg: DMM_filter} and (right) Algorithm~\ref{alg: D_filter}. In the first four plots, the shaded regions represent +1/-1 standard deviation from the mean. In the last two plots, the contour lines show the level sets of the global objective function (in this case, a quadratic function) and the red dots represent the global minimizer.}
\end{figure}

\subsection{Banknote Authentication using Regularized Logistic Regression}

\noindent \textbf{Dataset Information}\footnote{https://archive.ics.uci.edu/ml/datasets/banknote+authentication}
\begin{itemize}
    \item \textbf{Description:} The data were extracted from images that were taken from genuine and forged banknote-like specimens.
    \item \textbf{Data Points:} The total number of data is 1,372.
    \item \textbf{Features:} The dataset consists of four features: (1) the variance of a wavelet transformed image, (2) the skewness of a wavelet transformed image, (3) the curtosis of a wavelet transformed image, and (4) the entropy of an image.
    \item \textbf{Labels:} There are two classes: `$0$' (genuine) and `$1$' (counterfeit).
\end{itemize}
\textbf{Preliminary Settings} 
\begin{itemize}
    \item \textbf{Main Parameters:} We set the number of nodes to be $n = 50$. Since there are four features, the dimension of the states  is $d = 5$ (one for each feature and the other one for the bias).
    \item \textbf{Adversarial Parameters:} We use the $F$-local model with $F = 2$.
    \item \textbf{Dataset Partitioning:} We randomly partition the dataset into three chunks: 1,000 training data points, 186 validation data points, and 186 test data points. We then distribute the training dataset to the nodes in the network equally. Thus, each node contains $m = 20$ training data points.
\end{itemize}
\textbf{Network and Weights Settings} 

We construct the network and corresponding weight matrix using the same approach as in the synthetic quadratic functions case. \\
\textbf{Adversaries' Strategy}

We choose the set of adversarial nodes $\mathcal{A}$ and adversarial values transmission strategy using the same method as in the synthetic quadratic functions case.

\noindent \textbf{Objective Functions Settings}
\begin{itemize}
    \item \textbf{Notations:} Let $\x_{ij} \in \R^{d-1}$ be the feature vector of the $j$-th data points at node $v_i \in \mathcal{V}$, and $Y_{ij} \in \{0 ,1 \}$ be the corresponding label. We let $\tilde{\x}_{ij} = \begin{bmatrix} \x_{ij}^T & 1 \end{bmatrix}^T$ to account for the bias term.
    \item \textbf{Local Functions:} Since this is a classification task, we choose the logistic regression model with $L_2$-regularization in which its loss function is strongly convex. For $v_i \in \mathcal{V}$, we set the local objective functions $f_i: \R^d \to \R$ to be
    \begin{equation*}
        f_i (\W) = | \mathcal{R} | \sum_{j=1}^{m} \log \big( \exp ( - Y_{ij} \tilde{\x}_{ij}^T \W ) + 1 \big) + \frac{\varsigma}{2} \| \W \|^2,
    \end{equation*}
    where the set of regular nodes $\mathcal{R} = \mathcal{V} \setminus \mathcal{A}$ and $\varsigma \in \R_{> 0}$ is the regularization parameter which will be chosen later.
    \item \textbf{Global Objective Function:} According to our objective \eqref{prob: regular node}, we then have the global objective function $f: \R^d \to \R$ as follows:
    \begin{equation*}
        f( \W ) = \sum_{v_i \in \mathcal{R}} \sum_{j=1}^{m} \log \big( \exp ( - Y_{ij} \tilde{\x}_{ij}^T \W ) + 1 \big) + \frac{\varsigma}{2} \| \W \|^2. 
    \end{equation*}
    \item \textbf{Regularization Parameter Selection:} We consider $\varsigma \in \{ 10^{-4}, 10^{-3}, \ldots, 10^{5} \}$. We train our (centralized) logistic model using the global objective function above for each value of $\varsigma$ and then we select the value of $\varsigma$ that gives the best validation accuracy. 
\end{itemize}
\noindent \textbf{Algorithm Settings}
\begin{itemize}
    \item \textbf{Initialization:} As suggested in \textbf{Line 1} of Algorithm~\ref{alg: DMM_filter}, we numerically find the minimizer of the local functions using the default optimizer of \texttt{sklearn.linear\_model.LogisticRegression}.
    Then, we use the minimizer of each regular node to be the initial state and auxiliary point as in \textbf{Line 2}.
\end{itemize}

The methodology of step-size selection and gradient norm bound is the same as in the synthetic quadratic functions case.

\noindent \textbf{Simulation Settings and Results}
\begin{itemize}
    \item \textbf{Benchmark:} We evaluate the performance (accuracy) of the (centralized) logistic model with the selected regularization parameter, $\varsigma$. 
    \item \textbf{Time Horizon:} We set the time horizon of our simulations of our distributed algorithm to be $K = 200$ (starting from $k = 0$).
    \item \textbf{Simulation:} We run the simulations of Algorithm \ref{alg: DMM_filter} by varying the parameter $\eta_0$ from $-2$ to $4$ with increasing step of $1$. 
    We evaluate the performance of each model (i.e., each $\eta_0$) by considering the accuracy obtained by using the state $\bar{ \W } [K] = \frac{1}{| \mathcal{R} |} \sum_{v_i \in \mathcal{R}} \W_i [K]$ for each $\eta_0$ and the validation data.
    Then, we select the parameter $\eta_0$ which provides the best accuracy. Finally, with the selected value of $\eta_0$, we evaluate the performance (accuracy) of the corresponding model with the test data. 
    \item \textbf{Result:} We repeat the whole process $5$ times. In other words, each run uses different realization of data partitioning (hence, different local functions and global function), network topology, and adversaries set. The result of each run is shown in Table~\ref{table: Logistic}. 
    The first three rows show the adversaries set, regularization parameter and step-size parameter of each run. The next (resp. last) three rows show the training (resp. test) accuracy of the centralized model, distributed model evaluated at $\bar{ \W } [K] = \frac{1}{| \mathcal{R} |} \sum_{v_i \in \mathcal{R}} \W_i [K]$, and the minimum accuracy among the local model of regular nodes evaluated at its own state $\W_i [K]$.
    We can see that despite the presence of adversaries with sophisticated behavior, the performance of our algorithm is just slightly lower than the centralized model's performance for this task.
\end{itemize}

\begin{table} 
\centering
\begin{tabular}{ || c | c c c c c || }
\hline 
 & 1st Run & 2nd Run & 3rd Run & 4th Run & 5th Run \\ [0.5ex]
\hline \hline
$v_i \in \mathcal{A}$ & $9, 23$ & $9, 15$ & $10, 11$ & $5, 29$ & $24, 47$ \\
$\varsigma$ & 1.0 & 1.0 & 10 & 1.0 & 1.0 \\ 
$\eta_0$ & 1 & 1 & 2 & 4 & 1 \\
\hline
Train (C) & 99.40 & 99.20 & 99.00 & 98.90 & 99.10 \\
Train (D) & 98.10 & 97.90 & 97.70 & 98.30 & 98.00 \\
Train (MIN) & 97.80 & 97.60 & 97.50 & 98.30 & 97.70 \\
\hline
Test (C) & 99.46 & 98.39 & 99.46 & 99.46 & 98.92 \\
Test (D) & 97.85 & 95.70 & 98.92 & 97.85 & 98.39 \\
Test (MIN) & 97.85 & 95.70 & 98.92 & 97.85 & 97.85 \\
\hline
\end{tabular}
\vspace*{2mm}
\caption{Training/Test Accuracy of Centralized (C), Distributed (D) Models and Minimum among Regular Agents' Models (MIN) for each Run of Banknote Authentication Task}
\label{table: Logistic}
\end{table}
\section{Conclusion and Future work} 
\label{sec: conclusion}

In this paper, we considered the distributed optimization problem in the presence of Byzantine agents. We developed two resilient distributed optimization algorithms for multi-dimensional functions. The key improvement over our previous work in \cite{kuwaran2020byzantine} is that the algorithms proposed in this paper do not require a fixed auxiliary point to be computed in advance (which will not happen under finite time in general). Our algorithms have low complexity and each regular node only needs local information to execute the steps.
Algorithm~\ref{alg: DMM_filter} (with the min-max state filter), which requires more network redundancy, guarantees that the regular states can asymptotically reach consensus and enter a bounded region that contains the global minimizer, irrespective of the actions of Byzantine agents. On the other hand, Algorithm~\ref{alg: D_filter} (without the min-max filter) has a more relaxed condition on the network topology and can guarantee  asymptotic convergence to the same region, but cannot guarantee consensus. For both algorithms, we explicitly characterized the size of the convergence region, and showed through simulations that Algorithm~\ref{alg: DMM_filter} appears to yield results that are closer to optimal, as compared to Algorithm~\ref{alg: D_filter}.

As noted earlier, the consensus guarantee for Algorithm~\ref{alg: DMM_filter} requires linear scaling of network robustness with the dimension of the local functions, which can be limiting in practice. This seems to be a common challenge for resilient consensus-based algorithms in systems with multi-dimensional states, e.g., \cite{yan2020resilient, gupta2021byzantine, abbas2022resilient}. Finding a relaxed condition on the network topology for high-dimensional resilient distributed optimization problems (with guaranteed consensus) would be a rich area for future research.

\bibliographystyle{unsrtnat}
\bibliography{references}  

\newpage
\appendix
\section{Additional Lemma} 
\label{sec: lemma}

We provide a lemma which is utilized in the proof of Theorem~\ref{thm: true_sol} and Lemma~\ref{lem: grad update}.
\begin{lemma}  \label{lem: max_angle}
For given $\hat{\x} \in \R^d$ and $R \geq 0$, if $\x \notin \B(\hat{\x}, R)$ then
\begin{align*}
    \max_{\y \in \B(\hat{\x}, R)}  
    \angle (\x - \y, \; \x - \hat{\x})  
    = \arcsin \Big( \frac{ R }{ \| \x - \hat{\x} \|} \Big). 
\end{align*}
\end{lemma}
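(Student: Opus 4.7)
The plan is to reduce the statement to elementary planar geometry within the triangle with vertices $\x$, $\y$, $\hat{\x}$. Observe first that $\theta := \angle(\x - \y,\; \x - \hat{\x})$ coincides with the interior angle of this triangle at vertex $\x$, since flipping the sign of both vectors preserves the angle between them. Write $\rho := \|\x - \hat{\x}\|$; by the hypothesis $\x \notin \B(\hat{\x}, R)$ we have $\rho > R$.

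For the upper bound, I would set $a := \|\y - \hat{\x}\| \leq R$ and $b := \|\x - \y\|$ and apply the law of cosines at vertex $\x$, obtaining $\cos \theta = (b^2 + \rho^2 - a^2)/(2 b \rho)$. Using $a \leq R$ together with the elementary inequality $(b - \sqrt{\rho^2 - R^2})^2 \geq 0$, this yields $\cos \theta \geq \sqrt{\rho^2 - R^2}/\rho > 0$. In particular $\theta \in [0, \pi/2)$ for every admissible $\y$, and therefore $\theta \leq \arcsin(R/\rho)$.

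For attainment, I would exhibit a tangent point $\y^* \in \partial \B(\hat{\x}, R)$ satisfying $(\y^* - \hat{\x}) \perp (\x - \y^*)$. Picking any unit vector $\boldsymbol{w}$ orthogonal to $\x - \hat{\x}$, the explicit choice
\begin{equation*}
\y^* = \hat{\x} + \frac{R^2}{\rho^2}(\x - \hat{\x}) + \frac{R\sqrt{\rho^2 - R^2}}{\rho}\,\boldsymbol{w}
\end{equation*}
satisfies $\|\y^* - \hat{\x}\| = R$ and the orthogonality condition by direct verification, making the triangle at $\y^*$ right-angled with hypotenuse $\rho$ and opposite leg $R$; the angle at $\x$ in this right triangle is then exactly $\arcsin(R/\rho)$, so the supremum is attained.

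The one subtlety to watch is the passage from the bound on $\cos \theta$ to the bound on $\theta$ itself: since $\sin$ is not injective on $[0, \pi]$, ruling out $\theta > \pi/2$ is essential, and this is precisely what the positivity of $\cos \theta$ furnished by the law-of-cosines estimate provides. Everything else is routine trigonometry.
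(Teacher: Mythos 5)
Your proof is correct, and it is in fact more complete than the paper's own argument. The paper reduces to a planar section through $\hat{\x}$ and $\x$, and then simply asserts ``from geometry'' that the maximizing $\y^*$ is the point where a ray from $\x$ touches the circle, reading off $\sin\phi = R/\rho$ from the resulting right triangle; the maximality claim itself is never argued. You supply exactly that missing piece, and you do so directly in $\R^d$ without the planar reduction: the law of cosines at the vertex $\x$, combined with $a \le R$ and $(b-\sqrt{\rho^2-R^2})^2 \ge 0$, gives $\cos\theta \ge \sqrt{\rho^2-R^2}/\rho = \cos\big(\arcsin(R/\rho)\big)$, and monotonicity of $\arccos$ on $[0,\pi]$ finishes the upper bound. (Your closing ``subtlety'' is actually moot: since you bound $\cos\theta$ rather than $\sin\theta$, and $\cos$ is injective on all of $[0,\pi]$, you never need to rule out $\theta>\pi/2$ separately --- though having $\cos\theta>0$ certainly does no harm.) Your attainment step is the same tangency idea as the paper's, but with an explicit formula for $\y^*$ verifiable by direct computation, which is preferable. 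One caveat shared with the paper: both constructions require a unit vector orthogonal to $\x-\hat{\x}$, i.e.\ $d\ge 2$; for $d=1$ and $R>0$ the stated equality fails (the left-hand side is $0$), although the inequality $\le$ --- the only direction used downstream in Lemma~\ref{lem: grad update} and Theorem~\ref{thm: true_sol} --- still holds in every dimension, and your law-of-cosines bound establishes precisely that.
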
 

\begin{proof}
Since the angle is measured with respect to the vector $\x - \hat{\x}$, consider any 2-D planes passed through the center $\hat{\x}$ and the point $\x$. Since the planes pass through $\hat{\x}$, the intersections between of the ball $\B( \hat{\x}, R )$ and the planes are great circles of radius $R$. Thus, all of the intersections generated from each plane are identical and we can consider the angle using a great circle instead of the ball. From geometry, the maximum angle $\phi = \angle ( \x - \y^*, \; \x - \hat{\x} )$ only occurs when the ray starting from the point $\x$ touches the circle at point $\y^*$. Therefore, $\angle ( \hat{\x} - \y^*, \x - \y^*) = \frac{\pi}{2}$ and $\| \hat{\x} - \y^* \| = R$. We have
\begin{align*}
    \sin \phi 
    = \frac{ \| \hat{\x} - \y^* \| }{ \| \hat{\x} - \x \| } 
    = \frac{R}{\| \hat{\x} - \x \|},
\end{align*}
and the result follows.
\end{proof}
\section{Proof of Proposition~\ref{prop: aux convergence}} 
\label{sec: proof of aux prop}

\begin{proof}[Proof of Proposition~\ref{prop: aux convergence}]
For any $\mathcal{S} \subseteq \mathcal{V}$, $\zeta \in \mathbb{R}$ and $\Bar{k}, k  \in \mathbb{N}$ with $\Bar{k} \geq k$, define the sets
\begin{align*}
    &\mathcal{J}^{(\ell)}_M (\mathcal{S}, \Bar{k}, k, \zeta) := \{ v_i \in \mathcal{S} :  y^{(\ell)}_i[\Bar{k}] > M^{(\ell)}[k] - \zeta \},  \\
    &\mathcal{J}^{(\ell)}_m (\mathcal{S}, \Bar{k}, k, \zeta) := \{ v_i \in \mathcal{S} :  y^{(\ell)}_i[\Bar{k}] < m^{(\ell)}[k] + \zeta \}.
\end{align*}
Consider a fixed $\ell \in \{ 1, 2, \ldots, d \}$ and any time-step $k \in \mathbb{N}$. Define $\zeta^{(\ell)}_0 = \frac{1}{2} D^{(\ell)}[k]$. Note that the set $\mathcal{J}^{(\ell)}_M (\mathcal{V}, k, k, \zeta^{(\ell)}_0 ) \cap \mathcal{J}^{(\ell)}_m (\mathcal{V}, k, k, \zeta^{(\ell)}_0 ) = \emptyset$.

By the definition of these sets, when $D^{(\ell)}[k] > 0$, the sets $\mathcal{J}^{(\ell)}_M (\mathcal{R}, k, k, \zeta^{(\ell)}_0) \neq \emptyset$ and $\mathcal{J}^{(\ell)}_m (\mathcal{R}, k, k, \zeta^{(\ell)}_0) \neq \emptyset$. Since the graph is $(2F + 1)$-robust, at least one of $\mathcal{J}^{(\ell)}_M (\mathcal{R}, k, k, \zeta^{(\ell)}_0 )$ or $\mathcal{J}^{(\ell)}_m (\mathcal{R}, k, k, \zeta^{(\ell)}_0 )$ is $(2F+1)$-reachable which means that at least one of them contains a vertex that has at least $2F+1$ in-neighbors from outside it. 

If such a node $v_i$ is in $\mathcal{J}^{(\ell)}_M (\mathcal{R}, k, k, \zeta^{(\ell)}_0)$, we claim that in the update, $v_i$ cannot use the values strictly greater than $M^{(\ell)}[k]$ and it uses at least one value from $\mathcal{V} \setminus \mathcal{J}^{(\ell)}_M (\mathcal{V}, k, k, \zeta^{(\ell)}_0)$.
To show the first claim, note that the nodes that possess the value (in $\ell$-component) greater than $M^{(\ell)}[k]$ must be Byzantine agents by the definition of $M^{(\ell)}[k]$. Since the regular node $v_i$ discards up to $F$-highest values and there are at most $F$ Byzantine in-neighbors, the Byzantine agents that hold the value greater than $M^{(\ell)}[k]$ must be discarded. To show the second claim,
let $\mathcal{S}^{(\ell)}_1 [k] = \mathcal{J}^{(\ell)}_M (\mathcal{A}, k, k, \zeta^{(\ell)}_0)$ and $\mathcal{S}^{(\ell)}_2 [k] = \mathcal{V} \setminus \mathcal{J}^{(\ell)}_M (\mathcal{V}, k, k, \zeta^{(\ell)}_0)$ to simplify the notation. We have
\begin{itemize}
    \item $\mathcal{V} \setminus \mathcal{J}^{(\ell)}_M (\mathcal{R}, k, k, \zeta^{(\ell)}_0) = \mathcal{S}^{(\ell)}_1 [k] \cup \mathcal{S}^{(\ell)}_2 [k]$, and
    \item $\mathcal{S}^{(\ell)}_1 [k] \cap \mathcal{S}^{(\ell)}_2 [k] = \emptyset$.
\end{itemize}
From Assumption~\ref{asm: robust}, we have $| \mathcal{S}_1^{(\ell)} [k] \cap \mathcal{N}_i^{\text{in}} | \leq F$. Applying $(2 F + 1)$-reachable property of $v_i$ and two above properties, we obtain that $| \mathcal{S}_2^{(\ell)} [k] \cap \mathcal{N}_i^{\text{in}} | \geq F + 1$.
Let $\bar{\mathcal{V}}_i^{(\ell)} [k] \subseteq \mathcal{N}_i^{\text{in}}$ be the set of nodes that $v_i \in \mathcal{R}$ discards their values in dimension $\ell$ at time-step $k$.
From the fact that $v_i \in \mathcal{J}^{(\ell)}_M (\mathcal{R}, k, k, \zeta^{(\ell)}_0)$, and \textbf{Line~11} of Algorithm \ref{alg: DMM_filter}, we know that $| \mathcal{S}^{(\ell)}_2 [k] \cap \bar{\mathcal{V}}_i^{(\ell)} [k] | \leq F$. 
Combining this with the former statement, we can conclude that $v_i$ uses at least one value from $\mathcal{S}^{(\ell)}_2 [k]$ in its update, i.e., $(\mathcal{S}_2^{(\ell)}[k] \cap \mathcal{N}_i^{\text{in}}) \setminus \bar{\mathcal{V}}_i^{(\ell)} [k] \neq \emptyset$.

Consider the auxiliary point update rule \eqref{eqn: weight average y} (in \textbf{Line~12} of Algorithm~\ref{alg: DMM_filter}). We can rewrite the update as 
\begin{equation*}
    y^{(\ell)}_i[k+1] = 
    \sum_{v_j \in (\mathcal{S}_2^{(\ell)}[k] \cap \mathcal{N}_i^{\text{in}}) \setminus \bar{\mathcal{V}}_i^{(\ell)} [k] } w^{(\ell)}_{y, ij}[k] \; y^{(\ell)}_j[k] \;\;
    + \sum_{v_j \in \big( v_i \cup ( \mathcal{J}_M (\mathcal{V}, k, k, \zeta_0^{(\ell)} ) \cap \mathcal{N}_i^{\text{in}}) \big) \setminus \bar{\mathcal{V}}_i^{(\ell)} [k] } w^{(\ell)}_{y,ij}[k] \; y^{(\ell)}_j[k].
\end{equation*}
Since $y^{(\ell)}_j[k]$ on the first and second terms on the RHS are upper bounded by $M^{(\ell)}[k] - \zeta^{(\ell)}_0$ and $M^{(\ell)}[k]$, respectively, and the non-zero weights $w^{(\ell)}_{y,ij}[k]$ are lower bounded by the constant $\omega$ (Assumption~\ref{asm: weight matrices}), the value of this node at the next time-step is upper bounded as
\begin{equation*}
    y^{(\ell)}_i[k+1] \leq \omega ( M^{(\ell)}[k] - \zeta^{(\ell)}_0 ) + (1-\omega) M^{(\ell)} [k] 
    = M^{(\ell)} [k] - \omega \zeta^{(\ell)}_0 .
\end{equation*}
Note that the above bound is applicable to any node that is in $\mathcal{R} \setminus \mathcal{J}^{(\ell)}_M (\mathcal{V}, k, k, \zeta^{(\ell)}_0)$, since such a node will use its own value in its update. Similarly, if there is a node $v_j \in \mathcal{J}^{(\ell)}_m (\mathcal{R}, k, k, \zeta^{(\ell)}_0 )$ that uses the value of a node outside that set, then $y^{(\ell)}_j [k+1] \geq m^{(\ell)} [k] + \omega \zeta^{(\ell)}_0$. This bound is also applicable to any node that is in $\mathcal{R} \setminus \mathcal{J}^{(\ell)}_m (\mathcal{V}, k, k, \zeta^{(\ell)}_0)$.

Now, define the quantity $\zeta^{(\ell)}_1 = \omega \zeta^{(\ell)}_0$. We have that the set $\mathcal{J}^{(\ell)}_M (\mathcal{V}, k+1, k, \zeta^{(\ell)}_1) \cap \mathcal{J}^{(\ell)}_m (\mathcal{V}, k+1, k, \zeta^{(\ell)}_1) = \emptyset$. Furthermore, by the bounds provided above, we see that at least one of the following must be true:
\begin{align*}
    | \mathcal{J}^{(\ell)}_M (\mathcal{R}, k+1, k, \zeta^{(\ell)}_1) | &< | \mathcal{J}^{(\ell)}_M (\mathcal{R}, k, k, \zeta^{(\ell)}_0) |, \quad \text{or} \\
    | \mathcal{J}^{(\ell)}_m (\mathcal{R}, k+1, k, \zeta^{(\ell)}_1) | &< | \mathcal{J}^{(\ell)}_m (\mathcal{R}, k, k, \zeta^{(\ell)}_0) |.
\end{align*}
If $\mathcal{J}^{(\ell)}_M (\mathcal{R}, k+1, k, \zeta^{(\ell)}_1) \neq \emptyset$ and $\mathcal{J}^{(\ell)}_m (\mathcal{R}, k+1, k, \zeta^{(\ell)}_1) \neq \emptyset$, then again by the fact that the graph is $(2F+1)$-robust, there is at least one node in one of these sets that has at least $2F+1$ in-neighbors outside from the set. Suppose $v_i \in \mathcal{J}^{(\ell)}_M (\mathcal{R}, k+1, k, \zeta^{(\ell)}_1)$ is such a node. Then, $v_i$ cannot use the values strictly greater than $M^{(\ell)}[k+1]$ and it uses at least one value from $\mathcal{V} \setminus \mathcal{J}^{(\ell)}_M (\mathcal{V}, k+1, k, \zeta^{(\ell)}_1)$.
Since at time-step $k$, all regular nodes cannot use values that are strictly greater than $M^{(\ell)}[k]$ in the update, we have that $M^{(\ell)}[k+1] \leq M^{(\ell)}[k]$.
Therefore, the value of node $v_i$ at the next time-step is upper bounded as
\begin{equation*}
    y_i^{(\ell)}[k+2] \leq \omega ( M^{(\ell)}[k] - \zeta^{(\ell)}_1 ) + (1- \omega) M^{(\ell)}[k+1]  
    \leq M^{(\ell)}[k] - \omega^2 \zeta^{(\ell)}_0.
\end{equation*}
Again, this upper bound also holds for any regular node that is in $\mathcal{R} \setminus \mathcal{J}^{(\ell)}_M (\mathcal{V}, k+1, k, \zeta^{(\ell)}_1)$. Similarly, if there is a node $v_j \in \mathcal{J}^{(\ell)}_m (\mathcal{R}, k+1, k, \zeta^{(\ell)}_1)$ that has $2F+1$ in-neighbors from outside that set then $y_j^{(\ell)}[k+2] \geq m^{(\ell)}[k] + \omega^2 \zeta^{(\ell)}_0$. This bound also holds for any regular node that is not in the set $\mathcal{R} \setminus  \mathcal{J}^{(\ell)}_m (\mathcal{V}, k+1, k, \zeta^{(\ell)}_1)$.

We continue in this manner by defining $\zeta^{(\ell)}_s = \omega^s \zeta^{(\ell)}_0$ for $s \in \mathbb{N}$. At each time step $k + s$, if both $\mathcal{J}^{(\ell)}_M (\mathcal{R}, k+s, k, \zeta^{(\ell)}_s) \neq \emptyset$ and $\mathcal{J}^{(\ell)}_m (\mathcal{R}, k+s, k, \zeta^{(\ell)}_s) \neq \emptyset$ then at least one of these sets will shrink in the next time-step. If either of the sets is empty, then it will stay empty at the next time-step, since every regular node outside that set will have its value upper bounded by $M^{(\ell)}[k] - \zeta^{(\ell)}_s$ or lower bounded by $m^{(\ell)}[k] + \zeta^{(\ell)}_s$. After $| \mathcal{R} | - 1$ time-steps, at least one of the sets $\mathcal{J}^{(\ell)}_M (\mathcal{R}, k+ | \mathcal{R} | - 1, k, \zeta^{(\ell)}_{| \mathcal{R} | - 1})$ or $\mathcal{J}^{(\ell)}_m (\mathcal{R}, k+ | \mathcal{R} | - 1, k, \zeta^{(\ell)}_{| \mathcal{R} | - 1})$ must be empty since the sets $\mathcal{J}^{(\ell)}_M (\mathcal{R}, k, k, \zeta^{(\ell)}_0)$ and $\mathcal{J}^{(\ell)}_m (\mathcal{R}, k, k, \zeta^{(\ell)}_0)$ can contain at most $\mathcal{R} - 1$ regular nodes. Suppose the former set is empty; this means that
\begin{equation*}
    M^{(\ell)}[k + |\mathcal{R}| - 1] \leq M^{(\ell)}[k] - \zeta^{(\ell)}_{|\mathcal{R}| - 1}.
\end{equation*}
Since $m^{(\ell)}[k + |\mathcal{R}| - 1] \geq m^{(\ell)}[k]$, we obtain
\begin{equation}
    D^{(\ell)}[k + |\mathcal{R}| - 1] 
    \leq D^{(\ell)}[k] - \zeta^{(\ell)}_{|\mathcal{R}| - 1} 
    = \Big( 1 - \frac{\omega^{|\mathcal{R}| - 1}}{2} \Big) D^{(\ell)}[k] 
    = \gamma D^{(\ell)}[k]. 
    \label{eqn: D ineq}
\end{equation}
The first equality comes from the fact that $\zeta^{(\ell)}_s = \omega^s \zeta^{(\ell)}_0$ and $\zeta^{(\ell)}_0 = \frac{1}{2} D^{(\ell)}[k]$.
The same expression as \eqref{eqn: D ineq} arises if the set $\mathcal{J}^{(\ell)}_m (\mathcal{R}, k+ | \mathcal{R} | - 1, k, \zeta^{(\ell)}_{| \mathcal{R} | - 1}) = \emptyset$.

Using the fact that
\begin{equation}
    \big[ m^{(\ell)}[k+1], \; M^{(\ell)}[k+1] \big] \subseteq \big[ m^{(\ell)}[k], \; M^{(\ell)}[k] \big]  
    \label{eqn: D contraction}
\end{equation}
for all $k \in \mathbb{N}$ and the inequality \eqref{eqn: D ineq}, we can conclude that for all $v_i \in \mathcal{R}$, $\lim_{k \to \infty} y_i^{(\ell)}[k] = y^{(\ell)}[\infty]$ exists and for all $k$, we have
\begin{equation}
    y^{(\ell)}[\infty] \in \big[ m^{(\ell)}[k], \; M^{(\ell)}[k] \big].  
    \label{eqn: y infty}
\end{equation}
This completes the first part of the proof.

For the second part, let consider the quantity $D^{(\ell)}[k]$ as follows. For all $k \in \mathbb{N}$, we can write
\begin{equation}
    D^{(\ell)}[k] 
    \leq D^{(\ell)} \bigg[ \Big\lfloor \frac{k}{| \mathcal{R} | - 1} \Big\rfloor ( | \mathcal{R} | - 1 ) \bigg]
    \leq \gamma^{ \big\lfloor \frac{k}{| \mathcal{R} | - 1} \big\rfloor } D^{(\ell)}[0]
    < \gamma^{  \frac{k}{| \mathcal{R} | - 1} - 1 } D^{(\ell)}[0].  
    \label{eqn: D ineq 2}
\end{equation}
The first inequality is obtained by using $x \geq \lfloor x \rfloor$ and \eqref{eqn: D contraction}. To obtain the second inequality, we apply the inequality \eqref{eqn: D ineq} $\big\lfloor \frac{k}{| \mathcal{R} | - 1} \big\rfloor$ times. The last inequality comes from the fact that $\gamma < 1$ and $\lfloor x \rfloor > x-1$ implies $\gamma^{\lfloor x \rfloor} < \gamma^{x-1}$. From \eqref{eqn: y infty} and \eqref{eqn: D ineq 2}, for all $v_i \in \mathcal{R}$, we have
\begin{equation}
    | y_i^{(\ell)}[k] - y^{(\ell)}[\infty] | 
    \leq D^{(\ell)}[k] 
    < \gamma^{  \frac{k}{| \mathcal{R} | - 1} - 1 } D^{(\ell)}[0].  
    \label{eqn: y^l contract}
\end{equation}
Since the inequality \eqref{eqn: y^l contract} holds for all $\ell \in \{ 1,2,\ldots, d \}$, we have
\begin{equation*}
    \| \y_i[k] - \y[\infty] \|^2 
    = \sum_{\ell = 1}^d | y_i^{(\ell)}[k] - y^{(\ell)}[\infty] |^2 
    < \gamma^{ 2 \big( \frac{k}{| \mathcal{R} | - 1} - 1 \big) } \sum_{\ell = 1}^d \big( D^{(\ell)}[0] \big)^2.
\end{equation*}
Taking square root of both sides yields
\begin{equation*}
    \| \y_i[k] - \y_i[\infty] \| 
    < \gamma^{  \frac{k}{| \mathcal{R} | - 1} - 1  } \| \boldsymbol{D}[0] \| 
    = \frac{1}{\gamma} \| \boldsymbol{D}[0] \| \; e^{- \frac{1}{ | \mathcal{R} | - 1} \log ( \frac{1}{\gamma} ) \; k},
\end{equation*}
which completes the proof.
\end{proof}
\section{Proof of Proposition~\ref{prop: grad angle}}
\label{sec: proof_grad_angle}

\begin{proof}[Proof of Proposition~\ref{prop: grad angle}]
Consider a regular agent $v_i \in \mathcal{R}$.
From Assumption~\ref{asm: convex}, for all $\x$, $\y \in \R^d$, we have $f_i (\y) \geq f_i (\x) + \langle \tilde{\g}_i (\x), \; \y - \x \rangle$, where $\tilde{\g}_i (\x) \in \partial f_i (\x)$. 
Substitute a minimizer $\x_i^*$ of the function $f_i$ into the variable $\y$ to get
\begin{equation}
    - \langle \tilde{\g}_i (\x), \; \x_i^* - \x \rangle \geq f_i (\x) - f_i (\x_i^*).  
    \label{eqn: convex ineq}
\end{equation}
Let $\hat{\theta}_i (\x) = \angle ( \tilde{\g}_i (\x), \; \x - \x_i^* )$.
The inequality \eqref{eqn: convex ineq} becomes
\begin{equation*}
    \| \tilde{\g}_i (\x) \| \; \| \x_i^* - \x \| \cos \hat{\theta}_i (\x) \geq f_i (\x) - f_i (\x_i^*).
\end{equation*}
Fix $\epsilon \in \R_{>0}$, and  suppose that $\x \notin \mathcal{C}_i (\epsilon)$. From Assumption~\ref{asm: gradient_bound}, applying $\| \tilde{\g}_i (\x) \| \leq L$, we have
\begin{equation}
    \cos \hat{\theta}_i (\x) 
    \geq \frac{f_i (\x) - f_i (\x^*)}{L \; \| \x_i^* - \x \|}.  
    \label{eqn: cos theta}
\end{equation}
Let $\Tilde{\x}_i \in \R^d$ be the point on the line connecting $\x_i^*$ and $\x$ such that $f_i ( \Tilde{\x}_i ) = f_i ( \x_i^* ) + \epsilon$. We can rewrite the point $\x$ as
\begin{align*}
    \x = \x_i^* + t ( \Tilde{\x}_i - \x_i^* )  \quad \text{where} \quad t = \frac{ \| \x - \x_i^* \| }{ \| \Tilde{\x}_i - \x_i^* \| } \geq 1.
\end{align*}
Consider the term on the RHS of \eqref{eqn: cos theta}. Since $\Tilde{\x}_i \in \mathcal{C}_i (\epsilon)$, and \eqref{def: delta_i} holds, we have
\begin{align}
    \frac{ f_i (\x) - f_i (\x_i^*) }{ \| \x - \x_i^* \| } 
    &= \frac{f_i ( \x_i^* + t ( \Tilde{\x}_i - \x_i^*) ) - f_i (\x_i^*) }{ t \| \Tilde{\x}_i - \x_i^* \|}  \nonumber \\
    &\geq  \frac{f_i( \x_i^* + t ( \Tilde{\x}_i - \x_i^*) ) - f_i (\x_i^*) }{ t \cdot \max_{\y \in \mathcal{C}_i (\epsilon)} \| \y - \x_i^* \|} \nonumber \\
    &\geq  \frac{f_i ( \x_i^* + t ( \Tilde{\x}_i - \x_i^*) ) - f_i (\x_i^*) }{ t \; \delta_i (\epsilon)}.  \label{eqn: t increase}
\end{align}
Since the quantity $\frac{ f_i( \x_i^* + t ( \Tilde{\x}_i - \x_i^*) ) - f_i (\x_i^*) }{ t }$
is non-decreasing in $t \in [1, \infty)$ \cite[Lemma~2.80]{mordukhovich2013easy}, the inequality \eqref{eqn: t increase} becomes
\begin{equation}
    \frac{ f_i (\x) - f_i (\x_i^*) }{ \| \x - \x_i^* \| } 
    \geq \frac{ f_i (\Tilde{\x}_i) - f_i (\x_i^*)}{\delta_i (\epsilon)}
    = \frac{\epsilon}{ \delta_i (\epsilon)}.  
    \label{eqn: lb slope}
\end{equation}
Therefore, combining \eqref{eqn: cos theta} and \eqref{eqn: lb slope}, we obtain
\begin{equation}
    \cos \hat{\theta}_i (\x) 
    \geq  \frac{\epsilon}{L \delta_i (\epsilon)}.  
    \label{eqn: cos theta hat}
\end{equation}
However, from Assumption~\ref{asm: convex}, we have 
\begin{equation*}
    f_i (\x_i^*) \geq f_i ( \Tilde{\x}_i ) + \langle \Tilde{\g}_i ( \Tilde{\x}_i ), \; \x_i^* - \Tilde{\x}_i \rangle
\end{equation*}
where $\Tilde{\g}_i ( \Tilde{\x}_i ) \in \partial f_i ( \Tilde{\x}_i )$.
Since $\| \Tilde{\g}_i ( \Tilde{\x}_i ) \| \leq L$ by Assumption~\ref{asm: gradient_bound} and $\| \x_i^* - \Tilde{\x}_i \| \leq \delta_i (\epsilon)$, we get 
\begin{equation*}
    \epsilon =  f_i ( \Tilde{\x}_i ) - f_i ( \x_i^* ) 
    \leq - \langle \Tilde{\g}_i ( \Tilde{\x}_i ), \; \x_i^* - \Tilde{\x}_i \rangle 
    \leq L \delta_i (\epsilon).
\end{equation*}
From $\epsilon \in \R_{>0}$ and the above inequality, the inequality \eqref{eqn: cos theta hat} becomes
\begin{equation*}
    \hat{\theta}_i (\x) 
    \leq \arccos \Big( \frac{\epsilon}{L \delta_i (\epsilon)} \Big) 
    := \theta_i (\epsilon) < \frac{\pi}{2},
\end{equation*}
which completes the proof.
\end{proof}
\section{Proof of Theorem~\ref{thm: true_sol}}
\label{sec: proof_true_sol}

\begin{proof}
We will show that the summation of any subgradients of the regular nodes' functions at any point outside the region $\B \big( \y [\infty], \; \inf_{\epsilon > 0} s^*( 0, \epsilon) \big)$ cannot be zero. 

Let $\x_0$ be a point outside $\B \big( \y [\infty], \; \inf_{\epsilon > 0} s^*( 0, \epsilon) \big)$. 
Since $\| \x_0 - \y [\infty] \| > \max_{v_i \in \mathcal{R}} \{ \Tilde{R}_i + \delta_i(\epsilon) \}$ for some $\epsilon > 0$, we have that $\x_0 \notin \mathcal{C}_i(\epsilon)$ for all $v_i \in \mathcal{R}$.
By the definition of $\mathcal{C}_i(\epsilon)$ in \eqref{def: sublevel_set}, we have 
$f_i (\x_0) > f_i (\x_i^*) + \epsilon$ for all $v_i \in \mathcal{R}$. 
Since the functions $f_i$ are convex, we obtain $\g_i (\x_0) \neq \mathbf{0}$ for all $v_i \in \mathcal{R}$ where $\g_i (\x_0) \in \partial f_i (\x_0)$.

Consider the angle between the vectors $\x_0 - \x_i^*$ and $\x_0 - \y [\infty]$. 
If $\Tilde{R}_i = 0$, from \eqref{def: tilde R_i}, we have $\x_i^* = \y [\infty]$ which implies that $\angle (\x_0 - \x_i^*, \; \x_0 - \y [\infty]) = 0$.
Suppose $\Tilde{R}_i > 0$. Using Lemma~\ref{lem: max_angle}, we can bound the angle as follows: 
\begin{equation*}
    \angle (\x_0 - \x_i^*, \; \x_0 - \y [\infty])
    \leq \arcsin \Big( \frac{ \Tilde{R}_i }{ \| \x_0 - \y [\infty] \|} \Big). 
\end{equation*}
Since $\| \x_0 - \y [\infty] \| > \max_{v_i \in \mathcal{R}} \{ \Tilde{R}_i \sec \theta_i (\epsilon) \}$ for some $\epsilon > 0$ and $\arcsin(x)$ is an increasing function in $x \in [-1, 1]$, we have
\begin{equation*}
    \angle (\x_0 - \x_i^*, \; \x_0 - \y [\infty])
    < \arcsin \Big( \frac{ \Tilde{R}_i }{ \Tilde{R}_i \sec \theta_i (\epsilon) } \Big) 
\end{equation*}
and that $\arcsin ( \cos \theta_i (\epsilon) ) = \frac{\pi}{2} - \theta_i (\epsilon).$ 
Using Proposition \ref{prop: grad angle} and the inequality above, we can bound the angle between the vectors $\g_i (\x_0)$ and $\x_0 - \y [\infty]$ as follows:
\begin{equation*}
    \angle ( \g_i (\x_0), \; \x_0 - \y [\infty] )
    \leq \angle ( \g_i (\x_0), \; \x_0 - \x_i^*) + \angle (\x_0 - \x_i^*, \; \x_0 - \y [\infty] )
    < \theta_i (\epsilon) + \Big( \frac{\pi}{2} - \theta_i (\epsilon) \Big) = \frac{\pi}{2}.
\end{equation*}
Note that the first inequality is obtained from \cite[Corollary~12]{castano2016angles}. Let $\u = \frac{\x_0 - \y [\infty]}{ \| \x_0 - \y [\infty] \| }$. Compute the inner product
\begin{equation*}
    \Big\langle \sum_{v_i \in \mathcal{R}} \g_i (\x_0), \; \u \Big\rangle
    = \sum_{v_i \in \mathcal{R}} \| \g_i (\x_0) \| \cos \angle ( \g_i (\x_0), \; \x_0 - \y [\infty] ). 
\end{equation*}
The RHS of the above equation is strictly greater than zero
since $\| \g_i (\x_0) \| > 0$ and $\cos \angle ( \g_i (\x_0), \; \x_0 - \y [\infty] ) > 0$ for all $v_i \in \mathcal{R}$. This implies that $\sum_{v_i \in \mathcal{R}} \g_i (\x_0) \neq \mathbf{0}$. 
Since we can arbitrarily choose $\g_i (\x_0)$ from the set $\partial f_i (\x_0)$, we have $\mathbf{0} \notin \partial f (\x_0)$ where $f(\x) = \frac{1}{| \mathcal{R} |} \sum_{v_i \in \mathcal{R}} f_i (\x)$. 
\end{proof}
\section{Proof of Results in Section~\ref{subsec: grad_update analysis}} 
\label{sec: proof_main_1}

\begin{proof}[Proof of Lemma~\ref{lem: dist grad update 1}]
From Proposition~\ref{prop: aux convergence}, the limit point $\y [\infty] \in \R^d$ exists.
Consider a time-step $k \in \N$ such that $k \geq k_1^*$. Using the gradient step \eqref{eqn: grad dynamic}, we can write
\begin{equation*}
    \| \x_i [k+1] - \y[\infty] \|
    = \| \z_i[k] - \y[\infty] - \eta[k] \; \g_i[k] \|
    \leq \| \z_i[k] - \y[\infty] \| + \eta[k] \; \| \g_i[k] \|  .
\end{equation*}
Using Assumptions \ref{asm: gradient_bound} and \ref{asm: step-size}, and $\| \z_i[k] - \y[\infty] \| \leq \max_{v_j \in \mathcal{R}} \{ \Tilde{R}_j + \delta_j \}$, we obtain
\begin{equation*}
    \| \x_i [k+1] - \y[\infty] \|
    \leq \max_{v_j \in \mathcal{R}} \{ \Tilde{R}_j + \delta_j \}  + \eta[k_1^*] L.
\end{equation*}
By the definition of $k_1^*$ and $s^*$ in \eqref{def: conv radius}, the above inequality becomes
\begin{equation*}
    \| \x_i [k+1] - \y[\infty] \|
    \leq \max_{v_j \in \mathcal{R}} \{ \Tilde{R}_j + \delta_j \}  + \xi 
    \leq s^*,
\end{equation*}
which completes the proof.
\end{proof}

\begin{proof}[Proof of Lemma~\ref{lem: grad update}]
From Proposition~\ref{prop: aux convergence}, the limit point $\y [\infty] \in \R^d$ exists. 
Consider an agent $v_i \in \mathcal{R}$ and a time-step $k \in \N$ for which the condition in the lemma holds. 
Since $\x_i^* \in \B (\y [\infty], \Tilde{R}_i)$ from \eqref{def: tilde R_i}, we have 
\begin{equation}
    \angle (\x_i^* - \z_i[k], \; \y [\infty] - \z_i[k])
    \leq \max_{\u \in \B(\y [\infty], \Tilde{R}_i)}  \angle (\u - \z_i[k], \; \y [\infty] - \z_i[k])
    = \arcsin \frac{\Tilde{R}_i}{\| \z_i[k] - \y [\infty] \|}, \label{def: angle_phi}
\end{equation}
where the last step is from using Lemma~\ref{lem: max_angle}.
Using the gradient step \eqref{eqn: grad dynamic}, we can write $\angle ( \x_i[k+1] - \z_i[k], \; \y [\infty] - \z_i[k]) = \angle ( - \eta[k] \g_i[k], \; \y [\infty] - \z_i[k] )$.
Since for all $v_i \in \mathcal{R}$ and $k \in \N$,
\begin{equation*}
    \angle ( - \eta[k] \g_i[k], \; \y [\infty] - \z_i[k] )
    \leq \angle ( - \eta[k] \g_i[k], \; \x_i^* - \z_i[k]  )
    + \angle ( \x_i^* - \z_i[k], \; \y [\infty] - \z_i[k] )
\end{equation*}
by \cite[Corollary~12]{castano2016angles}, applying Proposition \ref{prop: grad angle} and inequality \eqref{def: angle_phi}, we have
\begin{equation}
    \angle ( \x_i[k+1] - \z_i[k], \; \y [\infty] - \z_i[k])
    \leq \theta_i + \arcsin \frac{\Tilde{R}_i}{\| \z_i[k] - \y [\infty] \|} 
    := \psi_i[k].  
    \label{def: psi angle}
\end{equation}
Note that $\psi_i[k] \in [0, \pi)$ since $\theta_i \in \big[ 0, \frac{\pi}{2} \big)$ and $\arcsin \frac{\Tilde{R}_i}{\| \z_i[k] - \y [\infty] \|} \in \big[ 0, \frac{\pi}{2} \big]$.
Then, consider the triangle which has the vertices at $\x_i[k+1]$, $\z_i[k]$, and $\y[\infty]$.
We can calculate the square of the distance by using the law of cosines:
\begin{multline*}
    \| \x_i[k+1] - \y[\infty] \|^2 
    = \; \| \x_i[k+1] - \z_i[k] \|^2 + \| \y[\infty] - \z_i[k] \|^2 \\
    - 2 \| \x_i[k+1] - \z_i[k] \| \cdot \| \y[\infty] - \z_i[k] \| 
    \cos \angle (\x_i[k+1] - \z_i[k], \; \y[\infty] - \z_i[k]  )  .
\end{multline*}    
Using the gradient step \eqref{eqn: grad dynamic} and the inequality \eqref{def: psi angle}, we get 
\begin{equation}    
    \| \x_i[k+1] - \y[\infty] \|^2 
    \leq   \eta^2[k] \; \| \g_i[k] \|^2 + \| \z_i[k] - \y [\infty] \|^2 
    - 2 \; \eta[k] \; \| \g_i[k] \| \cdot \| \z_i[k] - \y [\infty] \| \cos \psi_i[k].  
    \label{eqn: distant bound}
\end{equation}
In addition, we can simplify the term $\| \z_i[k] - \y [\infty] \| \cos \psi_i[k]$ in the above inequality using the definition of $\psi_i[k]$ in \eqref{def: psi angle}. Regarding this, we can write
\begin{equation*}
    \| \z_i[k] - \y [\infty] \| \cos \psi_i[k] 
    = \sqrt{ \| \z_i[k] - \y [\infty] \|^2 - \Tilde{R}_i^2 } \cdot \cos \theta_i - \Tilde{R}_i \sin \theta_i .
\end{equation*}
Substituting the above equation into \eqref{eqn: distant bound}, we obtain the result.
\end{proof}

\begin{proof}[Proof of Lemma~\ref{lem: dist grad update 2}]
First, note that by the definition of $s^*$ in \eqref{def: conv radius}, we have $a_i^+ > 0$ and $a_i^- < 0$ since $ s^*  \geq \Tilde{R}_i + \xi$,
and $b_i > 0$ since $s^* \geq \Tilde{R}_i \sec \theta_i + \xi$.

For $v_i \in \mathcal{R}$, let $\Gamma_i: [ \Tilde{R}_i, \infty) \times \R_+ \to \R$ be the function
\begin{equation}
    \Gamma_i( p, l ) := p^2 - \Delta_i( p, l ),   \label{def: Gamma function}
\end{equation}
where function $\Delta_i$ is defined in \eqref{def: delta}.
Consider an agent $v_i \in \mathcal{R}$ and a time-step $k \in \N$ such that $k \geq k^*_2$.
We can compute the second derivative of $\Gamma_i( p, l )$ with respect to $p$ as follows:
\begin{equation*}
    \frac{\partial^2 \Gamma_i}{\partial p^2} = 2 + 2 l \Tilde{R}_i^2 (p^2 - \Tilde{R}_i^2)^{- \frac{3}{2}} \cos \theta_i .
\end{equation*}
Note that $\frac{\partial^2 \Gamma_i}{\partial p^2} > 0$ for all $p \in (\Tilde{R}_i, \infty)$. This implies that 
\begin{equation}
    \sup_{ p \in ( \max_{v_j \in \mathcal{R}} \{ \Tilde{R}_j + \delta_j \}, \; s^* ] } \Gamma_i (p, l) 
    \leq \max_{ p \in [ \Tilde{R}_i, \; s^* ] } \Gamma_i (p, l) 
    = \max \big\{ \Gamma_i( \Tilde{R}_i, l ), \; \Gamma_i ( s^*, l ) \big\}.  
    \label{eqn: gamma chain}
\end{equation} 
First, let consider $\Gamma_i( \Tilde{R}_i, \; l )$.
From the definition of $\Gamma_i$ in \eqref{def: Gamma function}, we have that
\begin{align}
    \Gamma_i ( \Tilde{R}_i, l ) \leq ( s^* )^2 \quad \Longleftrightarrow \quad l \in [ a_i^-, \; a_i^+ ],   
    \label{eqn: gamma_leq_1}
\end{align}
where $a_i^+$ and $a_i^-$ are defined in \eqref{def: a_i & b_i}.
Using Assumption \ref{asm: gradient_bound} and \ref{asm: step-size}, and the definition of $k_2^*$, we have that 
\begin{equation*}
    \eta[k] \; \| \g_i[k] \| \leq \eta[k] L \leq \min_{v_j \in \mathcal{R}} \big\{ \min \{ a_j^+, \; b_j \} \big\} \leq  a_i^+ .
\end{equation*}
By the above inequality and statement \eqref{eqn: gamma_leq_1}, we obtain that 
\begin{equation}
     \Gamma_i ( \Tilde{R}_i, \; \eta[k] \; \| \g_i[k] \| ) \leq ( s^* )^2. 
     \label{eqn: gamma ineq1}
\end{equation}
Now, let consider $\Gamma_i ( s^*, \; l )$. 
From the definition of $\Gamma_i$ in \eqref{def: Gamma function}, we have that
\begin{equation}
     \Gamma_i ( s^*, l ) \leq ( s^* )^2  \quad \Longleftrightarrow \quad
     l \in [ 0, \; b_i ],  
     \label{eqn: gamma_leq_2}
\end{equation}
where $b_i$ is defined in \eqref{def: a_i & b_i}.
Using Assumption \ref{asm: gradient_bound} and \ref{asm: step-size}, and the definition of $k_2^*$, we have that 
\begin{equation*}
    \eta[k] \; \| \g_i[k] \| \leq \eta[k] L \leq \min_{v_j \in \mathcal{R}} \big\{ \min \{ a_j^+, \; b_j \} \big\} \leq  b_i .
\end{equation*}
By the above inequality and statement \eqref{eqn: gamma_leq_2}, we obtain that
\begin{equation}
     \Gamma_i ( s^*, \; \eta[k] \; \| \g_i[k] \| ) \leq ( s^* )^2.  
     \label{eqn: gamma ineq2}
\end{equation}
Combine \eqref{eqn: gamma ineq1} and \eqref{eqn: gamma ineq2} to get that 
\begin{equation}
    \max \big\{ \Gamma_i ( \Tilde{R}_i, \; \eta[k] \; \| \g_i[k] \| ), \; \Gamma_i ( s^*(\xi), \; \eta[k] \; \| \g_i[k] \| ) \big\} 
    \leq ( s^* )^2. 
    \label{eqn: max bound s}
\end{equation}
From Lemma \ref{lem: grad update}, we can write 
\begin{equation*}
    \| \x_i[k+1] - \y[\infty] \|^2 
    \leq \Gamma_i( \| \z_i[k] - \y[\infty] \|, \; \eta[k] \; \| \g_i[k] \| ).
\end{equation*}
Applying \eqref{eqn: gamma chain} and \eqref{eqn: max bound s}, respectively to the above inequality yields the result.
\end{proof}

\begin{proof}[Proof of Lemma~\ref{lem: Delta lower bound}]
Consider any time-step $k \geq k^*_3$ and agent $v_i \in \mathcal{I}_z[k]$.
By the definition of the function $\Delta_i$ in \eqref{def: delta}, it is clear that if $p_1 > p_2 \geq \Tilde{R}_i$ then $\Delta_i( p_1, l) > \Delta_i( p_2, l)$. Then, we get 
\begin{equation}
    \Delta_i( \| \z_i[k] - \y[\infty] \|, \eta[k] \; \| \g_i[k] \| ) > \Delta_i (  s^*, \eta[k] \; \| \g_i[k] \| ).  
    \label{eqn: delta_ineq}
\end{equation}
Furthermore, the function $\Delta_i$ satisfies
\begin{equation}
    \Delta_i ( p, l) \geq \Big( \sqrt{ p^2 - \Tilde{R}_i^2 } \cos \theta_i - \Tilde{R}_i \sin \theta_i \Big) l 
    \quad \Longleftrightarrow \quad 
    l \in \Big[ 0, \; \sqrt{ p^2 - \Tilde{R}_i^2 } \cos \theta_i - \Tilde{R}_i \sin \theta_i  \Big].  
    \label{eqn: l_range}
\end{equation}

We restate inequality \eqref{eqn: lb slope} obtained in the proof of Proposition~\ref{prop: grad angle} here:
\begin{align}
    \frac{f_i(\x) - f_i(\x_i^*)}{\| \x - \x_i^* \|}
    \geq \frac{\epsilon}{\delta_i(\epsilon)}.     
    \label{eqn: lb_grad}
\end{align}
Recall the definition of $\mathcal{C}_i(\epsilon)$ in \eqref{def: sublevel_set}.
For $\x \notin \mathcal{C}_i(\epsilon)$, from the definition of convex functions, we have
$- \langle g_i(\x), \; \x_i^* - \x \rangle \geq f_i(\x) - f_i(\x_i^*)$.  
Using the inequality \eqref{eqn: lb_grad}, we obtain
\begin{align*}
    \| \g_i (\x) \| \geq
    \frac{f_i(\x) - f_i(\x_i^*)}{\| \x - \x_i^* \|}
    \geq \frac{\epsilon}{\delta_i(\epsilon)} = \underline{L}_i.   
\end{align*}
Using the above inequality, Assumption \ref{asm: gradient_bound}, and the definition of $k_3^*$, we have that
\begin{align}
    \eta[k] \underline{L}_i \leq \eta[k] \; \| \g_i[k] \| \leq \eta[k] L \leq \frac{b_i}{2}.  
    \label{eqn: lb_ub}
\end{align}
Since $\eta[k] \; \| \g_i[k] \| \in [ 0, \; \frac{b_i}{2}  ]$, we can apply \eqref{eqn: l_range} to get
\begin{align}
    \Delta_i \big(  s^*, \; \eta[k] \; \| \g_i[k] \| \big) \geq \frac{b_i}{2} \eta[k] \; \| \g_i[k] \|.  
    \label{eqn: delta_geq}
\end{align}
Combine \eqref{eqn: delta_ineq}, \eqref{eqn: delta_geq}, and the first inequality of \eqref{eqn: lb_ub} to obtain the result.
\end{proof}

\section{Proof of Results in Section~\ref{subsec: bounds}} 
\label{sec: proof_main_2}

\begin{proof}[Proof of Proposition~\ref{prop: aux inexact}]
From Proposition~\ref{prop: aux convergence}, the limit point $\y [\infty] \in \R^d$ exists.
For all $v_i$, $v_j \in \mathcal{R}$, we have
\begin{equation*}
    \| \x_j[k] - \y_i[k] \| 
    \leq \| \x_j[k] - \y[\infty] \| + \| \y_i[k] - \y[\infty] \|. 
\end{equation*}
Apply Lemma~\ref{lem: dist to curr aux} to obtain that for all $v_i \in \mathcal{R}$, we have
\begin{equation*}
    \| \z_i[k] - \y_i[k] \| 
    \leq \max_{v_j \in \mathcal{R}} \| \x_j[k] - \y[\infty] \| + \| \y_i[k] - \y[\infty] \|.
\end{equation*}
Substituting the above inequality into 
\begin{equation*}
    \| \z_i[k] - \y[\infty] \| 
    \leq \| \z_i[k] - \y_i[k] \| + \| \y_i[k] - \y[\infty] \|,
\end{equation*}
we obtain the result.
\end{proof}

\begin{proof}[Proof of Lemma~\ref{lem: two bounds}]
Suppose $\max_{v_i \in \mathcal{R}} \| \x_i[k] - \y[\infty] \| \leq \phi[k]$ for a time-step $k \geq k_0$.
From Proposition~\ref{prop: aux convergence} and \ref{prop: aux inexact}, we have that for all $v_i \in \mathcal{R}$,
\begin{equation}
    \| \z_i[k] - \y[\infty] \| \leq \phi[k] + 2 \beta e ^{- \alpha k},  
    \label{eqn: bound z_i[k]}
\end{equation}
where $\alpha$ and $\beta$ are defined in Proposition~\ref{prop: aux convergence}.

Recall the definition of $\mathcal{I}_z[k]$ from \eqref{def: I outside s}. For all $v_i \in \mathcal{I}_z[k]$, from Lemma~\ref{lem: grad update} and \ref{lem: Delta lower bound}, we have 
\begin{equation*}
    \| \x_i[k+1] - \y[\infty] \|^2 \leq \| \z_i[k] - \y[\infty] \|^2 - \frac{1}{2} b_i \underline{L}_i \eta[k].  
\end{equation*}
Applying \eqref{eqn: bound z_i[k]} to the above inequality, we obtain that for all $v_i \in \mathcal{I}_z[k]$, 
\begin{equation*}
    \| \x_i[k+1] - \y[\infty] \|^2 
    \leq \big( \phi[k] + 2 \beta e ^{- \alpha k} \big)^2 - \frac{1}{2} b_i \underline{L}_i \eta[k] 
    \leq \big( \phi[k] + 2 \beta e ^{- \alpha k} \big)^2 - \frac{1}{2} \eta[k] \min_{v_j \in \mathcal{R}} b_j \underline{L}_j .
\end{equation*}
On the other hand, for all $v_i \in \mathcal{R} \setminus \mathcal{I}_z[k]$, we have $\| \z_i[k] - \y[\infty] \| \leq s^*$ by the definition of $\mathcal{I}_z[k]$. From Lemma \ref{lem: dist grad update 1} and \ref{lem: dist grad update 2}, we get $\| \x_i[k+1] - \y[\infty] \| \leq s^*$ for all $v_i \in \mathcal{R} \setminus \mathcal{I}_z[k]$. Therefore, we conclude that for all $v_i \in \mathcal{R}$, 
\begin{equation*}
    \| \x_i[k+1] - \y[\infty] \|^2 \leq \max \Big\{ ( s^* )^2, 
    \big( \phi[k] + 2 \beta e ^{- \alpha k} \big)^2 - \frac{1}{2} \eta[k] \min_{v_j \in \mathcal{R}} b_j \underline{L}_j \Big\}.  
\end{equation*}
Using the update rule \eqref{eqn: phi k+1}, the above inequality implies that
$\max_{v_i \in \mathcal{R}} \| \x_i[k+1] - \y[\infty] \| \leq \phi[k+1]$.

Next, consider a time-step $k \in \N$. From the gradient update step \eqref{eqn: grad dynamic}, for all $v_i \in \mathcal{R}$, we have
\begin{equation*}
    \| \x_i[k+1] - \y[\infty] \| = \| \z_i[k] - \y[\infty] - \eta[k] \g_i[k] \| 
    \leq  \| \z_i[k] - \y[\infty] \| + \eta[k] \; \| \g_i[k] \|.
\end{equation*}
Since $\| \g_i[k] \| \leq L$ from Assumption~\ref{asm: gradient_bound}, we can rewrite the above inequality as
\begin{equation}
    \| \x_i[k+1] - \y[\infty] \| \leq  \| \z_i[k] - \y[\infty] \| + \eta[k] \; L. 
    \label{eqn: norm grad update}
\end{equation}
On the other hand, from Proposition~\ref{prop: aux convergence} and \ref{prop: aux inexact}, for all $v_i \in \mathcal{R}$, we have
\begin{equation}
    \| \z_i[k] - \y[\infty] \| 
    \leq \max_{v_j \in \mathcal{R}} \| \x_j[k] - \y[\infty] \|  + 2 \beta e ^{- \alpha k}. 
    \label{eqn: inexact update}
\end{equation}
Combine the inequalities \eqref{eqn: norm grad update} and \eqref{eqn: inexact update} together and apply the result recursively to obtain
\begin{equation*}
    \max_{v_i \in \mathcal{R}} \| \x_i[k_0] - \y[\infty] \| 
    \leq \max_{v_i \in \mathcal{R}} \| \x_i[0] - \y[\infty] \|  
    + 2 \beta \sum_{k=0}^{k_0-1} e^{-\alpha k} + L \sum_{k=0}^{k_0-1} \eta [k].
\end{equation*}
Since the RHS of the above inequality is $\phi [k_0]$, this completes the first part of the proof.

Consider a time-step $k \in \N$ such that $k \geq k_0$. From the update equation \eqref{eqn: phi k+1}, using the fact that $\frac{1}{2} \eta[k] \min_{v_i \in \mathcal{R}} b_i \underline{L}_i > 0$ for all $k \in \N$, we can write 
\begin{equation*}
    \phi [k+1] <  \max \big\{ s^*, \; \phi[k] \big\}  + 2 \beta e^{- \alpha k}.
\end{equation*}
Applying the above inequality recursively, we can write that for all $k \geq k_0$,
\begin{equation*}
    \phi[k] < \max \big\{ s^*, \; \phi[k_0] \big\}  + 2 \beta \sum_{k' = k_0}^{k-1} e^{- \alpha k'}.
\end{equation*}
Substituting equation \eqref{eqn: phi[k0]} into the above inequality and using the fact that $\sum_{k'=0}^{k-1} e^{- \alpha k'} < \sum_{k'=0}^{\infty} e^{- \alpha k'} = \frac{1}{1 - e^{-\alpha}}$ for all $k \in \N$, we obtain the uniform bound as follows:
\begin{equation*}
    \phi[k] < \max \Big\{ s^*, \; \max_{v_i \in \mathcal{R}} \| \x_i[0] - \y[\infty] \|  
    + L \sum_{k'=0}^{k_0-1} \eta[k'] \Big\} 
    + \frac{2 \beta}{1 - e^{-\alpha}}.
    \label{eqn: phi bar}
\end{equation*}
Setting the RHS of the above inequality to $\bar{\phi}$, we obtain the result.
\end{proof}

\section{Proof of Results in Section~\ref{subsec: conv analysis}} 
\label{sec: proof_main_3}

Lemma~\ref{lem: update imp} is used to establish the proof of Proposition~\ref{prop: finite time}.

\begin{proof}[Proof of Lemma~\ref{lem: update imp}]
Suppose that there exists a sequence $\{ u[k] \}_{k=0}^{\infty} \subset \R_{\geq 0}$ that satisfies the given update rule.
Since $\hat{\eta}[k] \in \R_{\geq 0}$ for all $k \in \N$, we have $u^2[k+1] \leq (u[k] + \gamma_1 \lambda^k)^2$.
Since $u[k] \geq 0$ for all $k \in \N$, it follows that
\begin{equation*}
    0 \leq u[k+1] \leq | u[k] + \gamma_1 \lambda^k | \leq u[k] + \gamma_1 | \lambda |^k.
\end{equation*}
Apply the above inequality recursively to obtain that for all $k \in \N$,
\begin{equation*}
    u[k] \leq u[0] + \gamma_1 \sum_{\ell = 0}^{k} | \lambda |^\ell
    \leq u[0] + \frac{\gamma_1}{1- |\lambda|} := \bar{u}.
\end{equation*}
From the update rule, we can write
\begin{equation*}
    u^2[k+1] = u^2[k] + 2 \gamma_1 \lambda^k u[k] 
    + \gamma_1^2 \lambda^{2k} - \gamma_2 \hat{\eta}[k] 
    \leq u^2[k] + 2 \gamma_1 \lambda^k \bar{u} + \gamma_1^2 \lambda^{2k} - \gamma_2 \hat{\eta}[k].
\end{equation*}
Applying the above inequality recursively, we obtain
\begin{equation*}
    u^2[k] \leq u^2[0] + 2 \gamma_1 \bar{u} \sum_{\ell=0}^{k} \lambda^{\ell} + \gamma_1^2 \sum_{\ell=0}^{k} \lambda^{2 \ell} - \gamma_2 \sum_{\ell=0}^{k} \hat{\eta}[\ell].
\end{equation*}
However, the first three terms on the RHS are bounded in $k$ while the last term is unbounded. This implies that there exists a time-step $\tilde{k} \in \N$ such that $u^2[\tilde{k}] < 0$ which contradicts the fact that $u[\tilde{k}] \in \R_{\geq 0}$.
\end{proof}

\begin{proof}[Proof of Proposition~\ref{prop: finite time}]
Let $\nu = \frac{1}{2} \min_{v_i \in \mathcal{R}} b_i \underline{L}_i$ to simplify the notations. 
Let $k^*_4 \in \N$ be a time-step such that 
$\eta[k^*_4] \geq \frac{4 \beta}{\nu} \big( \bar{\phi} e^{- \alpha k^*_4} + \beta e^{- 2 \alpha k^*_4} \big)$. Note that $k^*_4 \in \N$ exists since $\eta[k]$ decreases slower than the exponential decay due to its form given in Assumption~\ref{asm: step-size}.

First, we will show that if the time-step $k \in \N$ satisfies $k \geq \max \{ k_0, k_4^* \}$ and $\big( \phi[k] + 2 \beta e ^{- \alpha k} \big)^2 - \nu \eta[k] > ( s^* )^2$, then
\begin{equation}
    \phi[k+1] < \phi[k]. 
    \label{eqn: decreasing phi}
\end{equation}

Consider a time-step $k \geq \max \{ k_0, k_4^* \}$. Since $\big( \phi[k] + 2 \beta e ^{- \alpha k} \big)^2 - \nu \eta[k]  > ( s^* )^2$, the update equation \eqref{eqn: phi k+1} reduces to
\begin{equation}
    \phi^2[k+1] = \big( \phi[k] + 2 \beta e ^{- \alpha k} \big)^2 - \nu \eta[k].
    \label{eqn: redeuced phi update}
\end{equation}
Using the definition of $k_4^*$, from $k \geq k_4^*$, we can write $\eta[k] \geq \frac{4 \beta}{\nu} \big( \bar{\phi} e^{- \alpha k} + \beta e^{- 2 \alpha k} \big)$. Since $\phi [k] < \bar{\phi}$, we have that
\begin{equation*}
    \eta[k] > \frac{4 \beta}{\nu} \big( \phi [k] e^{- \alpha k} + \beta e^{- 2 \alpha k} \big).
\end{equation*}
By multiplying $\nu$ and adding $\phi^2 [k]$ to both sides, and then rearranging, we can write 
$\phi^2 [k] > \big( \phi[k] + 2 \beta e^{- \alpha k} \big)^2 - \nu \eta[k]$,
which is equivalent to $\phi[k+1] < \phi[k]$ by \eqref{eqn: redeuced phi update}. This completes our claim.

Next, we will show that there exists a time-step $\tilde{K} \in \N$ such that 
$\phi \big[ \max \{ k_0, k_4^* \} + \tilde{K} \big] = s^*$.

Suppose that $\big( \phi[k] + 2 \beta e ^{- \alpha k} \big)^2 - \nu \eta[k] > ( s^* )^2$ for all $k \geq \max \{ k_0, k_4^* \}$. Then, the update equation \eqref{eqn: phi k+1} reduces to
\begin{equation*}
    \phi^2[k+1] = \big( \phi[k] + 2 \beta e ^{- \alpha k} \big)^2 - \nu \eta[k].
\end{equation*}
However, since $\phi[k]$ is non-negative for all $k \in \N$ by its definition, from Lemma~\ref{lem: update imp}, there is no sequence $\{ \phi[k] \}_{k=k_0}^{\infty}$ that can satisfy the above update rule. 
Hence, there exists a constant $\tilde{K} \in \N$ such that
\begin{equation*}
    \big( \phi[ k'] + 2 \beta e ^{- \alpha k'} \big)^2 - \nu \eta[ k'] \leq (s^*)^2,
\end{equation*}
where $k' = \max \{ k_0, k_4^* \} + \tilde{K} - 1$,
which yields $\phi \big[ \max \{ k_0, k_4^* \} + \tilde{K} \big] = s^*$ by the equation \eqref{eqn: phi k+1}. This completes the second claim.

Consider any time-step $k \in \N$ such that $k \geq \max \{ k_0, k_4^* \} + \tilde{K}$ and $\phi [k] = s^*$. Such a time-step exists due to the argument above. Then, suppose $\big( \phi[k] + 2 \beta e ^{- \alpha k} \big)^2 - \nu \eta[k]  > ( s^* )^2$. From \eqref{eqn: decreasing phi}, we have that $\phi [k+1] < s^*$ which is not possible due to the fact that $\phi [k'] \geq s^*$ for all $k' \geq k_0$ from the update equation \eqref{eqn: phi k+1}. Hence, we conclude that $\big( \phi[k] + 2 \beta e ^{- \alpha k} \big)^2 - \nu \eta[k] \leq ( s^* )^2$, and $\phi [k+1] = s^*$ by \eqref{eqn: phi k+1}. 
This means that $\phi [k] = s^*$ for all $k \geq \max \{ k_0, k_4^* \} + \tilde{K}$. Then, by the definition of $\phi[k]$, we can rewrite the equation as
$\max_{v_i \in \mathcal{R}} \| \x_i[ k ] - \y[\infty] \| \leq s^*$
for all $k \geq \max \{ k_0, k_4^* \} + \tilde{K} := K$ which completes the proof.
\end{proof}

Finally, we utilize the finite-time convergence result from Proposition~\ref{prop: finite time} to give a proof for Theorem \ref{thm: convergence} presented below.

\begin{proof}[Proof of Theorem \ref{thm: convergence}]
From Proposition~\ref{prop: finite time}, for a fixed $\xi \in \R_{>0}$ and $\epsilon \in \R_{>0}$, we have that for all $k \geq K$,
\begin{equation*}
    \max_{v_i \in \mathcal{R}} \| \x_i[k] - \y[\infty] \| \leq s^*(\xi, \epsilon).
\end{equation*}
Note that $K$ is a function of $\xi$ and $\epsilon$.
However, the above inequality implies that
$\limsup_k \max_{v_i \in \mathcal{R}} \| \x_i[k] - \y[\infty] \| \leq s^*(\xi, \epsilon)$.
Since the inequality is valid for all $\xi > 0$ and $\epsilon > 0$, and 
$\inf_{\xi > 0, \; \epsilon > 0} s^*(\xi, \epsilon) = \inf_{\epsilon > 0} s^*(0, \epsilon)$
by the definition of $s^*(\xi, \epsilon)$ in \eqref{def: conv radius}, we have
\begin{equation*}
    \limsup_k \max_{v_i \in \mathcal{R}} \| \x_i[k] - \y[\infty] \| \leq \inf_{\epsilon > 0} s^*(0, \epsilon),
\end{equation*}
which completes the proof.
\end{proof}

\end{document}